\algnewcommand{\LineComment}[1]{\Statex\hspace{\algorithmicindent}\(\triangleright\) #1}
\algnewcommand\algorithmicforeach{\textbf{for each}}
\algorithmic\endcsname{\itemsep\z@}{\itemsep=0.25ex}{}{}
\newcounter{usesmallsep}
\the\value{usesmallsep}=1
    \newlength{\myitemsep}
    \newlength{\mytopsep}
    \setlist[itemize]{leftmargin=\parindent,parsep=\parskip,
      listparindent=\parindent,itemsep=\myitemsep,topsep=\myitemsep}
    \setlist[enumerate]{leftmargin=\parindent,parsep=\parskip,
      listparindent=\parindent,itemsep=\myitemsep,,topsep=\myitemsep}
    \setlist[description]{font=\bfseries,leftmargin=\parindent,parsep=\parskip,
      listparindent=\parindent,itemsep=\myitemsep,topsep=\myitemsep}
    \newlength{\mypartitlesep}
    \titlespacing{\paragraph}{0pt}{\mypartitlesep}{\mypartitlesep}
    \newlength{\mythmsep}
    \newtheoremstyle{mythmstyle}
      {\mythmsep} 
      {\mythmsep} 
      {\itshape} 
      {} 
      {\bfseries} 
      {.} 
      {.5em} 
      {} 
    \newtheoremstyle{mydefstyle}
      {\mythmsep} 
      {\mythmsep} 
      {} 
      {} 
      {\bfseries} 
      {.} 
      {.5em} 
      {} 
    \theoremstyle{mythmstyle}
        \newtheorem{theorem}{Theorem}
        \newtheorem{proposition}[theorem]{Proposition}
        \newtheorem{lemma}[theorem]{Lemma}
        \newtheorem{corollary}[theorem]{Corollary}
        \newtheorem{fact}[theorem]{Fact}
        \newtheorem*{fact*}{Fact}
    \theoremstyle{mydefstyle}
        \newtheorem{definition}{Definition}
        \newtheorem{problem}{Problem}
        \newtheorem{assumption}{Assumption}
        \newtheorem{remark}{Remark}
        \newtheorem{algr}[algorithm]{Algorithm}
    \newenvironment{proof}
        {\vspace{-0.9em}\begin{proof}}
        {\end{proof}\vspace{-0.4em}}
    \theoremstyle{plain}
        \newtheorem{theorem}{Theorem}
        \newtheorem{proposition}[theorem]{Proposition}
        \newtheorem{algr}{Algorithm}
        \newtheorem*{algr*}{Algorithm}
        \newtheorem{definition}[theorem]{Definition}
    \theoremstyle{definition}
        \newtheorem{remark}[theorem]{Remark}
    \setlist[itemize]{leftmargin=\parindent}
    \setlist[enumerate]{leftmargin=\parindent}
    \setlist[description]{font=\bfseries,leftmargin=\parindent}
\newcommand{\Hm}{\mathsf{H}}
\newcommand{\Hmr}{\mathsf{\tilde{H}}}
\newcommand{\Zyc}{\mathsf{Z}}
\newcommand{\Hom}{\mathsf{Hom}}
\newcommand{\Real}{\mathbb{R}}
\newcommand{\fsimp}[2]{\sigma_{#2}}
\newcommand{\morph}[3]{\varphi^{#3}_{#2}}
\newcommand{\rank}{\mathsf{rank}\,}
\renewcommand{\ker}{\mathsf{ker}}
\newcommand{\coker}{\mathsf{coker}}
\newcommand{\img}{\mathsf{img}}
\newcommand{\cof}{\mathsf{cof}}
\newcommand{\Pers}{\mathsf{Pers}}
\newcommand{\pinds}{\mathsf{P}}
\newcommand{\ninds}{\mathsf{N}}
\newcommand{\cl}[1]{\mathrm{cls}(#1)}
\renewcommand{\bar}[1]{\overline{#1}}
\newcommand{\inv}{^{-1}}
\newcommand{\lbarrowspace}{\;}
\let\leftrightarrowsp\lrarrowsp
\let\leftarrowsp\larrowsp
\let\rightarrowsp\rarrowsp
\newcommand{\incto}{\hookrightarrow}
\newcommand{\bakincto}{\hookleftarrow}
\newcommand{\given}{\,|\,}
\newcommand{\Set}[1]{\{#1\}}
\newcommand{\bigSet}[1]{\big\{#1\big\}}
\newcommand{\bcgraph}{\mathbb{G}_\mathrm{B}}
\newcommand{\bcforest}{T}
\newcommand{\vsp}{V}
\newcommand{\pathwl}[4]{(#1\rightsquigarrow #2)_{[#3,#4]}}
\let\emptyset\varnothing
\let\intersect\cap
\let\intsec\intersect
\let\union\cup
\let\bigunion\bigcup
\newcommand{\Acal}{\mathcal{A}}
\newcommand{\Bcal}{\mathcal{B}}
\newcommand{\Ecal}{\mathcal{E}}
\newcommand{\Fcal}{\mathcal{F}}
\newcommand{\Ical}{\mathcal{I}}
\newcommand{\Mcal}{\mathcal{M}}
\newcommand{\Scal}{\mathcal{S}}
\newcommand{\Ucal}{\mathcal{U}}
\newcommand{\Xcal}{\mathcal{X}}
\newcommand{\Dbb}{\mathbb{D}}
\newcommand{\Fbb}{\mathbb{F}}
\newcommand{\Zbb}{\mathbb{Z}}
\newcommand{\aG}{\alpha}
\newcommand{\bG}{\beta}
\newcommand{\DG}{\Delta}
\newcommand{\iG}{\iota}
\newcommand{\lG}{\lambda}
\newcommand{\LG}{\Lambda}
\newcommand{\oG}{\omega}
\newcommand{\OG}{\Omega}
\newcommand{\sG}{\sigma}
\newcommand{\tG}{\tau}
\newcommand{\thG}{\theta}
\newcommand{\UG}{\Upsilon}
\newcommand{\zG}{\zeta}
\newcommand{\Dim}{p}
\newcommand{\diml}{q}
\newcommand{\birth}{b}
\newcommand{\death}{d}
\newcommand{\filtcnt}{m}
\newcommand{\simpset}{\DG}
\newcommand{\dfilt}{\Ecal}
\newcommand{\comp}[1]{D^#1}
\newcommand{\compcmplx}[1]{C^#1}
\newcommand{\compcnt}{r}
\newcommand{\rfilt}[1]{\Xcal^#1}
\newcommand{\Gres}{\Gamma}
\newcommand{\upos}{\Ucal}
\newcommand{\find}{\mathtt{find}}
\begin{document}

\title{Computing Zigzag Persistence on Graphs in Near-Linear Time\thanks{This research is supported by NSF grants CCF 1839252 and 2049010.}}

\author{Tamal K. Dey\hspace{8em}Tao Hou\vspace{1em}\\
{\footnotesize 
Department of Computer Science, Purdue University. 
\texttt{tamaldey,hou145@purdue.edu}
}
}

\date{}

\maketitle
\thispagestyle{empty}

\begin{abstract}
Graphs model real-world circumstances
in many applications where
they may constantly change to capture the dynamic behavior of the phenomena.
Topological persistence
which provides a set of birth and death pairs for the topological features
is one instrument for analyzing such changing graph data.
However,
standard persistent homology defined over a growing space
cannot always capture such a dynamic process unless shrinking with
deletions is also allowed.
Hence, {\it zigzag persistence} which incorporates both insertions and
deletions of simplices is more
appropriate in such a setting. 
Unlike standard persistence which admits nearly linear-time
algorithms for graphs, 
such results for the zigzag version 
improving the general $O(m^\omega)$ time complexity
are not known, where
$\omega< 2.37286$ is the matrix multiplication exponent.
In this paper, we propose algorithms for zigzag persistence
on graphs which run in near-linear time.
Specifically, 
given a filtration with $m$ additions and deletions 
on a graph with $n$ vertices and edges,
the algorithm for $0$-dimension
runs in $O(m\log^2 n+m\log m)$ time
and the algorithm for 1-dimension
runs in $O(m\log^4 n)$ time. 
The algorithm for $0$-dimension draws upon another
algorithm designed originally for pairing critical points of Morse functions on $2$-manifolds.
The algorithm for $1$-dimension pairs a negative edge
with the {\it earliest} positive edge so that 
a $1$-cycle containing both edges 
resides in all intermediate graphs.
Both algorithms achieve the claimed time complexity 
via dynamic graph data structures 
proposed by Holm et al.
In the end, using Alexander duality, we extend the algorithm for $0$-dimension
to compute the $(p-1)$-dimensional zigzag persistence for $\mathbb{R}^p$-embedded
complexes in $O(m\log^2 n+m\log m+n\log n)$ time.
\end{abstract}

\newpage
\setcounter{page}{1}

\section{Introduction}

Graphs appear in many applications 
as abstraction of real-world phenomena,
where vertices represent certain objects
and edges represent their relations. 
Rather than being stationary, 
graph data obtained in applications
usually change with respect to
some parameter such as time. 
A summary of these changes in a quantifiable
manner can help gain insight into the data.
Persistent homology~\cite{carlsson2010zigzag,edelsbrunner2000topological}
is a suitable tool for this goal because it quantifies 
the life span of topological features
as the graph changes.
One drawback of using standard non-zigzag persistence~\cite{edelsbrunner2000topological} 
is that it
only allows addition of vertices and edges during the change,
whereas deletion may also happen in practice.
For example, 
many complex systems such as social networks, food webs, 
or disease spreading
are modeled by
the so-called ``dynamic networks''~\cite{holme2012temporal,kim2017stable,skarding2020foundations},
where
vertices and edges can appear and disappear at different time.
A variant of the standard persistence called 
{\it zigzag persistence}~\cite{carlsson2010zigzag} is thus a more natural tool in such scenarios because
simplices can be both added and deleted.
Given a sequence of graphs possibly with additions and deletions
(formally called a {\it zigzag filtration}),
zigzag persistence produces a set of intervals termed as {\it zigzag barcode}
in which each interval 
registers the birth and death time
of a homological feature.
Figure~\ref{fig:dm_net} gives an example of a graph sequence in which
clusters may split (birth of 0-dimensional features) 
or vanish/merge (death of 0-dimensional features).
Moreover, addition of edges within the clusters creates 1-dimensional cycles
and deletion of edges makes some cycles disappear.
These births and deaths 
are captured by zigzag persistence.

\begin{figure}
  \centering
  \includegraphics[width=\linewidth]{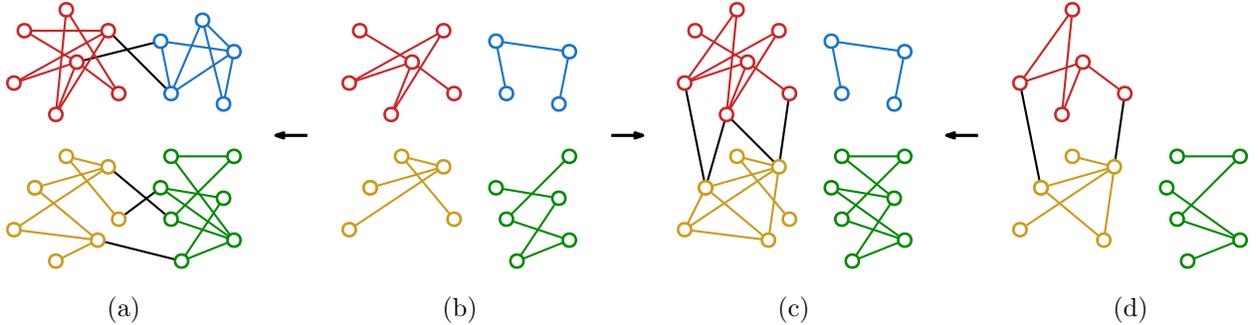}
  \caption{A sequence of graphs with four prominent clusters 
  each colored differently. Black edges connect
  different clusters and forward (resp. backward) arrows 
  indicate additions (resp. deletions) of vertices and edges. 
  From (a) to (b), two clusters
  split; from (b) to (c), two clusters merge; from (c) to (d), one cluster
  disappears. 
}
  \label{fig:dm_net}
\end{figure}

Algorithms for both zigzag and non-zigzag persistence
have a general-case time complexity of 
$O(m^\oG)$~\cite{carlsson2009zigzag-realvalue,edelsbrunner2000topological,maria2014zigzag,milosavljevic2011zigzag},
where
$m$ is the length of the input filtration and $\oG<2.37286$ is the matrix multiplication exponent~\cite{alman2021refined}. 
For the special
case of graph filtrations, it is well known that
non-zigzag persistence 
can be computed in $O(m\,\aG(m))$ time,
where $\aG(m)$ is the inverse Ackermann's function
that is almost constant for all practical purposes~\cite{CLRS3rd}. 
However, analogous faster algorithms for
zigzag persistence on graphs are not known.
In this paper, 
we present algorithms for zigzag persistence on graphs with near-linear time complexity.
In particular, given a zigzag filtration of length $m$
for a graph with $n$ vertices and edges, 
our algorithm for 0-dimension
runs in $O(m\log^2 n+m\log m)$ time,
and our algorithm for $1$-dimension
runs in $O(m\log^4 n)$ time. 
Observe that the 
algorithm for $0$-dimension works for arbitrary complexes 
by restricting to the $1$-skeletons.

The difficulty in designing faster zigzag persistence algorithms for the special
case of graphs lies in the deletion of vertices and edges.
For example, besides merging into bigger ones, 
connected components can also split into smaller ones
because of edge deletion.
Therefore,
one cannot simply kill the younger component during merging
as in standard persistence~\cite{edelsbrunner2000topological},
but rather has to pair the {\it merge} 
and {\it departure} events with the {\it split} and {\it entrance} 
events (see Sections~\ref{sec:0-zigzag} for details).
Similarly, in dimension one, deletion of edges may kill 1-cycles
so that one has to properly pair the creation and destruction
of 1-cycles, instead of simply treating all 1-dimensional intervals
as infinite ones. 

Our solutions are as follows: in dimension zero,
we find that the $O(n\log n)$ algorithm by Agarwal et al.~\cite{agarwal2006extreme} 
originally designed for pairing critical points of Morse functions on 2-manifolds
can be utilized
in our scenario. We formally prove the correctness of applying
the algorithm and use a {\it dynamic connectivity} data
structure~\cite{holm2001poly} to achieve the claimed complexity.
In dimension one, we observe that a positive and a negative edge
can be paired by finding the 
{\it earliest} 1-cycle containing both edges 
which resides in all intermediate graphs.
We further reduce the pairing to finding the {\it max edge-weight}
of a path in a minimum spanning forest.
Utilizing a data structure for {\it dynamic minimum spanning forest}~\cite{holm2001poly},
we achieve the claimed time complexity.
Section~\ref{sec:1-zigzag} details this algorithm.

Using Alexander duality,
we also extend the algorithm for $0$-dimension
to compute $(\Dim-1)$-dimensional zigzag for $\Real^\Dim$-embedded
complexes. 
The connection between these two cases
for non-zigzag persistence
is well known~\cite{edelsbrunner2012alexander,schweinhart2015statistical},
and the challenge comes in adopting
this duality to the zigzag setting while maintaining
an efficient time budget. 
With the help of a {\it dual filtration}
and an observation about faster void boundary reconstruction
for {\it$(\Dim-1)$-connected} complexes~\cite{dey2020computing},
we achieve a time complexity of
$O(m\log^2 n+m\log m+n\log n)$.


\paragraph*{Related works.}
The algorithm for computing persistent homology by
Edelsbrunner et al.~\cite{edelsbrunner2000topological} 
is a cornerstone of topological data analysis. Several extensions
followed after this initial development.
De Silva et al.~\cite{de2011persistent} proposed to compute 
persistent {\it cohomology} instead of homology
which gives the same barcode.
De Silva et al.~\cite{de2011dualities} then showed
that the persistent cohomology algorithm runs faster in practice
than the version that uses homology.
The {\it annotation} technique proposed by Dey et al.~\cite{dey2014computing}
implements the cohomology algorithm by maintaining a cohomology basis more
succinctly and extends to {\it towers} connected by simplicial maps.
These algorithms 
run in $O(m^3)$ 
time.

Carlsson and de Silva~\cite{carlsson2010zigzag}
introduced zigzag persistence
as an extension of the standard persistence,
where they also presented a
decomposition algorithm for computing zigzag barcodes
on the level of vector spaces and linear maps. 
This algorithm is then adapted to zigzag filtrations at simplicial level by
Carlsson et al.~\cite{carlsson2009zigzag-realvalue} 
with a time complexity of $O(m^3)$.
Both algorithms~\cite{carlsson2010zigzag,carlsson2009zigzag-realvalue} 
utilize a construct called {\it right filtration} and a {\it birth-time vector}.
Maria and Oudot~\cite{maria2014zigzag} proposed 
an algorithm for zigzag persistence
based on some {\it diamond principles}
where an inverse non-zigzag filtration is always maintained
during the process.
The algorithm in~\cite{maria2014zigzag} is shown to run faster
in experiments than the algorithm in~\cite{carlsson2009zigzag-realvalue} 
though the time complexities remain the same.
Milosavljevi\'{c} et al.~\cite{milosavljevic2011zigzag}
proposed an algorithm for zigzag persistence based on
matrix multiplication
which runs in $O(m^\oG)$ time, giving the best asymptotic bound 
for computing zigzag and non-zigzag persistence
in general dimensions.

The algorithms reviewed so far 
are all for general dimensions 
and many of them
are based on matrix operations.
Thus, it is not surprising that
the best time bound achieved is $O(m^\oG)$
given that
computing Betti numbers 
for a simplicial $2$-complex of size $m$
is as hard as computing the rank of 
a $\Zbb_2$-matrix with $m$ non-zero entries
as shown by Edelsbrunner and Parsa~\cite{edelsbrunner2014computational}.
To lower the complexity,
one strategy 
(which is adopted by this paper)
is to consider special cases
where matrix operations can be avoided.
The work by
Dey~\cite{dey2019computing} is probably most related to ours
in that regard,
who proposed an $O(m\log m)$ algorithm for non-zigzag persistence
induced from height functions on $\Real^3$-embedded complexes.

\section{Preliminaries}



A {\it zigzag module} (or {\it module} for short)
is a sequence of vector spaces 
\[\Mcal: \vsp_0 \leftrightarrowsp{\psi_0} \vsp_1 \leftrightarrowsp{\psi_1} 
\cdots \leftrightarrowsp{\psi_{\filtcnt-1}} \vsp_\filtcnt\]
in which
each $\psi_i$ is either a forward linear map $\psi_i:\vsp_i\to\vsp_{i+1}$
or a backward linear map $\psi_i:\vsp_i\leftarrow\vsp_{i+1}$.
We assume 
vector spaces are over field $\Zbb_2$ in this paper.
A module $\Scal$ of the form
\[\Scal:W_0 \leftrightarrowsp{\phi_0} W_1 \leftrightarrowsp{\phi_1} 
\cdots \leftrightarrowsp{\phi_{\filtcnt-1}} W_\filtcnt\]
is called a {\it submodule} of $\Mcal$ if each $W_i$ is a subspace of $\vsp_i$ and
each $\phi_i$
is the restriction of $\psi_i$.
For an interval $[b,d]\subseteq[0,\filtcnt]$,
$\Scal$ is called an {\it interval submodule} of $\Mcal$ over $[b,d]$
if $W_i$
is one-dimensional for $i\in[\birth,\death]$
and is trivial for $i\not\in[\birth,\death]$,
and $\phi_i$ is an isomorphism for $i\in[\birth,\death-1]$.
It is well known~\cite{carlsson2010zigzag} that $\Mcal$ admits 
an {\it interval decomposition}
$\Mcal=\bigoplus_{\aG\in\Acal}\Ical^{[\birth_\aG,\death_\aG]}$
which is a direct sum of interval submodules of $\Mcal$.
The (multi-)set of intervals
$\Set{[\birth_\aG,\death_\aG]\given \aG\in\Acal}$
is called the {\it zigzag barcode} (or {\it barcode} for short) of $\Mcal$
and is denoted as $\Pers(\Mcal)$.
Each interval in a zigzag barcode is called a {\it persistence interval}.

In this paper,
we mainly focus on a special type of zigzag modules:

\begin{definition}[Elementary zigzag module]
A zigzag module is called \textbf{elementary} if 
it starts with the trivial vector space
and all linear maps in the module
are of the three forms: {\rm(}{\sf i}{\rm)} an isomorphism; 
{\rm(}{\sf ii}{\rm)} an injection with rank 1 cokernel; 
{\rm(}{\sf iii}{\rm)} a surjection with rank 1 kernel.
\end{definition}

A {\it zigzag filtration} (or {\it filtration} for short)
is a sequence of simplicial complexes
\[\Fcal: K_0 \leftrightarrowsp{\fsimp{\Fcal}{0}} K_1 \leftrightarrowsp{\fsimp{\Fcal}{1}}
\cdots \leftrightarrowsp{\fsimp{\Fcal}{\filtcnt-1}} K_\filtcnt\]
in which each
$K_i\leftrightarrowsp{\fsimp{\Fcal}{i}} K_{i+1}$ is either a forward inclusion
$K_i\incto K_{i+1}$ with a single simplex $\fsimp{\Fcal}{i}$ added,
or a backward inclusion $K_i\bakincto K_{i+1}$
with a single $\fsimp{\Fcal}{i}$ deleted.
When the $\fsimp{\Fcal}{i}$'s
are not explicitly used,
we drop them and simply denote $\Fcal$ as
$\Fcal: K_0 \leftrightarrow K_1 \leftrightarrow
\cdots \leftrightarrow K_\filtcnt$.
For computational purposes,
we sometimes assume that a filtration starts with the empty complex,
i.e., $K_0=\emptyset$ in $\Fcal$.
Throughout the paper, we also assume that each $K_i$ in $\Fcal$ is a subcomplex 
of a fixed complex $K$; such a $K$, when not given, 
can be constructed by taking the union
of every $K_i$ in $\Fcal$.
In this case, we call $\Fcal$ a filtration {\it of} $K$.

Applying the $\Dim$-th homology with $\Zbb_2$ coefficients on $\Fcal$, 
we derive the
{\it $\Dim$-th zigzag module of $\Fcal$}
\[\Hm_\Dim(\Fcal): 
\Hm_\Dim(K_0) \leftrightarrowsp{\morph{\Fcal}{0}{\Dim}} 
\Hm_\Dim(K_1) \leftrightarrowsp{\morph{\Fcal}{1}{\Dim}} 
\cdots 
\leftrightarrowsp{\morph{\Fcal}{\filtcnt-1}{\Dim}} \Hm_\Dim(K_\filtcnt)\]
in which each $\morph{\Fcal}{i}{\Dim}$ is the linear map
induced by the inclusion.
In this paper, 
whenever $\Fcal$ is used to denote a filtration,
we use $\morph{\Fcal}{i}{\Dim}$ to denote a linear map
in the module $\Hm_\Dim(\Fcal)$.
Note that $\Hm_\Dim(\Fcal)$ is an elementary module 
if $\Fcal$ starts with an empty complex.
Specifically, we call $\Pers(\Hm_\Dim(\Fcal))$ 
the {\it $\Dim$-th zigzag barcode} of $\Fcal$.

\section{Zero-dimensional zigzag persistence}
\label{sec:0-zigzag}


We present our algorithm for 0-th zigzag 
persistence\footnote{For brevity, 
henceforth we call $\Dim$-dimensional zigzag persistence
as {\it $\Dim$-th} zigzag persistence.}
in this section.
The input is assumed to be on graphs
but note that our algorithm can be applied to any complex by restricting to its 1-skeleton.
We first define the barcode graph of
a zigzag filtration
which is a construct that our algorithm implicitly works on.
In a barcode graph,
nodes correspond to 
connected components of graphs in the filtration
and edges encode the mapping between the components:

\begin{figure} 
    \begin{subfigure}[t]{0.9\textwidth}
    \includegraphics[width=\textwidth]{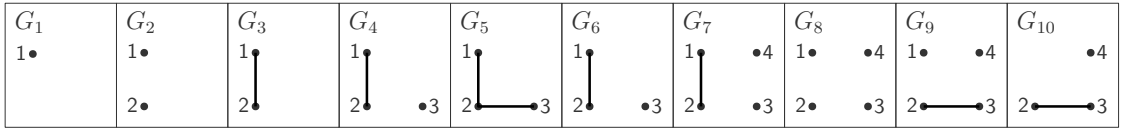} 
    \vspace{-1em}
    \caption{A zigzag filtration of graphs 
    with 0-th barcode $\Set{[2,2],[4,4],[6,8],[8,9],[7,10],[1,10]}$.}
    \label{fig:filt}
    \end{subfigure}

    \vspace{1.5em}

    \begin{subfigure}[t]{0.65\textwidth}
    \centering
    \includegraphics[width=\textwidth]{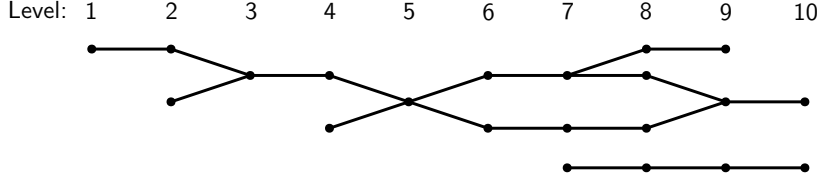} 
    \vspace{-1em}
    \caption{The barcode graph for the filtration shown in Figure~\ref{fig:filt}.}
    \label{fig:bcgraph}
    \end{subfigure}

    \vspace{1em}

    \begin{subfigure}[t]{\textwidth}
    \centering
    \includegraphics[width=\textwidth]{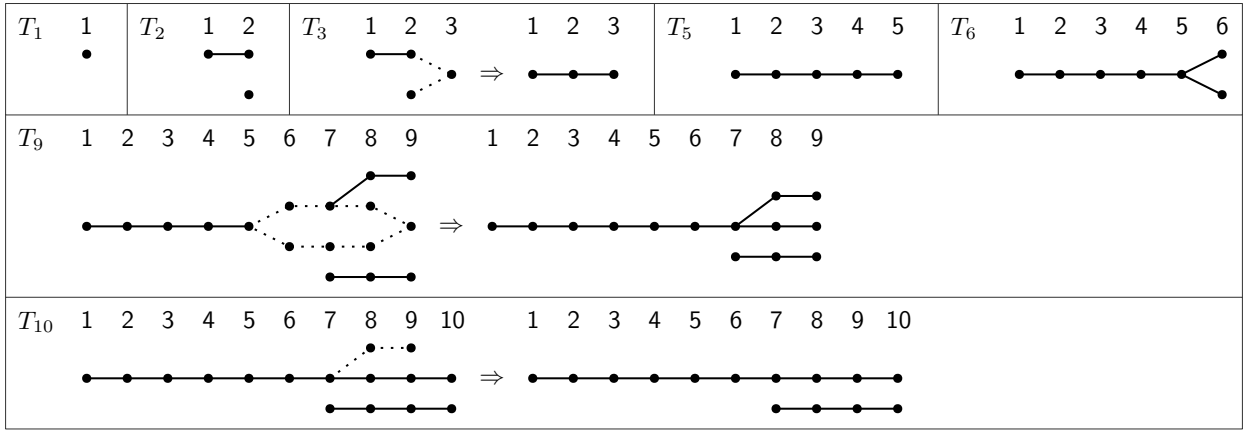} 
    \vspace{-1em}
    \caption{Barcode forests constructed in Algorithm~\ref{alg:0-zigzag} 
    for the barcode graph in Figure~\ref{fig:bcgraph}. 
    For brevity, some forests are skipped.
    The horizontally arranged labels indicate the levels.}
    \label{fig:bcforest}
    \end{subfigure}

\caption{Examples of a zigzag filtration, a barcode graph, and barcode forests.}
\label{fig:example} 
\end{figure}

\begin{definition}[Barcode graph]\label{dfn:bcgraph}
For a graph $G$ and a zigzag filtration
$\Fcal: G_0 \leftrightarrow G_1 \leftrightarrow 
\cdots \leftrightarrow G_\filtcnt$ of $G$,
the 
\textbf{barcode graph} $\bcgraph(\Fcal)$ of $\Fcal$
is a graph whose vertices {\rm(}preferably called \textbf{nodes}{\rm)}
are associated with a \textbf{level}
and whose edges connect nodes only at adjacent levels.
The graph $\bcgraph(\Fcal)$ is constructively described as follows:
\begin{itemize}
    \item For each $G_i$ in $\Fcal$ and each connected component of $G_i$,
    there is a node in $\bcgraph(\Fcal)$ at {level}~$i$ corresponding
    to this component; 
    this node is also called a \textbf{level-$i$ node}.
    \item For each inclusion $G_i\leftrightarrow G_{i+1}$ in $\Fcal$,
    if it is forward, 
    then there is an edge connecting a level-$i$ node $v_i$ 
    to a level-$(i+1)$ node $v_{i+1}$ if and only if 
    the component of $v_i$ maps to the component of $v_{i+1}$ by the inclusion.
    Similarly, if the inclusion is backward, then 
    $v_i$ connects to $v_{i+1}$ by an edge 
    iff
    the component of $v_{i+1}$ maps to the component of $v_{i}$.
\end{itemize}
For two nodes at different levels in $\bcgraph(\Fcal)$,
the node at the higher {\rm(}resp. lower{\rm)} level 
is said to be \textbf{higher} {\rm(}resp. \textbf{lower}{\rm)} than the other.
\end{definition}
\begin{remark}
Note that some works~\cite{dey2019computing,kim2017stable} also have used similar notions of barcode graphs.
\end{remark}

Figure~\ref{fig:filt} and~\ref{fig:bcgraph} give an example of 
a zigzag filtration and its barcode graph.
Note that a barcode graph is of size $O(mn)$,
where $m$ is the length of $\Fcal$ 
and $n$ is the number of vertices and edges
of $G$. Although we present our algorithm
(Algorithm~\ref{alg:0-zigzag}) by first building the barcode graph,
the implementation does not do so explicitly, allowing us
to achieve the claimed time complexity; 
see Section~\ref{sec:0-zz-imp}
for the implementation details.
Introducing barcode graphs
helps us justify the algorithm,
and more importantly, points to the fact 
that the algorithm can be applied whenever such a barcode graph can be built.

\begin{algr}[Algorithm for 0-th zigzag persistence]
\label{alg:0-zigzag}
\begin{itemize}\item[]\end{itemize}\noindent
Given a graph $G$ and a zigzag filtration
$\Fcal: \emptyset=G_0 \leftrightarrow G_1 \leftrightarrow 
\cdots \leftrightarrow G_\filtcnt$ of $G$,
we first build the barcode graph $\bcgraph(\Fcal)$,
and then apply the pairing algorithm described in~\cite{agarwal2006extreme}
on $\bcgraph(\Fcal)$ to compute $\Pers(\Hm_0(\Fcal))$.
For a better understanding, 
we rephrase this algorithm 
which originally works on Reeb graphs:

The algorithm iterates for $i=0,\ldots,\filtcnt-1$
and maintains a \textbf{barcode forest} $\bcforest_i$,
whose leaves have a one-to-one correspondence 
to level-$i$ nodes of $\bcgraph(\Fcal)$.
Like the barcode graph,
each tree node in a barcode forest is associated with a level and each tree edge
connects nodes at adjacent levels.
For each tree in a barcode forest,
the lowest node is the root.
Initially, $\bcforest_0$ is empty;
then, the algorithm builds $\bcforest_{i+1}$ from $\bcforest_{i}$
in the $i$-th iteration.
Intervals for $\Pers(\Hm_0(\Fcal))$ are produced while updating the barcode forest.
{\rm(}Figure~\ref{fig:bcforest} illustrates such updates.{\rm)}

Specifically, the $i$-th iteration proceeds as follows:
first, $\bcforest_{i+1}$ is formed by copying 
the level-$(i+1)$ nodes of $\bcgraph(\Fcal)$ and their connections
to the level-$i$ nodes, into $\bcforest_{i}$;
the copying is possible because leaves of $T_i$ and 
level-$i$ nodes of $\bcgraph(\Fcal)$ have a one-to-one correspondence;
see transitions from $T_5$ to $T_6$ and from $T_9$ to $T_{10}$
in Figure~\ref{fig:bcforest}.
We further change $\bcforest_{i+1}$ 
under the following events:
\vspace{0.3em}\begin{description}
    \item[Entrance:] One level-$(i+1)$ node in $\bcforest_{i+1}$, said to be \textbf{entering},
    does not connect to any level-$i$ node.
    
    \item[Split:] One level-$i$ node in $\bcforest_{i+1}$,
    said to be \textbf{splitting},
    connects to
    two different level-$(i+1)$ nodes.
    For the two events so far,
    no changes need to be made on $\bcforest_{i+1}$.
    
   \item[Departure:] One level-$i$ node $u$ in $\bcforest_{i+1}$,
    said to be \textbf{departing},
    does not connect to any level-$(i+1)$ node.
    If $u$ has splitting ancestors {\rm(}i.e., ancestors which are also splitting nodes{\rm)}, add an 
    interval $[j+1,i]$ to $\Pers(\Hm_0(\Fcal))$, where
    $j$ is the level of the highest splitting ancestor $v$ of $u$;
    otherwise, add an 
    interval $[j,i]$ to $\Pers(\Hm_0(\Fcal))$, where $j$ is the level of the root $v$ of $u$.
    We then delete the path from $v$ to $u$ in $\bcforest_{i+1}$.
    
    \item[Merge:] Two different level-$i$ nodes $u_1,u_2$ in $\bcforest_{i+1}$ 
    connect to the same level-$(i+1)$ node.
    Tentatively, $\bcforest_{i+1}$ may now contain a loop
    and is not a tree.
    If $u_1,u_2$ are in different trees in $\bcforest_i$, 
    add an 
    interval $[j,i]$ to $\Pers(\Hm_0(\Fcal))$,
    where $j$ is the level of the higher root of $u_1,u_2$ in $\bcforest_i$;
    otherwise, add an 
    interval $[j+1,i]$ to $\Pers(\Hm_0(\Fcal))$, where $j$ is the level of 
    the highest common ancestor of $u_1,u_2$ in $\bcforest_i$.
    We then glue the two paths from $u_1$ and $u_2$ to their level-$j$ ancestors
    in $\bcforest_{i+1}$,
    after which $\bcforest_{i+1}$ is guaranteed to be a tree.
    
    \item[No-change:] If none of the above events happen, 
    no changes are made on $\bcforest_{i+1}$.
\end{description}

\vspace{-0.5em}
At the end, for each root in $\bcforest_{\filtcnt}$ at a level $j$,
add an 
interval $[j,\filtcnt]$ to $\Pers(\Hm_0(\Fcal))$,
and for each splitting node
in $\bcforest_{\filtcnt}$ at a level $j$,
add an 
interval $[j+1,\filtcnt]$ to $\Pers(\Hm_0(\Fcal))$.
\end{algr}

\begin{remark}
The justification of Algorithm~\ref{alg:0-zigzag} is given in Section~\ref{sec:0-zigzag-proof}.
\end{remark}

Figure~\ref{fig:bcforest} gives examples of barcode forests
constructed by Algorithm~\ref{alg:0-zigzag} 
for the barcode graph shown in Figure~\ref{fig:bcgraph},
where $T_1$ and $T_2$ introduce entering nodes,
$T_6$ introduces a splitting node,
and $T_{10}$ introduces a departing node.
In $T_{10}$, the departure event happens and the dotted path is deleted,
producing an interval $[8,9]$.
In $T_3$ and $T_9$, the merge event happens and the dotted paths 
are glued together, producing intervals $[2,2]$ and $[6,8]$.
Note that the glued level-$i$ nodes are in different trees 
in $T_3$ and are in the same tree in $T_9$.

\subsection{Implementation}
\label{sec:0-zz-imp}

As mentioned, to achieve the claimed time complexity,
we do not explicitly build the barcode graph.
Instead, we differentiate the different events as follows:
inserting (resp. deleting) a vertex in $\Fcal$ simply corresponds to
the entrance (resp. departure) event,
whereas inserting (resp. deleting) an edge corresponds to
the merge (resp. split) event only when connected components in the graph
merge (resp. split).

To keep track of the connectivity of vertices,
we use a {\it dynamic connectivity} data structure by Holm~et~al.~\cite{holm2001poly},
which we denote as $\Dbb$.
Assuming that 
$\filtcnt$ is the length of $\Fcal$ and 
$n$ is the number of vertices and edges of $G$,
the data structure $\Dbb$ supports the following operations:
\begin{itemize}
    \item Return the identifier\footnote{Since $\Dbb$ 
    maintains the connectivity information by
    dynamically updating the spanning forest for the current graph,
    the identifier of a connected component is indeed the identifier
    of a tree in the spanning forest.}
    of the connected component of a vertex $v$ in $O(\log n)$ time.
    We denote this subroutine as $\find(v)$.
    \item Insert or delete an edge,
    and possibly update the connectivity information, in $O(\log^2 n)$ amortized time.
\end{itemize}

We also note the following implementation details:
\begin{itemize}
    \item All vertices of $G$ are added to $\Dbb$ initially
and are then never deleted. 
But we make sure that
edges in $\Dbb$ always equal edges in $G_i$
as the algorithm proceeds
so that $\Dbb$ still records the connectivity of $G_i$.
    
    \item At each iteration $i$,
we update $\bcforest_i$ 
to form $\bcforest_{i+1}$ according to the changes of the connected components
from $G_i$ to $G_{i+1}$.
For this,
we maintain a key-value map $\phi$ from connected components of $\Dbb$
to leaves of the barcode forest,
and $\phi$ is initially empty.

    \item In a barcode forest $\bcforest_i$,
since the level of a leaf
always equals $i$,
we only record the level of a non-leaf node.
Note that at iteration $i$, a leaf in $\bcforest_{i}$ may
uniquely connect to a single leaf in $\bcforest_{i+1}$. 
In this case, 
we simply let the leaf in $\bcforest_{i}$ automatically become
a leaf in $\bcforest_{i+1}$; see Figure~\ref{fig:T3toT4}.
The size of a barcode forest is then $O(m)$.
\end{itemize}

\begin{figure}
  \centering
  \includegraphics[width=0.33\linewidth]{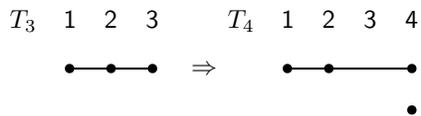}
  \caption{For the example in Figure~\ref{fig:example},
  to form $T_4$, 
  our implementation only adds a level-4 entering node,
  whereas the leaf in $T_3$ is not touched. 
  Since the level of a leaf always equals the index of the barcode forest,
  the leaf at level 3 in $T_3$ 
  automatically becomes a leaf at level 4 in $T_4$.}
  \label{fig:T3toT4}
\end{figure}

Now we can present the full detail of the implementation.
Specifically,
for each addition and deletion in $\Fcal$, 
we do the following in each case:

\begin{description}
    \item[Adding vertex $\fsimp{\Fcal}{i}=v$:]
    Add an isolated node to the barcode forest and let
    $\phi(\find(v))$ equal this newly added node.
    
    \item[Deleting vertex $\fsimp{\Fcal}{i}=v$:]
    Let $\ell=\phi(\find(v))$; then, $\ell$ is the node in the barcode 
    forest that is departing. 
    Update the barcode forest as described in Algorithm~\ref{alg:0-zigzag}.
    
    \item[Adding edge $\fsimp{\Fcal}{i}=(u,v)$:]
    Let $t_1=\find(u)$, $t_2=\find(v)$,
    $\ell_1=\phi(t_1)$, and $\ell_2=\phi(t_2)$.
    If $t_1=t_2$, then the no-change event happens;
    otherwise, the merge event happens.
    We then add $(u,v)$ to $\Dbb$.
    For the no-change event,
    do nothing after this.
    For the merge event, do the following:
    glue the paths from $\ell_1$ and $\ell_2$ to their ancestors
    as described in Algorithm~\ref{alg:0-zigzag};
    attach a new child $\ell$ to the highest
    glued node; update $\phi(\find(u))$ to be $\ell$.
    
    \item[Deleting edge $\fsimp{\Fcal}{i}=(u,v)$:]
    Let $\ell=\phi(\find(u))$, and then delete $(u,v)$ from $\Dbb$.
    If $\find(u)=\find(v)$ after this, then the no-change event happens
    but we have to update $\phi(\find(u))$ to be $\ell$ 
    because the identifiers of the connected components in $\Dbb$ may change
    after deleting the edge~\cite{holm2001poly}.
    Otherwise, the split event happens: we attach two new
    children $\ell_1$, $\ell_2$ to $\ell$ in the barcode forest and set
    $\phi(\find(u))=\ell_1$, $\phi(\find(v))=\ell_2$.
\end{description}

\paragraph*{Mergeable trees.}
Following the idea in~\cite{agarwal2006extreme},
the barcode forest can be implemented
using the {\it mergeable trees} data structure by Georgiadis et al.~\cite{georgiadis2011data}.
Since the maximum number of nodes in a barcode forest is $O(m)$,
the data structure supports the following operations, 
each of which takes $O(\log m)$ amortized time:
\begin{itemize}
    \item Return the root of a node.
    \item Return the nearest common ancestor of two leaves (in the same tree).
    \item Glue the paths from two leaves (in the same tree) to their nearest common ancestor.
\end{itemize}

Note that 
while we delete the path from the departing node to its ancestor
in the departure event,
deletions are not supported by mergeable trees.
However,
path deletions are indeed unnecessary 
which are only meant for a clear exposition.
Hence,
during implementation, we only traverse each ancestor
of the departing node until an {\it unpaired}\footnote{An entering 
or splitting node is initially {\it unpaired} when introduced and becomes {\it paired} 
when its level is used to produce an interval. 
E.g., the node $v$ becomes paired in the departure event in Algorithm~\ref{alg:0-zigzag}.}
one is found
without actual deletions.
Since each node can only 
be traversed once, the traversal in the departure events
takes $O(m)$ time in total.
See \cite[Section 5]{georgiadis2011data}
for details of implementing the barcode forest and its operations 
using mergeable trees.

\paragraph*{Complexity.}
The time complexity of the algorithm is $O(m\log^2 n+m\log m)$
dominated by the operations of the dynamic connectivity 
and the mergeable trees data structures.

\subsection{Justification}
\label{sec:0-zigzag-proof}
In this subsection, we justify the correctness of Algorithm~\ref{alg:0-zigzag}.
For each entering node $u$ in a $\bcforest_i$ of Algorithm~\ref{alg:0-zigzag},
there must be a single
node in $\bcgraph(\Fcal)$ at the level of $u$
with the same property. 
So we also have entering nodes in $\bcgraph(\Fcal)$.
Splitting and departing nodes
in $\bcgraph(\Fcal)$
can be similarly defined.

We first prepare some standard notions and facts 
in zigzag persistence
(Definition~\ref{dfn:rep-cls} and~\ref{dfn:pos-neg-inds}, 
Proposition~\ref{prop:pn-paring-w-rep})
that help with our proofs.
Some notions also appear in previous works in different forms; 
see, e.g., \cite{maria2014zigzag}.

\begin{definition}[Representatives]
\label{dfn:rep-cls}
Let 
$\Mcal: \vsp_0 \leftrightarrowsp{\psi_0} \cdots \leftrightarrowsp{\psi_{\filtcnt-1}} \vsp_\filtcnt$
be an elementary zigzag module
and $[s,t]\subseteq[1,\filtcnt]$ be an interval.
An indexed set
$\Set{\aG_i\in \vsp_i\given i\in[s,t]}$ 
is called a set of \textbf{partial representatives} 
for $[s,t]$ if for every $i\in[s,t-1]$, 
$\aG_i\mapsto\aG_{i+1}$ or $\aG_i\mapsfrom\aG_{i+1}$ by $\psi_i$;
it is called a set of \textbf{representatives} for $[s,t]$
if the following additional conditions are satisfied:
\begin{enumerate}
    \item \label{itm:dfn-rep-cls-birth}
    If $\psi_{s-1}:V_{s-1}\rightarrow V_s$ is forward
    with non-trivial cokernel,
    then $\aG_s$ is not in $\img(\psi_{s-1})$;
    if $\psi_{s-1}: V_{s-1}\leftarrow V_{s}$ is backward
    with non-trivial kernel,
    then $\aG_s$ is the non-zero element in $\ker(\psi_{s-1})$.
    
    \item \label{itm:dfn-rep-cls-death}
    If $t<\filtcnt$ and
    $\psi_{t}: V_{t}\leftarrow V_{t+1}$ is backward
    with non-trivial cokernel,
    then $\aG_t$ is not in $\img(\psi_{t})$;
    if $t<\filtcnt$
    and $\psi_{t}: V_t\rightarrow V_{t+1}$ is forward
    with non-trivial kernel,
    then $\aG_t$ is the non-zero element in $\ker(\psi_{t})$.
\end{enumerate}
Specifically, when $\Mcal:=\Hm_\Dim(\Fcal)$ for a zigzag filtration $\Fcal$,
we use terms 
\textbf{$\Dim$-representatives} 
and \textbf{partial $\Dim$-representatives} to emphasize the dimension $p$.
\end{definition}
\begin{remark}
Let $\Fcal$ be the filtration given in Figure~\ref{fig:filt},
and let $\aG_8$, $\aG_9$ be the sum of the component containing
vertex 1 and the component containing vertex 2 in $G_8$ and $G_9$.
Then, $\Set{\aG_8,\aG_9}$ is a set of 0-representatives 
for the interval $[8,9]\in\Pers(\Hm_0(\Fcal))$.
\end{remark}


\begin{definition}[Positive/negative indices]
\label{dfn:pos-neg-inds}
Let 
$\Mcal: \vsp_0 \leftrightarrowsp{\psi_0} \cdots \leftrightarrowsp{\psi_{\filtcnt-1}} \vsp_\filtcnt$
be an elementary zigzag module.
The set of \textbf{positive indices} of $\Mcal$,
denoted $\pinds(\Mcal)$,
and the set of \textbf{negative indices} of $\Mcal$,
denoted $\ninds(\Mcal)$,
are constructed as follows:
for each forward $\psi_i:\vsp_i\to\vsp_{i+1}$,
if $\psi_i$ is an injection with non-trivial cokernel,
add $i+1$ to $\pinds(\Mcal)$;
if $\psi_i$ is a surjection with non-trivial kernel,
add $i$ to $\ninds(\Mcal)$.
Furthermore,
for each backward $\psi_i:\vsp_i\leftarrow\vsp_{i+1}$,
if $\psi_i$ is an injection with non-trivial cokernel,
add $i$ to $\ninds(\Mcal)$;
if $\psi_i$ is a surjection with non-trivial kernel,
add $i+1$ to $\pinds(\Mcal)$.
Finally, add $\rank \vsp_\filtcnt$ copies of $m$ to $\ninds(\Mcal)$.
\end{definition}


\begin{remark}
For each $\psi_i:\vsp_i\leftrightarrow\vsp_{i+1}$ in Definition~\ref{dfn:pos-neg-inds},
if $i+1\in\pinds(\Mcal)$, then $i\not\in\ninds(\Mcal)$;
similarly, if $i\in\ninds(\Mcal)$, then $i+1\not\in\pinds(\Mcal)$.
Furthermore, if $\psi_i$ is an isomorphism, 
then $i\not\in\ninds(\Mcal)$ and $i+1\not\in\pinds(\Mcal)$.
\end{remark}

Note that $\ninds(\Mcal)$ in Definition~\ref{dfn:pos-neg-inds} 
is in fact a multi-set; calling it a set 
should not cause any confusion in this paper though.
Also note that $|\pinds(\Mcal)|=|\ninds(\Mcal)|$,
and every index in $\pinds(\Mcal)$ (resp. $\ninds(\Mcal)$) 
is the start (resp. end) of an interval in $\Pers(\Mcal)$.
This explains why we add $\rank \vsp_\filtcnt$ copies of $m$ to $\ninds(\Mcal)$
because there are always $\rank \vsp_\filtcnt$ number of intervals ending with $m$
in $\Pers(\Mcal)$;
see the example in Figure~\ref{fig:filt} where $\rank\Hm_0(G_{10})=2$.

\begin{proposition}\label{prop:pn-paring-w-rep}
Let $\Mcal$ be an elementary zigzag module
and $\pi:\pinds(\Mcal)\to\ninds(\Mcal)$ be a bijection.
If every $b\in\pinds(\Mcal)$ satisfies that $b\leq\pi(b)$ and the interval $[b,\pi(b)]$
has a set of representatives,
then $\Pers(\Mcal)=\Set{[b,\pi(b)]\given b\in\pinds(\Mcal)}$.
\end{proposition}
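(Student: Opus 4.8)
The plan is to show that the hypotheses force the proposed collection $\Set{[b,\pi(b)] \given b\in\pinds(\Mcal)}$ to agree with the true barcode $\Pers(\Mcal)$ by a counting-and-matching argument driven by the rank function of $\Mcal$ and its submodules. First I would recall the standard fact that for an elementary zigzag module $\Mcal$, the barcode is uniquely determined, and that the multiplicity of intervals containing a given ``edge'' $\psi_i$ equals $\rank\psi_i$ (or $\dim V_i$ for a vertex), so that the positions where intervals start and end are constrained: an index $b$ is the left endpoint of exactly as many intervals as there are independent classes ``born'' at $b$, and this count is exactly the multiplicity with which $b$ appears in $\pinds(\Mcal)$ (and dually for $\ninds(\Mcal)$ and right endpoints). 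This is essentially the content of the remarks preceding the proposition, together with the elementary structure theory of zigzag modules from \cite{carlsson2010zigzag}.

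The key step is then the following: given the bijection $\pi$ with $b\le\pi(b)$ and a set of representatives $\Set{\aG_i \given i\in[b,\pi(b)]}$ for each pair, I would argue that the interval submodule structure is actually realized. Concretely, for each $b$ the representatives $\Set{\aG_i}$ span a one-dimensional subspace $W_i\subseteq V_i$ at each $i\in[b,\pi(b)]$ on which the maps $\psi_i$ restrict to isomorphisms (by the ``partial representatives'' condition $\aG_i\mapsto\aG_{i+1}$ or $\aG_i\mapsfrom\aG_{i+1}$, together with the fact that representatives are nonzero at every index in the interval). The birth condition (item~\ref{itm:dfn-rep-cls-birth} of Definition~\ref{dfn:rep-cls}) guarantees that this one-dimensional piece is genuinely new at $b$ — it is not in the image of $\psi_{b-1}$ when that map is a forward injection, and it is the kernel element when $\psi_{b-1}$ is a backward surjection — so the summand cannot be extended to the left past $b$; symmetrically the death condition (item~\ref{itm:dfn-rep-cls-death}) prevents extension to the right past $\pi(b)$. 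Hence each pair $[b,\pi(b)]$ contributes at least one interval summand with that exact support, so $\Pers(\Mcal)$ contains $\Set{[b,\pi(b)] \given b\in\pinds(\Mcal)}$ as a sub-multiset.

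To finish, I would invoke a counting argument to upgrade this containment to equality. Since $\pi$ is a bijection $\pinds(\Mcal)\to\ninds(\Mcal)$ and $|\pinds(\Mcal)|=|\ninds(\Mcal)|$, the proposed multiset has exactly $|\pinds(\Mcal)|$ intervals. On the other hand, the total number of intervals in $\Pers(\Mcal)$ equals $|\pinds(\Mcal)|$ as well: every interval has a well-defined left endpoint lying in $\pinds(\Mcal)$ and a right endpoint lying in $\ninds(\Mcal)$ (this is the content of the remark after Definition~\ref{dfn:pos-neg-inds}, using that $\Mcal$ is elementary so that left endpoints correspond exactly to birth indices and right endpoints, including the $\rank V_\filtcnt$ copies of $m$, to death indices). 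A multiset of size $N$ containing another multiset of size $N$ must equal it, so $\Pers(\Mcal)=\Set{[b,\pi(b)] \given b\in\pinds(\Mcal)}$.

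I expect the main obstacle to be the rigorous version of the second paragraph: arguing that a set of representatives really does pin down an interval summand of the claimed support rather than merely producing a subquotient that could, a priori, split into several shorter intervals in the decomposition. The cleanest route is probably to induct on the module length or to peel off one interval summand at a time — use the representatives for a pair $[b,\pi(b)]$ with $b$ minimal to split $\Mcal\cong\Ical^{[b,\pi(b)]}\oplus\Mcal'$ for an elementary module $\Mcal'$ of smaller total dimension whose positive/negative indices and restricted representatives satisfy the same hypotheses with $\pi$ restricted, then apply the inductive hypothesis. Verifying that the complement $\Mcal'$ is again elementary and that the remaining representatives still meet the birth/death conditions relative to $\Mcal'$ is the delicate bookkeeping; everything else is routine.
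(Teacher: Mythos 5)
There is a genuine gap at the heart of your second paragraph: the claim that a set of representatives for a single interval $[b,\pi(b)]$ forces that interval to occur as a summand of $\Pers(\Mcal)$ is false. Consider the elementary module $\Mcal: 0\to V_1\to V_2\to V_3\to 0$ over $\Zbb_2$ with all maps forward, $V_1=\langle a\rangle$, $V_2=\langle a,b\rangle$, $V_3=\langle c\rangle$, where $\psi_1(a)=a$, $\psi_2(a)=c$, $\psi_2(b)=0$. Every map is of an allowed elementary type, $\pinds(\Mcal)=\Set{1,2}$, $\ninds(\Mcal)=\Set{2,3}$, and the barcode is $\Set{[1,3],[2,2]}$. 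Yet the interval $[2,3]$ admits a full set of representatives in the sense of Definition~\ref{dfn:rep-cls}: take $\aG_2=a+b\notin\img(\psi_1)$ and $\aG_3=c$, which is the non-zero element of $\ker(\psi_3)$. The span of these representatives is a perfectly good interval submodule supported on $[2,3]$, but it is not a direct summand, and $[2,3]\notin\Pers(\Mcal)$. So "representatives for $[b,\pi(b)]$ exist" does not yield "$[b,\pi(b)]$ is a barcode interval," and your containment-plus-counting argument collapses at the containment step. (You flagged this as the "main obstacle," but it is not a technical refinement that is missing --- the interval-by-interval inference you rely on is simply untrue.)

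What rescues the proposition is precisely the global nature of the hypothesis: \emph{all} intervals of the bijection $\pi$ admit representatives simultaneously, and $\pi$ matches positive to negative indices one-to-one. The paper's proof uses this by defining, for every $b$, the interval submodule spanned by its representatives and then proving by induction on the index $i$ (with a case analysis on the type of $\psi_i$) that the collection $\bigSet{\aG^b_i\given b\in\pinds(\Mcal),\ [b,\pi(b)]\ni i}$ is a basis of $\vsp_i$; the birth/death conditions supply linear independence or spanning, and the bijection supplies the cardinality count that turns this into a basis, giving $\Mcal=\bigoplus_b\Ical^{[b,\pi(b)]}$ all at once. Your fallback suggestion of peeling off one summand at a time hides the same difficulty: showing that the span of one interval's representatives is a direct summand (with an elementary complement on which the remaining representatives still satisfy the birth/death conditions) cannot be done from that interval's data alone, as the example above shows, so the "delicate bookkeeping" you defer is in fact the entire content of the proof and must invoke the other intervals of the pairing.
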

\begin{proof}
For each $b\in\pinds(\Mcal)$, 
let $\bigSet{\aG^b_{j}\given j\in[b,\pi(b)]}$ be
a set of representatives for $[b,\pi(b)]$.
Then, define $\Ical^{[b,\pi(b)]}$ as an interval submodule of $\Mcal$ over $[b,\pi(b)]$
such that $\Ical^{[b,\pi(b)]}(j)$ 
is generated by $\aG^b_j$ 
if $j\in[b,\pi(b)]$ and is trivial otherwise,
where $\Ical^{[b,\pi(b)]}(j)$ denotes the $j$-th vector space in $\Ical^{[b,\pi(b)]}$.
We claim that 
$\Mcal=\bigoplus_{b\in\pinds(\Mcal)}\Ical^{[b,\pi(b)]}$,
which implies the proposition.
To prove this, 
suppose that $\Mcal$ is of the form
\[\Mcal: \vsp_0 \leftrightarrowsp{\psi_0} 
\vsp_1 \leftrightarrowsp{\psi_1} 
\cdots \leftrightarrowsp{\psi_{\filtcnt-1}} \vsp_\filtcnt\]
Then, we only need to verify that 
for every $i\in[0,\filtcnt]$, the set
$\bigSet{\aG^b_i\given b\in\pinds(\Mcal)\text{ and }[b,\pi(b)]\ni i}$ 
is a basis of $\vsp_i$.
We prove this by induction on $i$.
For $i=0$, since $\vsp_0=0$, 
$\bigSet{\aG^b_0\given b\in\pinds(\Mcal)\text{ and }[b,\pi(b)]\ni 0}=\emptyset$
is obviously a basis.
So we can assume that for an $i\in[0,\filtcnt-1]$, 
$\bigSet{\aG^b_i\given b\in\pinds(\Mcal)\text{ and }[b,\pi(b)]\ni i}$ 
is a basis of $\vsp_i$.
We have the following cases:
\begin{description}
    \item[$\psi_i$ an isomorphism:]
    In this case,
    $i\not\in\ninds(\Mcal)$ and $i+1\not\in\pinds(\Mcal)$.
    If $\psi_i:\vsp_i\to\vsp_{i+1}$ is forward, then
    $\bigSet{\aG^b_{i+1}\given b\in\pinds(\Mcal)\text{ and }[b,\pi(b)]\ni i+1}
    =\bigSet{\psi_i(\aG^b_{i})\given b\in\pinds(\Mcal)\text{ and }[b,\pi(b)]\ni i}$.
    The elements in
    $\bigSet{\aG^b_{i+1}\given b\in\pinds(\Mcal)\text{ and }[b,\pi(b)]\ni i+1}$
    must then form a basis of $\vsp_{i+1}$
    because $\psi_i$ is an isomorphism.
    The verification for $\psi_i$ being backward is similar.
    
    \item[$\psi_i:\vsp_i\to\vsp_{i+1}$ forward, $\coker(\psi_i)$ non-trivial:]
    In this case,
    $i\not\in\ninds(\Mcal)$ and $i+1\in\pinds(\Mcal)$.
    For each $b\in\pinds(\Mcal)$ such that $[b,\pi(b)]\ni i$,
    $[b,\pi(b)]\ni i+1$ and
    $\aG^b_i\mapsto \aG^b_{i+1}$ by $\psi_i$.
    We then have that elements in 
    $\bigSet{\aG^b_{i+1}=\psi_i(\aG^b_i)
    \given b\in\pinds(\Mcal)\text{ and }[b,\pi(b)]\ni i}$
    are linearly independent 
    because $\psi_i$ is injective.
    Since $\aG^{i+1}_{i+1}\not\in\img(\psi_i)$ 
    by Definition~\ref{dfn:rep-cls},
    $\bigSet{\aG^b_{i+1}\given b\in\pinds(\Mcal)\text{ and }[b,\pi(b)]\ni i+1}=
    \bigSet{\aG^b_{i+1}
    \given b\in\pinds(\Mcal)\text{ and }[b,\pi(b)]\ni i}
    \union\bigSet{\aG^{i+1}_{i+1}}$ must contain linearly independent elements.
    The fact that the cardinality of the set equals $\rank\vsp_{i+1}$ implies that
    it must form a basis of $\vsp_{i+1}$.

    \item[$\psi_i:\vsp_i\to\vsp_{i+1}$ forward, $\ker(\psi_i)$ non-trivial:]
    In this case,
    $i\in\ninds(\Mcal)$ and $i+1\not\in\pinds(\Mcal)$.
    Let $j=\pi\inv(i)$.
    For each $b\in\pinds(\Mcal)$ such that $[b,\pi(b)]\ni i$ and $b\neq j$,
    $[b,\pi(b)]\ni i+1$ and 
    $\aG^b_i\mapsto \aG^b_{i+1}$ by $\psi_i$.
    We then have that
    $\bigSet{\aG^b_{i+1}\given b\in\pinds(\Mcal)\text{ and }[b,\pi(b)]\ni i+1}
    =\bigSet{\psi_i\big(\aG^b_{i}\big)\given 
    b\in\pinds(\Mcal)\text{ and }[b,\pi(b)]\ni i}\setminus
    \bigSet{\psi_i\big(\aG_{i}^j\big)}$.
    Since $\psi_i$ is surjective,
    elements in $\bigSet{\psi_i\big(\aG_{i}^b\big)
    \given b\in\pinds(\Mcal)\text{ and }[b,\pi(b)]\ni i}$
    generate $\vsp_{i+1}$,
    in which $\psi_i\big(\aG_{i}^j\big)=0$ by Definition~\ref{dfn:rep-cls}.
    It follows that 
    $\bigSet{\aG^b_{i+1}\given b\in\pinds(\Mcal)\text{ and }[b,\pi(b)]\ni i+1}$
    forms a basis of $\vsp_{i+1}$
    because it generates $\vsp_{i+1}$
    and its cardinality equals $\rank\vsp_{i+1}$.

    \item[$\psi_i:\vsp_i\leftarrow\vsp_{i+1}$ backward, $\coker(\psi_i)$ non-trivial:]
    In this case,
    $i\in\ninds(\Mcal)$ and $i+1\not\in\pinds(\Mcal)$.
    For each $b\in\pinds(\Mcal)$ such that $[b,\pi(b)]\ni i$ and $\pi(b)\neq i$,
    $[b,\pi(b)]\ni i+1$ and
    $\aG^b_i\mapsfrom \aG^b_{i+1}$ by $\psi_i$.
    We then have that elements in 
    $\bigSet{\aG^b_{i+1}=(\psi_i)\inv(\aG^b_i)
    \given b\in\pinds(\Mcal),\,[b,\pi(b)]\ni i,\,\text{and}\,\pi(b)\neq i}$
    are linearly independent 
    because if they are not,
    then their images under $\psi_i$ are also not,
    which is a contradiction.
    Note that 
    $\bigSet{\aG^b_{i+1}\given b\in\pinds(\Mcal),[b,\pi(b)]\ni i+1}=
    \bigSet{\aG^b_{i+1}
    \given b\in\pinds(\Mcal),\,[b,\pi(b)]\ni i,\,\text{and}\,\pi(b)\neq i}$
    and its cardinality equals $\rank\vsp_{i+1}$,
    so it must form a basis of $\vsp_{i+1}$.
    
    \item[$\psi_i:\vsp_i\leftarrow\vsp_{i+1}$ backward, $\ker(\psi_i)$ non-trivial:]
    In this case,
    $i\not\in\ninds(\Mcal)$ and $i+1\in\pinds(\Mcal)$.
    For each $b\in\pinds(\Mcal)$ such that $[b,\pi(b)]\ni i$,
    $[b,\pi(b)]\ni i+1$ and
    $\aG^b_i\mapsfrom \aG^b_{i+1}$ by $\psi_i$.
    We then have that elements in 
    $\bigSet{\aG^b_{i+1}\in(\psi_i)\inv(\aG^b_i)
    \given b\in\pinds(\Mcal)\text{ and }[b,\pi(b)]\ni i}$
    are linearly independent 
    because their images under $\psi_i$ are.
    We also have that there is no non-trivial linear combination of
    $\bigSet{\aG^b_{i+1}
    \given b\in\pinds(\Mcal)\text{ and }[b,\pi(b)]\ni i}$
    falling in $\ker(\psi_i)$ because otherwise
    their images under $\psi_i$ would not be linearly 
    independent.
    Since $\aG_{i+1}^{i+1}$ is the non-zero element in $\ker(\psi_i)$
    by Definition~\ref{dfn:rep-cls},
    we have that 
    $\bigSet{\aG^b_{i+1}\given b\in\pinds(\Mcal)\text{ and }[b,\pi(b)]\ni i+1}=
    \bigSet{\aG^b_{i+1}
    \given b\in\pinds(\Mcal)\text{ and }[b,\pi(b)]\ni i}
    \union\bigSet{\aG_{i+1}^{i+1}}$
    contains linearly independent elements.
    Then, it must form a basis of $\vsp_{i+1}$
    because its cardinality equals $\rank\vsp_{i+1}$,
    \qedhere
\end{description}
\end{proof}

Now we present several propositions leading to 
our conclusion (Theorem~\ref{thm:0-zigzag-alg-correct}).
Specifically, 
Proposition~\ref{prop:path-induce-partial-rep} states that a certain path
in $\bcgraph(\Fcal)$ induces a set of partial 0-representatives.
Proposition~\ref{prop:0-zigzag-alg-invari} lists some invariants
of Algorithm~\ref{alg:0-zigzag}.
Proposition~\ref{prop:path-induce-partial-rep} and~\ref{prop:0-zigzag-alg-invari} 
support the proof of Proposition~\ref{prop:has-rep},
which together with Proposition~\ref{prop:pn-paring-w-rep} 
implies Theorem~\ref{thm:0-zigzag-alg-correct}.

From now on, $G$ and $\Fcal$ always denote the input to Algorithm~\ref{alg:0-zigzag}.
Since each node in a barcode graph
represents a connected component,
we also interpret nodes in a barcode graph as 0-th homology classes
throughout the paper.
Moreover, 
a path in a barcode graph from a node $v$ to a node $u$ 
is said to be {\it within level $j$ and $i$} 
if for each node on the path, its level $\ell$
satisfies $j\leq \ell\leq i$;
we denote such a path as $\pathwl{v}{u}{j}{i}$.

\begin{proposition}\label{prop:path-induce-partial-rep}
Let $v$ be a level-$j$ node and $u$ be a level-$i$ node in $\bcgraph(\Fcal)$ 
such that $j<i$ and there is a path $\pathwl{v}{u}{j}{i}$ in $\bcgraph(\Fcal)$.
Then, there is
a set of partial 0-representatives $\Set{\aG_k\in\Hm_0(G_k)\given k\in[j,i]}$
for the interval $[j,i]$
with $\aG_j=v$ and $\aG_i=u$.
\end{proposition}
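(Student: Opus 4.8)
The plan is to split the argument into two parts: first upgrade the combinatorial path to a ``thread'' of $\Hm_0$-classes that respects the structure maps of $\Hm_0(\Fcal)$, and then straighten that thread into an honest indexed family with exactly one class per level of $[j,i]$.

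\textbf{From the path to a walk in $\Hm_0(\Fcal)$.} Write the path as $\pathwl{v}{u}{j}{i}=(w_0,w_1,\dots,w_L)$ with $w_0=v$ and $w_L=u$, and let $\lambda_\ell$ be the level of $w_\ell$, so $\lambda_0=j$, $\lambda_L=i$, $|\lambda_{\ell+1}-\lambda_\ell|=1$, and every $\lambda_\ell\in[j,i]$; in particular the level sequence visits every integer of $[j,i]$. Viewing each $w_\ell$ as the class in $\Hm_0(G_{\lambda_\ell})$ of its component, I would check — splitting into the four cases determined by whether $\lambda_{\ell+1}=\lambda_\ell\pm 1$ and whether the inclusion between $G_{\min(\lambda_\ell,\lambda_{\ell+1})}$ and $G_{\max(\lambda_\ell,\lambda_{\ell+1})}$ is forward or backward — that the defining condition of an edge of $\bcgraph(\Fcal)$ in Definition~\ref{dfn:bcgraph} (one component maps onto the other) says exactly that $w_\ell\mapsto w_{\ell+1}$ or $w_\ell\mapsfrom w_{\ell+1}$ under the corresponding map of $\Hm_0(\Fcal)$. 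Hence $(w_0,\dots,w_L)$ is a walk through $\Hm_0(\Fcal)$ whose consecutive terms are related by the structure maps, which starts at $v$ on level $j$, ends at $u$ on level $i$, and visits only levels in $[j,i]$.

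\textbf{Straightening the walk, by induction on $L$.} If $L=i-j$ the walk is strictly monotone in level and $\aG_k:=w_{k-j}$ is already the desired family. If $L>i-j$, I would first excise any repeated $(\text{level},\text{class})$ pair (delete the steps strictly between its two occurrences): this preserves both endpoints, their classes, and the fact that every level of $[j,i]$ is visited, and it strictly shortens the walk, so I may assume no such pair recurs. A walk of length $>i-j$ then has an interior local extremum in level, say a peak on level $m+1$ with $j\le m<m+1\le i$ (a valley being handled by the symmetric argument). The two flanking terms $w_{\ell-1},w_{\ell+1}$ lie on level $m$ and are distinct (no recurring pair), yet $\morph{\Fcal}{m}{0}(w_{\ell-1})=w_\ell=\morph{\Fcal}{m}{0}(w_{\ell+1})$, so $w_{\ell-1}+w_{\ell+1}$ is a nonzero element of $\ker\morph{\Fcal}{m}{0}$; since $\Hm_0(\Fcal)$ is elementary, this kernel is one-dimensional.

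\textbf{The main obstacle.} The hard step is turning this one-dimensional kernel into a genuine shortening. Near the peak the walk is forced to read $\dots,w_{\ell-2},w_{\ell-1},w_\ell,w_{\ell+1},w_{\ell+2},\dots$ on levels $\dots,m-1,m,m+1,m,m-1,\dots$ — a step back up to level $m+1$ from $w_{\ell-1}$ or $w_{\ell+1}$ would force the next term to be $w_\ell$, a recurring pair, and the endpoint constraints rule out $\ell\pm1$ being $0$ or $L$ — and I would use the rank-$1$ kernel/cokernel constraints on $\morph{\Fcal}{m-1}{0}$ and $\morph{\Fcal}{m}{0}$, exactly the bookkeeping carried out in the proof of Proposition~\ref{prop:pn-paring-w-rep}, to replace this stretch by a shorter one staying on levels $\le m$, transporting the kernel element $w_{\ell-1}+w_{\ell+1}$ into the correspondingly modified remainder of the walk. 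The resulting walk is strictly shorter, still runs from $v$ on level $j$ to $u$ on level $i$ inside $[j,i]$, and the induction hypothesis applies. Taking $\Mcal=\Hm_0(\Fcal)$, the partial representatives so produced are precisely the asserted set $\Set{\aG_k\in\Hm_0(G_k)\given k\in[j,i]}$ with $\aG_j=v$ and $\aG_i=u$.
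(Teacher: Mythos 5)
Your first step—reading the path as a sequence of single-component classes linked by the structure maps of $\Hm_0(\Fcal)$—is fine and matches the setup, but the entire content of the statement sits in the ``straightening'' induction, and that step does not go through as proposed. The induction maintains a walk in $\bcgraph(\Fcal)$, i.e.\ a sequence of single nodes, and the key claim is that the stretch around a peak can be replaced by a shorter one staying on levels $\le m$, producing a strictly shorter walk from $v$ to $u$ inside $[j,i]$. No such replacement need exist. Take $i=j+2$ and a stretch of $\bcgraph(\Fcal)$ that is the simple path $v,a,b,c,d,e,u$ at levels $j,\,j{+}1,\,j{+}2,\,j{+}1,\,j,\,j{+}1,\,j{+}2$: here $G_j\bakincto G_{j+1}$ deletes an edge that splits the component $d$ into $c$ and $e$ while $v$ persists as $a$, and $G_{j+1}\incto G_{j+2}$ adds an edge merging $a$ and $c$ into $b$ while $e$ persists as $u$. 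This path is the only walk from $v$ to $u$ within levels $[j,i]$, so there is no shorter walk to recurse on; worse, no indexed family of \emph{single} nodes can be a set of partial $0$-representatives with $\aG_j=v$ and $\aG_i=u$, since the conditions $\morph{\Fcal}{j}{0}(\aG_{j+1})=v$ and $\morph{\Fcal}{j+1}{0}(\aG_{j+1})=u$ force $\aG_{j+1}=a+c+e$, a sum of three components. So any correct argument must allow the intermediate classes to be sums of several nodes, and once you allow that, the notions your induction relies on (``peak'', excision of repeated pairs, the replacement of a stretch) are no longer defined, and deferring the mechanism to ``the bookkeeping in Proposition~\ref{prop:pn-paring-w-rep}'' does not supply it: that proof manipulates an already-given family of representatives and contains no device for shortening or rerouting a walk. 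You flagged this as the main obstacle yourself; it is precisely the part that is missing, and the literal shortening claim is false.

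For comparison, the paper's proof avoids any induction: it passes to a simple path and defines, for each intermediate level $k$, the class $\aG_k$ as the sum of \emph{all} level-$k$ nodes on the path whose two neighbours on the path lie at different levels (the ``pass-through'' nodes), with $\aG_j=v$ and $\aG_i=u$. The turning nodes (both neighbours one level up, or both one level down) are omitted, and when one checks $\aG_k\mapsto\aG_{k+1}$ or $\aG_k\mapsfrom\aG_{k+1}$, the two path-edges incident to each turning node contribute the same image twice, which cancels over $\Zbb_2$; in the example above this yields exactly $\aG_{j+1}=a+c+e$. If you want to salvage your plan, the fix is essentially to abandon the walk-shortening and prove this direct cancellation statement instead.
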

\begin{proof}
We can assume that $\pathwl{v}{u}{j}{i}$ is a simple path because if it were not
we could always find one.
For each $k\in[j+1,i-1]$, 
let $w_1,\ldots,w_r$ be all the level-$k$ nodes on $\pathwl{v}{u}{j}{i}$
whose adjacent nodes on $\pathwl{v}{u}{j}{i}$ are at different levels.
Then, let 
$\aG_k=\sum_{\ell=1}^r w_\ell$.
Also, let $\aG_j=v$ and $\aG_i=u$.
It can be verified that $\Set{\aG_k\given k\in[j,i]}$
is a set of partial 0-representatives for $[j,i]$.
See Figure~\ref{fig:path} for an example of a simple path 
$\pathwl{\tilde{v}_2}{u_2}{10}{13}$ (the dashed one)
in a barcode graph, where
the solid nodes 
contribute to the induced partial 0-representatives
and the hollow nodes are excluded.
\end{proof}

For
an $i$ with $0\leq i\leq\filtcnt$,
we define the {\it prefix} $\Fcal^i$ of $\Fcal$ as the filtration 
$\Fcal^i:G_0\leftrightarrow\cdots\leftrightarrow G_i$
and observe that $\bcgraph(\Fcal^i)$ is the subgraph of $\bcgraph(\Fcal)$
induced by nodes at levels less than or equal to $i$.
We call level-$i$ nodes of $\bcgraph(\Fcal^i)$ as {\it leaves}
and do not distinguish 
leaves in $\bcforest_i$ and $\bcgraph(\Fcal^i)$
because they bijectively map to each other.
It should be clear from the context though which graph or forest 
a particular leaf is in.

\begin{figure}
  \centering
  \includegraphics[width=0.85\linewidth]{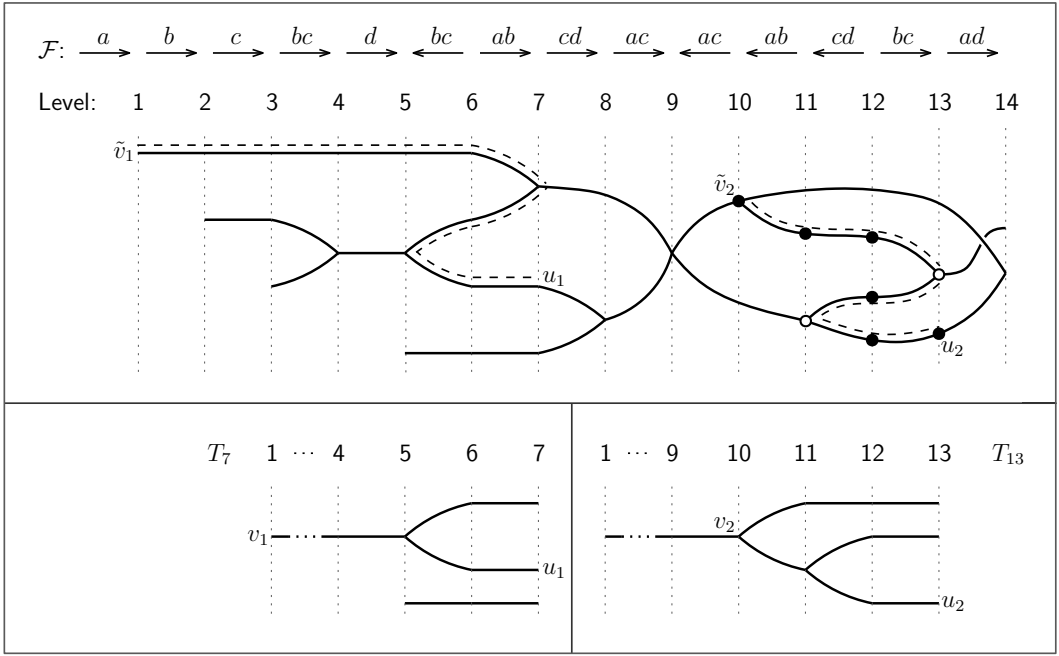}
  \vspace{1em}
  \caption{Illustration of invariants of Proposition~\ref{prop:0-zigzag-alg-invari}.
  The top part contains a barcode graph with its filtration given
  ($a$, $b$, $c$, and $d$
  are vertices of the complex). The bottom contains two barcode forests.}
  \label{fig:path}
\end{figure}

\begin{proposition}\label{prop:0-zigzag-alg-invari}
For each $i=0,\ldots,\filtcnt$, 
Algorithm~\ref{alg:0-zigzag} maintains the following invariants:
\begin{enumerate}
    \item\label{itm:tree-graph-bij-invari}
    There is a bijection $\eta$ from 
    trees in $\bcforest_i$ to 
    connected components in $\bcgraph(\Fcal^i)$ containing leaves
    such that a leaf $u$ is in a tree $\UG$ of $\bcforest_i$ if and only if
    $u$ is in $\eta(\UG)$. 
    
    \item\label{itm:ances-leaf-path-invari}
    For each leaf $u$ in $\bcforest_i$
    and each ancestor of $u$ at a level $j$,
    there is a path $\pathwl{\tilde{v}}{u}{j}{i}$
    in $\bcgraph(\Fcal)$
    where $\tilde{v}$ is a level-$j$ node.
    

    \item\label{itm:split-ances-leaf-path-invari}
    For each leaf $u$ in $\bcforest_i$
    and each splitting ancestor of $u$ at a level $j$,
    let $\tilde{v}$ be the unique level-$j$ splitting node in $\bcgraph(\Fcal)$.
    Then, there is a path $\pathwl{\tilde{v}}{u}{j}{i}$
    in $\bcgraph(\Fcal)$.

\end{enumerate}
\end{proposition}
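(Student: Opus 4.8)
The plan is to prove all three invariants simultaneously by induction on $i$, following the case analysis of Algorithm~\ref{alg:0-zigzag}. The base case $i=0$ is trivial since $\bcforest_0$ is empty and $\bcgraph(\Fcal^0)$ has no leaves (recall $G_0=\emptyset$). For the inductive step, I would assume the three invariants hold for $\bcforest_i$ and $\bcgraph(\Fcal^i)$, then examine how $\bcforest_{i+1}$ is formed from $\bcforest_i$ and verify each invariant survives. The formation happens in two stages: first copying the level-$(i+1)$ nodes of $\bcgraph(\Fcal)$ and their edges to level-$i$ nodes into $\bcforest_i$, and then possibly modifying (deleting a path in a departure, gluing two paths in a merge). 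I would walk through the five events --- Entrance, Split, Departure, Merge, No-change --- checking each invariant.

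\textbf{For invariant~\ref{itm:tree-graph-bij-invari}} (the bijection $\eta$ between trees of $\bcforest_i$ and leaf-containing components of $\bcgraph(\Fcal^i)$), I would track how both sides change. An Entrance adds a new isolated leaf to $\bcforest_{i+1}$ and correspondingly a new singleton component to $\bcgraph(\Fcal^{i+1})$. A Split keeps the number of trees unchanged and, since splitting a component's node into two children still leaves them in one tree, matches the fact that in the barcode graph the component only branches without disconnecting. A Departure removes a leaf; if the departing leaf was the only leaf in its tree, we delete the whole path back to the root and the corresponding component of $\bcgraph(\Fcal^{i+1})$ no longer contains a leaf, so it drops out of the domain of $\eta$; otherwise the tree survives with one fewer leaf. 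A Merge combines two trees into one (if $u_1,u_2$ were in different trees) exactly when the new edge in the barcode graph joins two components, or leaves the tree count unchanged (if they were already in the same tree) matching a cycle-closing edge. No-change is immediate. The key point is that a leaf $u$ lands in tree $\UG$ iff $u \in \eta(\UG)$, which is preserved because the copying step attaches each new level-$(i+1)$ leaf into the tree containing the level-$i$ node(s) it connects to, mirroring the component structure.

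\textbf{For invariants~\ref{itm:ances-leaf-path-invari} and~\ref{itm:split-ances-leaf-path-invari}} (existence of level-bounded paths $\pathwl{\tilde v}{u}{j}{i+1}$ in $\bcgraph(\Fcal)$ from ancestors, respectively splitting ancestors, down to leaves), I would argue as follows. After the copying step, a new leaf $u$ at level $i+1$ connects by a tree edge to some level-$i$ node $w$, which by the correspondence is a leaf of $\bcforest_i$; by the inductive hypothesis every ancestor $\tilde v$ of $w$ at level $j$ has a path $\pathwl{\tilde v}{w}{j}{i}$ in $\bcgraph(\Fcal)$, and I extend it by the edge $w\leftrightarrow u$ (which exists in $\bcgraph(\Fcal)$ by the copying) to get $\pathwl{\tilde v}{u}{j}{i+1}$, staying within levels $j$ and $i+1$. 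The ancestors of $u$ in $\bcforest_{i+1}$ are precisely the ancestors of $w$ together with $w$ itself, so invariant~\ref{itm:ances-leaf-path-invari} holds. For invariant~\ref{itm:split-ances-leaf-path-invari}, a Split at level $i$ creates a new splitting node; the uniqueness of the level-$i$ splitting node in $\bcgraph(\Fcal)$ (established in the paragraph opening Section~\ref{sec:0-zigzag-proof}) lets us identify $\tilde v$, and the two new children are reached from $\tilde v$ by the two copied edges emanating from the splitting component. The genuinely delicate cases are the Merge event (where two paths are glued, so the set of ancestors of a leaf below the glued node gains new members --- I must produce paths in $\bcgraph(\Fcal)$ to these new ancestors, which I do by concatenating a path up one glued branch with a path down the other through the merge node, using Proposition~\ref{prop:path-induce-partial-rep}'s style of simple-path extraction to stay within the level band) and the Departure event (where deleting a path could in principle strand an ancestor --- but since we delete exactly the path from the departing leaf up to $v$, no surviving leaf had any deleted node as an ancestor, so nothing is violated).

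\textbf{The main obstacle} will be the Merge event's bookkeeping for invariant~\ref{itm:split-ances-leaf-path-invari}: after gluing, a leaf in one former subtree acquires splitting ancestors that used to live in the other subtree, and I need the corresponding unique splitting nodes in $\bcgraph(\Fcal)$ to be reachable by an appropriately level-bounded path. I expect to handle this by noting that the glued level-$j$ ancestor corresponds (via $\eta$ and the inductive hypotheses) to a node in $\bcgraph(\Fcal^i)$ from which both merging leaves are reachable within levels $j$ and $i$, then prepending the merge edge at level $i+1$; care is needed that the highest common ancestor case versus the different-trees case produce the index $j{+}1$ versus $j$ exactly as the algorithm records, but that offset affects only the interval endpoints in Proposition~\ref{prop:has-rep}, not the path invariants themselves. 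Everything else is routine diagram-chasing through the event definitions.
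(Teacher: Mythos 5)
Your overall plan (induction on $i$, case analysis over the five events, routine handling of entrance/split/departure/no-change, extending paths across the copied level-$(i{+}1)$ edges) matches the paper's proof, and your treatment of invariants~\ref{itm:tree-graph-bij-invari} and~\ref{itm:ances-leaf-path-invari} is fine. The genuine gap is in the merge case of invariant~\ref{itm:split-ances-leaf-path-invari}. Write $u_1,u_2$ for the two merging leaves of $\bcforest_i$ and $z$ for the new level-$(i{+}1)$ leaf adjacent to both. The leaves that acquire new splitting ancestors are not only $z$ but also any leaf $u\neq z$ whose parent $w$ (a leaf of $\bcforest_i$) is a proper descendant of a glued node: if the glued ancestor $v$ at level $j$ is the identification of $v'$ (splitting, an ancestor of $u_2$, say) with $v''$ (not splitting, an ancestor of $w$), then invariant~\ref{itm:split-ances-leaf-path-invari} at step $i+1$ demands a path in $\bcgraph(\Fcal)$ from the \emph{unique} level-$j$ splitting node $\tilde v$ to $u$. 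Your construction---descend from $\tilde v$ to $u_2$ by induction, cross through the merge node---only reaches $z$ or $u_1$; it says nothing about how to continue to $w$ and hence to $u$, and the induction hypothesis cannot be invoked naively there: $w$'s level-$j$ ancestor $v''$ is not splitting, so invariant~\ref{itm:ances-leaf-path-invari} only gives a path from \emph{some} level-$j$ node, not from $\tilde v$. The paper's proof supplies exactly the missing ingredient: let $x$ be the highest glued node on the path from $v$ to $u$ in $\bcforest_{i+1}$; its preimage $x'$ in $\bcforest_i$ that is an ancestor of $w$ must be a \emph{splitting} node, since its child toward $w$ is not glued (maximality of $x$) while its child toward $u_1$ is glued, so they are distinct children. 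Invariant~\ref{itm:split-ances-leaf-path-invari} in $\bcforest_i$ applied to $x'$ then yields paths from the unique splitting node $\tilde x$ at that level to both $u_1$ and $w$, and the required path is the concatenation $\tilde v \rightsquigarrow u_2$, the two merge edges through $z$, $u_1 \rightsquigarrow \tilde x \rightsquigarrow w$, followed by the edge from $w$ to $u$. Without this ``highest glued node is splitting'' step your merge case establishes the invariant only for the merge leaf itself.

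A secondary problem: your fallback claim that the glued level-$j$ ancestor corresponds to a node of $\bcgraph(\Fcal^i)$ from which \emph{both} merging leaves are reachable within levels $j$ and $i$ is false in the different-trees subcase of the merge, since there $u_1$ and $u_2$ lie in different connected components of $\bcgraph(\Fcal^i)$ (by your own invariant~\ref{itm:tree-graph-bij-invari}); any crossing between the two branches must use the level-$(i{+}1)$ merge node, and even then it only delivers the merging leaves, not an arbitrary leaf under a glued node. Also, Proposition~\ref{prop:path-induce-partial-rep} extracts partial representatives \emph{from} a given path; it plays no role in constructing the paths needed here.
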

\begin{remark}
See Figure~\ref{fig:path} for examples of invariant~\ref{itm:ances-leaf-path-invari}
and~\ref{itm:split-ances-leaf-path-invari}.
In the figure,
$v_1$ is a level-1 non-splitting ancestor of $u_1$ in $T_7$
and $\tilde{v}_1$ is a level-1 node in the barcode graph;
$v_2$ is a level-10 splitting ancestor of $u_2$ in $T_{13}$
and $\tilde{v}_2$ is the unique level-10 splitting node in the barcode graph. 
The paths $\pathwl{\tilde{v}_1}{u_1}{1}{7}$ 
and $\pathwl{\tilde{v}_2}{u_2}{10}{13}$ are marked with dashes.
\end{remark}
\begin{proof}
We only verify invariant~\ref{itm:split-ances-leaf-path-invari} as 
the verification for invariant~\ref{itm:ances-leaf-path-invari}
is similar but easier and invariant~\ref{itm:tree-graph-bij-invari}
is straightforward.
The verification is by induction.
When $i=0$, invariant~\ref{itm:split-ances-leaf-path-invari} trivially holds.
Now suppose that invariant~\ref{itm:split-ances-leaf-path-invari} is true 
for an $i\in[0,\filtcnt-1]$.
For the no-change, entrance, and split event in Algorithm~\ref{alg:0-zigzag},
it is not hard to see that invariant~\ref{itm:split-ances-leaf-path-invari}
still holds for $i+1$.
For the departure event, 
because we are only deleting a path from $\bcforest_i$ to form $\bcforest_{i+1}$,
invariant~\ref{itm:split-ances-leaf-path-invari}
also holds for $i+1$.
For the merge event,
let 
$u$ be a leaf in $\bcforest_{i+1}$,
$v$ be a splitting ancestor of $u$ at a level $j$,
and $\tilde{v}$ be the unique splitting node in $\bcgraph(\Fcal)$ at level $j$.
The node $v$ may correspond to one or two nodes in $\bcforest_i$,
in which only one is splitting, and let $v'$ be the splitting one.
Note that $u$'s parent 
may correspond to one or two nodes in $\bcforest_i$,
and we let $W$ be the set of nodes in $\bcforest_i$
that $u$'s parent corresponds to.
If $v'$ is an ancestor of a node $w\in W$
in $\bcforest_i$, then by the assumption,
there must be a path $\pathwl{\tilde{v}}{w}{j}{i}$
in $\bcgraph(\Fcal)$.
From this path,
we can derive a path $\pathwl{\tilde{v}}{u}{j}{i+1}$
in $\bcgraph(\Fcal)$.
If $v'$ is not an ancestor of any node of $W$ in $\bcforest_i$, 
the fact that $v$ is an ancestor of $u$'s parent in $\bcforest_{i+1}$ implies that
there must be an ancestor $v''$ of a node $w\in W$ in $\bcforest_{i}$
which $v$ corresponds to.
So we have that $v$ is a gluing of two nodes from $\bcforest_{i}$.
Note that $u$'s parent must not be a glued node in $\bcforest_{i+1}$
because otherwise $v'$ would have been an ancestor 
of a node of $W$ in $\bcforest_i$;
see Figure~\ref{fig:path_exist}
where $z_1$ and $z_2$ are the two level-$i$  nodes glued together.
Let $x$ be the highest one among the nodes on the path from $v$ to $u$
that are glued in iteration~$i$.
We have that $x$ must correspond to a node $x'$ in $\bcforest_i$ 
which is an ancestor of $w$.
Recall that $z_1,z_2$ are the two leaves in $\bcforest_{i}$ which are glued,
and let $z_3$ be the child of the glued node of $z_1,z_2$ in $\bcforest_{i+1}$,
as shown in Figure~\ref{fig:path_exist}.
From the figure,
we have that $x'$ must be splitting because 
one child of $x'$ (which is not glued) descends down to $w$
and the other child of $x'$ (which is glued) descends down to $z_1$.
The fact that $v'$ is an ancestor of $z_2$ in $\bcforest_i$ implies that 
there is a path $\pathwl{\tilde{v}}{z_2}{j}{i}$
in $\bcgraph(\Fcal)$.
Let $\tilde{x}$ be the unique splitting node in $\bcgraph(\Fcal)$ at the same level with $x'$;
then, $z_1$ and $w$ being descendants of $x'$ in $\bcforest_i$ 
implies that there are 
paths $\pathwl{z_1}{\tilde{x}}{j}{i}$ and $\pathwl{\tilde{x}}{w}{j}{i}$
in $\bcgraph(\Fcal)$.
Now
we derive a path $\pathwl{\tilde{v}}{u}{j}{i+1}$
in $\bcgraph(\Fcal)$
by concatenating the following paths and edges:
$\pathwl{\tilde{v}}{z_2}{j}{i}$, 
$\bar{z_2z_3}$,
$\bar{z_3z_1}$,
$\pathwl{z_1}{\tilde{x}}{j}{i}$,
$\pathwl{\tilde{x}}{w}{j}{i}$,
$\bar{wu}$.
\end{proof}

\begin{figure}
  \centering
  \includegraphics[width=0.82\linewidth]{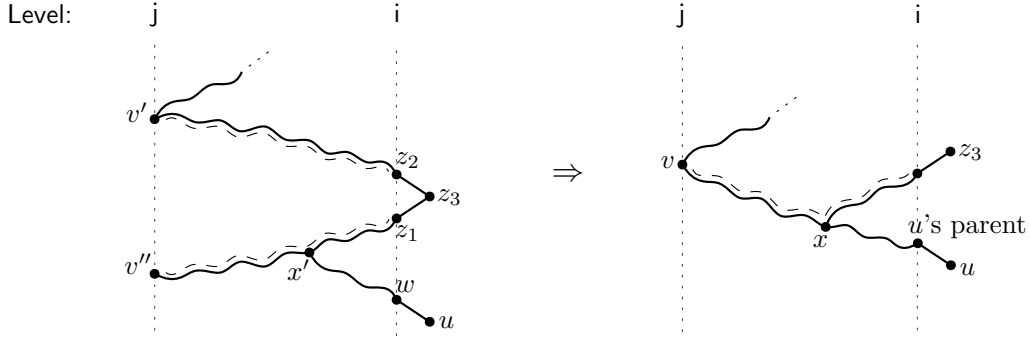}
  \vspace{1em}
  \caption{Illustration of parts of $T_{i+1}$ 
  for the proof of Proposition~\ref{prop:0-zigzag-alg-invari}, 
  where the left one is before the path gluing
  and the right one is after.
  Note that the part between level $j$ and $i$ for the left tree
  actually belongs to $T_i$. The paths with dashed marks are the glued ones
  (before and after), in which $v'$ and $v''$ are identified as $v$, 
  and $x'$ is identified as $x$ with another node.
  }
  \label{fig:path_exist}
\end{figure}

\begin{proposition}\label{prop:has-rep}
Each interval produced by Algorithm~\ref{alg:0-zigzag} admits a set of 0-representatives.
\end{proposition}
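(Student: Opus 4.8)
The plan is to go through the five event types of Algorithm~\ref{alg:0-zigzag} (plus the final clean-up at level $\filtcnt$) and, for each interval $[b,d]$ that the algorithm outputs, exhibit an explicit set of $0$-representatives $\Set{\aG_k\in\Hm_0(G_k)\given k\in[b,d]}$ in the sense of Definition~\ref{dfn:rep-cls}. The key realization is that every interval is produced either at a merge event, a departure event, or at the end, and that in each of these cases the pair of relevant barcode-forest nodes (a root/ancestor $v$ at level $j$ and a leaf $u$ at level $i$) is connected, by invariant~\ref{itm:ances-leaf-path-invari} or~\ref{itm:split-ances-leaf-path-invari} of Proposition~\ref{prop:0-zigzag-alg-invari}, to a corresponding level-$j$ node $\tilde v$ of $\bcgraph(\Fcal)$ via a path $\pathwl{\tilde v}{u}{j}{i}$. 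Proposition~\ref{prop:path-induce-partial-rep} then turns that path into a set of \emph{partial} $0$-representatives over $[j,i]$; it remains only to check the two boundary conditions of Definition~\ref{dfn:rep-cls} at the birth index $b$ and the death index $d$, which is where the ``$+1$'' shifts in the algorithm come from.

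Concretely, I would organize the cases as follows. First, the \textbf{departure} event producing $u$ at level $i$: if $u$ has no splitting ancestor, the algorithm outputs $[j,i]$ where $j$ is the level of the root $v$ of $u$; by invariant~\ref{itm:ances-leaf-path-invari} there is a path $\pathwl{\tilde v}{u}{j}{i}$, giving partial representatives with $\aG_j=\tilde v$ and $\aG_i=u$. The node $\tilde v$ being the image of a root means the component it represents is ``new'' at level $j$ — either $G_{j-1}\incto G_j$ adds the vertex creating it, or $G_{j-1}\bakincto G_j$ deletes an edge splitting it off — so condition~\ref{itm:dfn-rep-cls-birth} holds (and one picks the right representative of the split so that $\aG_j$ is the kernel element); likewise $u$ departing means $G_i\bakincto G_{i+1}$ deletes the vertex of that component, so $\aG_i$ witnesses a non-trivial cokernel, giving condition~\ref{itm:dfn-rep-cls-death}. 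If instead $u$ has a highest splitting ancestor $v$ at level $j$, the algorithm outputs $[j+1,i]$; by invariant~\ref{itm:split-ances-leaf-path-invari} there is a path $\pathwl{\tilde v}{u}{j}{i}$ with $\tilde v$ the splitting node, and I would \emph{truncate} this path at level $j+1$: after a split the two ``halves'' at level $j+1$ are distinct classes, and taking $\aG_{j+1}$ to be the half heading toward $u$ (rather than the full sum $\tilde v$) gives a representative not in the image of the forward map $\Hm_0(G_j)\to\Hm_0(G_{j+1})$, which is exactly condition~\ref{itm:dfn-rep-cls-birth}; the death end is handled as before. The \textbf{merge} event producing $u_1,u_2$ at level $i$ mapping to a common level-$(i+1)$ node is symmetric: here $d=i$ is the \emph{end} of the interval and the death condition comes from the merge (a forward map with non-trivial kernel, or a backward map with non-trivial cokernel — whichever the inclusion $G_i\leftrightarrow G_{i+1}$ is), with $\aG_i$ taken as the difference of the two merging classes; the birth end uses the higher root (different trees, output $[j,i]$) or the highest common ancestor (same tree, output $[j+1,i]$), again via invariants~\ref{itm:ances-leaf-path-invari} and~\ref{itm:split-ances-leaf-path-invari} and the same truncation trick in the ``$+1$'' case. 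Finally the level-$\filtcnt$ clean-up (roots give $[j,\filtcnt]$, splitting nodes give $[j+1,\filtcnt]$) is the special case $d=\filtcnt$, where condition~\ref{itm:dfn-rep-cls-death} is vacuous and only the birth condition must be checked, exactly as above.

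The main obstacle, and the part deserving the most care, is the ``$+1$'' cases involving a splitting ancestor: one must argue that after the split at level $j$ the chosen partial representative $\aG_{j+1}$ (a single ``side'' of the split, obtained by truncating the path of Proposition~\ref{prop:path-induce-partial-rep}) genuinely lies outside $\img(\Hm_0(G_j)\to\Hm_0(G_{j+1}))$ and that it remains a consistent choice along the whole path to $u$ — i.e. that the truncated path is still a valid path $\pathwl{\cdot}{u}{j+1}{i}$ within levels $j+1$ and $i$, so Proposition~\ref{prop:path-induce-partial-rep} still applies. This is plausible because a splitting node has, by definition, two children at level $j+1$ and the path to $u$ must exit through exactly one of them; the bookkeeping is to check it cannot be forced back below level $j+1$ before reaching $u$, which follows from the highest-splitting-ancestor choice. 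A secondary, purely mechanical point is verifying that in each event the inclusion $G_{b-1}\leftrightarrow G_b$ has the right variance and (co)kernel behavior to match Definition~\ref{dfn:rep-cls} — forward-with-cokernel versus backward-with-kernel at a birth, and dually at a death — which I would tabulate rather than prove line by line.
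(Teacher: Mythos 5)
Your overall architecture (invariants of Proposition~\ref{prop:0-zigzag-alg-invari} plus Proposition~\ref{prop:path-induce-partial-rep}, then check the birth/death conditions of Definition~\ref{dfn:rep-cls} case by case) is the right skeleton, but the ``truncation trick'' you rely on for every ``$+1$'' case is wrong, and this is exactly the case the whole proposition hinges on. A split at level $j$ is always caused by an edge \emph{deletion}, i.e., a backward inclusion $G_j\bakincto G_{j+1}$; the corresponding map in the module is $\Hm_0(G_j)\leftarrow\Hm_0(G_{j+1})$, surjective with one-dimensional kernel, not a forward map with cokernel as you assume. Definition~\ref{dfn:rep-cls} then forces the birth representative $\aG_{j+1}$ to be \emph{the} non-zero element of that kernel, which is the sum $v_1+v_2$ of the two split halves. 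A single ``side'' of the split maps onto the parent component, which is non-zero, so it is not in the kernel; worse, the span of a single side does not even define an interval submodule over $[j+1,i]$ (the restriction of the backward map would have to land in the trivial space at index $j$), so Proposition~\ref{prop:pn-paring-w-rep} cannot be invoked with such a choice. Your secondary worry (keeping the truncated path above level $j+1$) is not the real obstacle.

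The missing idea is to take \emph{two} families of partial $0$-representatives and add them. For the same-tree merge producing $[j+1,i]$, one takes paths $\pathwl{\tilde v}{u_1}{j}{i}$ and $\pathwl{\tilde v}{u_2}{j}{i}$ from the unique level-$j$ splitting node $\tilde v$ of $\bcgraph(\Fcal)$, obtains $\Set{\aG_k}$ and $\Set{\bG_k}$ via Proposition~\ref{prop:path-induce-partial-rep}, and uses $\Set{\aG_k+\bG_k\given k\in[j+1,i]}$: at the death end this is automatically $u_1+u_2$, the kernel element of the forward merge map, and at the birth end one must \emph{prove} that $\aG_{j+1}+\bG_{j+1}\neq 0$, i.e., that the two paths exit $\tilde v$ through different level-$(j+1)$ children. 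That non-vanishing is the substantive step (it uses that $\bcgraph(\Fcal)$ has only one level-$j$ splitting node and that $v$ is the \emph{highest common} ancestor of $u_1,u_2$), and it is entirely absent from your proposal; the same two-path summation is also what rescues the departure-with-splitting-ancestor and end-of-algorithm ``$+1$'' intervals, and (with an entering node on one path) the different-trees merge. As written, your construction produces sets that violate condition~\ref{itm:dfn-rep-cls-birth} whenever the birth comes from a split, so the proof does not go through.
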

\begin{proof}
Suppose that an interval is produced by the merge event at iteration~$i$.
We have the following situations:
\begin{itemize}
    \item 
If the nodes $u_1,u_2$ in this event (see Algorithm~\ref{alg:0-zigzag})
are in the same tree in $\bcforest_i$,
let $v$ be the highest common ancestor of $u_1,u_2$
and note that $v$ is a splitting node at level $j$.
Also note that
$u_1,u_2$ are actually leaves in $\bcforest_i$ and hence
can also be considered as level-$i$ nodes in $\bcgraph(\Fcal)$.
Let $\tilde{v}$ be the unique level-$j$ splitting node in $\bcgraph(\Fcal)$.
By invariant~\ref{itm:split-ances-leaf-path-invari} 
of Proposition~\ref{prop:0-zigzag-alg-invari}
along with Proposition~\ref{prop:path-induce-partial-rep},
there are two sets of partial 0-representatives
$\Set{\aG_k\given k\in[j,i]},\Set{\bG_k\given k\in[j,i]}$ for $[j,i]$ 
with $\aG_j=\tilde{v}$, $\aG_i=u_1$, $\bG_j=\tilde{v}$, and $\bG_i=u_2$.
We claim that $\Set{\aG_k+\bG_k\given k\in[j+1,i]}$
is a set of 0-representatives
for the interval $[j+1,i]$.
To prove this, we first note the following obvious facts: 
{\sf(i)}~$\Set{\aG_k+\bG_k\given k\in[j+1,i]}$
is a set of partial 0-representatives;
{\sf(ii)}~$\aG_{j+1}+\bG_{j+1}\in\ker(\morph{\Fcal}{j}{0})$;
{\sf(iii)}~$\aG_{i}+\bG_{i}$ is the non-zero element in $\ker(\morph{\Fcal}{i}{0})$.
So we only need to show that $\aG_{j+1}+\bG_{j+1}\neq 0$.
Let $v_1,v_2$ be the two level-$(j+1)$ nodes in $\bcgraph(\Fcal)$
connecting to $\tilde{v}$.
Then, $\aG_{j+1}$ equals $v_1$ or $v_2$ and the same for $\bG_{j+1}$.
To see this, 
we first show that $\aG_{j+1}$ can only contain $v_1,v_2$.
For contradiction,
suppose instead that $\aG_{j+1}$ contains
a level-$(j+1)$ node $x$ with $x\neq v_1$, $x\neq v_2$. 
Let $\pathwl{\tilde{v}}{u_1}{j}{i}$ be the simple path 
that induces $\Set{\aG_k\given k\in[j,i]}$ as in
Proposition~\ref{prop:path-induce-partial-rep} and its proof.
Then, $x$ is on the path $\pathwl{\tilde{v}}{u_1}{j}{i}$ and
the two adjacent nodes of $x$ on $\pathwl{\tilde{v}}{u_1}{j}{i}$
are at level $j$ and $j+2$,
in which we let $y$ be the one at level $j$.
Note that $y\neq\tilde{v}$ because 
$x$ is not equal to $v_1$ or $v_2$. 
Since $\pathwl{\tilde{v}}{u_1}{j}{i}$ is within level $j$ and $i$,
$y$ must be adjacent to another level-$(j+1)$ node distinct from $x$
on $\pathwl{\tilde{v}}{u_1}{j}{i}$.
Now we have that $y$ is a level-$j$ splitting node with $y\neq\tilde{v}$,
contradicting the fact that $\bcgraph(\Fcal)$ 
has only one level-$j$ splitting node.
The fact that $\aG_{j+1}$ contains $v_1$ or $v_2$ but not both
can be similarly verified.
To see that $\aG_{j+1}+\bG_{j+1}\neq 0$,
suppose instead that $\aG_{j+1}+\bG_{j+1}= 0$, i.e., $\aG_{j+1}=\bG_{j+1}$, and
without loss of generality they both equal $v_1$. 
Note that we can consider $T_i$ as derived by contracting nodes of $\bcgraph(\Fcal^i)$ 
at the same level\footnote{
  We should further note that 
  this contraction is not done on the entire $\bcgraph(\Fcal^i)$ 
  but rather on connected components of $\bcgraph(\Fcal^i)$ containing leaves.}.
The fact that $\aG_{j+1}=\bG_{j+1}=v_1$
implies that $u_1,u_2$ are descendants
of the same child of $v$ in $T_i$,
contradicting the fact that $v$ is the highest common ancestor of $u_1,u_2$.
So we have that 
$\aG_{j+1}+\bG_{j+1}\neq 0$. 

    \item
If $u_1,u_2$
are in different trees in $\bcforest_i$,
then without loss of generality 
let $u_1$ be the one whose root $v_1$ is at the higher level (i.e., level $j$).
As the root of $u_1$,
the node $v_1$ must be an entering node, and
the connected component of $\bcgraph(\Fcal^i)$ containing $u_1$
must have a single level-$j$ node $\tilde{v}_1$.
Then, by invariant~\ref{itm:ances-leaf-path-invari}
of Proposition~\ref{prop:0-zigzag-alg-invari}
along with Proposition~\ref{prop:path-induce-partial-rep},
there are two sets of partial 0-representatives
$\Set{\aG_k\given k\in[j,i]},\Set{\bG_k\given k\in[j,i]}$ for $[j,i]$ 
with $\aG_j=\tilde{v}_1$, $\aG_i=u_1$, $\bG_j=\tilde{v}_2$, and $\bG_i=u_2$,
where $\tilde{v}_2$ is a level-$j$ node.
We claim that $\Set{\aG_k+\bG_k\given k\in[j,i]}$
is a set of 0-representatives
for the interval $[j,i]$
and the verification is similar to the previous 
case where $u_1$ and $u_2$ are in the same tree.
\end{itemize}

For intervals produced by the departure events and 
at the end of the algorithm,
the existence of 0-representatives can be similarly argued.
\end{proof}

\begin{theorem}\label{thm:0-zigzag-alg-correct}
Algorithm~\ref{alg:0-zigzag} computes the 0-th zigzag barcode
for a given zigzag filtration.
\end{theorem}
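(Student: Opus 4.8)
The plan is to deduce the theorem from Proposition~\ref{prop:pn-paring-w-rep} applied to the elementary module $\Hm_0(\Fcal)$. Let $\Acal$ denote the multiset of intervals output by Algorithm~\ref{alg:0-zigzag}; each such interval is emitted together with a distinguished \emph{birth node} (an entering or splitting node of some $\bcforest_\ell$, or a root of $\bcforest_\filtcnt$) and a distinguished \emph{death iteration} (the iteration of the merge or departure event that emits it, or $\filtcnt$ for the intervals emitted in the terminal step), so these data induce a map $\pi$ from birth endpoints to death endpoints. I would show (i) $\pi$ is a bijection from $\pinds(\Hm_0(\Fcal))$ to $\ninds(\Hm_0(\Fcal))$, (ii) $b\le\pi(b)$ for every $b$, and (iii) each interval $[b,\pi(b)]$ admits a set of $0$-representatives; then $\Pers(\Hm_0(\Fcal))=\Acal$ is immediate. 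Part (iii) is exactly Proposition~\ref{prop:has-rep}, so the real work is (i) and (ii).

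For (i), the key preliminary step is the dictionary between the filtration step at each iteration, the event Algorithm~\ref{alg:0-zigzag} performs, and the contribution to $\pinds$/$\ninds$ of Definition~\ref{dfn:pos-neg-inds}. Since each step inserts or deletes one simplex, exactly one event occurs per iteration, and a direct computation on $\Hm_0$ shows: inserting a vertex at step $i$ is an injection with $1$-dimensional cokernel, is the entrance event, puts $i+1$ into $\pinds(\Hm_0(\Fcal))$, and creates an entering node at level $i+1$; deleting an edge at step $i$ that disconnects a component is a backward surjection with $1$-dimensional kernel, is the split event, puts $i+1$ into $\pinds(\Hm_0(\Fcal))$, and creates a splitting node at level $i$; inserting an edge at step $i$ that joins two components is a forward surjection with $1$-dimensional kernel, is the merge event, and puts $i$ into $\ninds(\Hm_0(\Fcal))$; deleting a vertex at step $i$ is a backward injection with $1$-dimensional cokernel, is the departure event, and puts $i$ into $\ninds(\Hm_0(\Fcal))$; every remaining step is an isomorphism and is the no-change event. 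Using the paired/unpaired tag, I would then argue that each entering node and each splitting node is the birth node of exactly one interval of $\Acal$: at most once, since using it marks it paired; at least once, since an unpaired entering node is a root of $\bcforest_\filtcnt$ and an unpaired splitting node is a splitting node of $\bcforest_\filtcnt$, each handled in the terminal step. Hence the birth endpoints occurring in $\Acal$ are precisely the levels of entering nodes together with one plus the levels of splitting nodes, which by the dictionary is exactly $\pinds(\Hm_0(\Fcal))$, and they are pairwise distinct since entering and splitting nodes sit at levels fixed by distinct filtration steps; in particular $|\Acal|=|\pinds(\Hm_0(\Fcal))|$. On the death side, each merge and each departure event emits exactly one interval whose death endpoint is its iteration index, so the iteration-born death endpoints are exactly the values that merge and departure steps place in $\ninds(\Hm_0(\Fcal))$, each once; the terminal step then emits the remaining $|\Acal|-\bigl(|\ninds(\Hm_0(\Fcal))|-\rank\Hm_0(G_\filtcnt)\bigr)$ intervals, all with death endpoint $\filtcnt$, and since $|\Acal|=|\pinds(\Hm_0(\Fcal))|=|\ninds(\Hm_0(\Fcal))|$ this count equals $\rank\Hm_0(G_\filtcnt)$, matching the $\rank\Hm_0(G_\filtcnt)$ copies of $\filtcnt$ in $\ninds(\Hm_0(\Fcal))$. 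Thus $\pi$ is a bijection onto $\ninds(\Hm_0(\Fcal))$.

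For (ii), I would use that the algorithm produces levels in nondecreasing order: any entering or splitting node consulted in a merge or departure event at iteration $i$ was created at an earlier iteration, so its level $j$ satisfies $j<i$ in the splitting case and $j\le i$ in general, whence the emitted birth endpoint ($j$ or $j+1$) is at most $\pi(b)=i$; for intervals closed in the terminal step the same reasoning bounds the birth endpoint by $\filtcnt$. With (i) and (ii) established, Proposition~\ref{prop:has-rep} gives a set of $0$-representatives for each $[b,\pi(b)]$, so Proposition~\ref{prop:pn-paring-w-rep} yields $\Pers(\Hm_0(\Fcal))=\{[b,\pi(b)]\mid b\in\pinds(\Hm_0(\Fcal))\}=\Acal$, which is the theorem. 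The main obstacle is the bookkeeping inside (i): proving that the pairing carried out by the five events is exhaustive — that every positive index is consumed exactly once and that the number of intervals ending at $\filtcnt$ comes out to $\rank\Hm_0(G_\filtcnt)$ — which requires carefully tracking the effect of the gluing and (virtual) path-deletion operations on the barcode forests. Everything else reduces to the event/index dictionary together with Propositions~\ref{prop:has-rep} and~\ref{prop:pn-paring-w-rep}.
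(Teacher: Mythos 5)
Your proposal is correct and follows essentially the same route as the paper: establish the event/index dictionary, argue that each entering or splitting node is consumed as a birth exactly once (so the output intervals induce a bijection $\pi:\pinds(\Hm_0(\Fcal))\to\ninds(\Hm_0(\Fcal))$ with the terminal-step count matching $\rank\Hm_0(G_\filtcnt)$), and then invoke Proposition~\ref{prop:has-rep} together with Proposition~\ref{prop:pn-paring-w-rep}. Your explicit check of $b\le\pi(b)$ and the paired/unpaired bookkeeping are just slightly more detailed versions of what the paper states tersely, so no substantive difference.
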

\begin{proof}
First, we have the following facts: 
every level-$j$ entering node in $\bcgraph(\Fcal)$
introduces a $j\in\pinds(\Hm_0(\Fcal))$ 
and uniquely corresponds to a level-$j$ root in $\bcforest_i$ for some $i$;
every level-$j$ splitting node in $\bcgraph(\Fcal)$
introduces a $j+1\in\pinds(\Hm_0(\Fcal))$ 
and uniquely corresponds to a level-$j$ splitting node in $\bcforest_i$ for some $i$.
Whenever an interval $[j,i]$ is produced in Algorithm~\ref{alg:0-zigzag},
$i\in\ninds(\Hm_0(\Fcal))$ and the entering or splitting node in $\bcforest_i$ 
introducing $j$ as a positive index either becomes a {\it regular} node 
(i.e., connecting to a single node on both adjacent levels)
or is deleted in $\bcforest_{i+1}$.
This means that $j$ is never the start of another interval produced.
At the end of Algorithm~\ref{alg:0-zigzag}, the number of intervals produced
which end with $\filtcnt$
also matches the rank of $\Hm_0(G_\filtcnt)$.
Therefore, intervals produced by the algorithm
induce a bijection $\pi:\pinds(\Hm_0(\Fcal))\to\ninds(\Hm_0(\Fcal))$.
By Proposition~\ref{prop:pn-paring-w-rep} and~\ref{prop:has-rep},
our conclusion follows.
\end{proof}

\section{One-dimensional zigzag persistence}
\label{sec:1-zigzag}
In this section, we present an efficient algorithm for 1-st zigzag persistence
on graphs.
We assume that
the input is a graph $G$ with a zigzag filtration
\[
\Fcal:\emptyset=G_0 \leftrightarrowsp{\fsimp{\Fcal}{0}} 
G_1 \leftrightarrowsp{\fsimp{\Fcal}{1}} \cdots 
\leftrightarrowsp{\fsimp{\Fcal}{\filtcnt-1}}  G_\filtcnt
\]
of $G$.
We first describe the algorithm 
without giving the full implementation details.
The key  to the algorithm 
is a pairing principle
for the positive and negative indices.
We then prove the correctness of the algorithm.
Finally, in Section~\ref{sec:1-zigzag-impl}, we make several observations 
which reduce the index pairing to 
finding the {\it max edge-weight}
of a path in a minimum spanning forest,
leading to an efficient implementation.

We notice that the following are true
for every inclusion $G_{i}\leftrightarrowsp{\fsimp{\Fcal}{i}} G_{i+1}$ of $\Fcal$
(recall that $\morph{\Fcal}{i}{1}$ denotes
the corresponding linear map in the induced module $\Hm_1(\Fcal)$):
\begin{itemize}
    \item If $\fsimp{\Fcal}{i}$ is an edge being added
    and vertices of $\fsimp{\Fcal}{i}$ are connected
    in $G_i$,
    then $\morph{\Fcal}{i}{1}$ is an injection with non-trivial cokernel,
    which provides $i+1\in\pinds(\Hm_1(\Fcal))$.
    
    \item If $\fsimp{\Fcal}{i}$ is an edge being deleted
    and vertices of $\fsimp{\Fcal}{i}$ are connected
    in $G_{i+1}$,
    then $\morph{\Fcal}{i}{1}$ is an injection with non-trivial cokernel,
    which provides $i\in\ninds(\Hm_1(\Fcal))$.
    
    \item In all the other cases, 
    $\morph{\Fcal}{i}{1}$ is an isomorphism
    and $i\not\in\ninds(\Hm_1(\Fcal))$, $i+1\not\in\pinds(\Hm_1(\Fcal))$.
\end{itemize}

As can be seen from Section~\ref{sec:0-zigzag},
computing $\Pers(\Hm_1(\Fcal))$ boils down to finding a pairing of
indices of $\pinds(\Hm_1(\Fcal))$ and $\ninds(\Hm_1(\Fcal))$.
Our algorithm adopts this structure,
where $\upos_i$ denotes 
the set of unpaired positive indices at the {\it beginning} of each iteration $i$:

\pagebreak

\begin{algr}[Algorithm for 1-st zigzag persistence on graphs]
\label{alg:1-zigzag-graph-abstr}
{\begin{itemize}
  \item[]
  \item[] $\upos_0:=\emptyset$
  \item[] \textbf{for} $i:= 0,\ldots,\filtcnt-1${\rm:}
  \begin{itemize}
      \item[] \textbf{if} $G_{i}\rightarrowsp{\fsimp{\Fcal}{i}} G_{i+1}$
      provides $i+1\in\pinds(\Hm_1(\Fcal))${\rm:}
      \begin{itemize}
          \item[] 
          $\upos_{i+1}:=\upos_i\union\Set{i+1}$
      \end{itemize}
      \item[] \textbf{else if} $G_{i}\leftarrowsp{\fsimp{\Fcal}{i}} G_{i+1}$
      provides $i\in\ninds(\Hm_1(\Fcal))${\rm:}
      \begin{itemize}
          \item[] pair $i$ with a $j_*\in\upos_i$ based on the Pairing Principle below
          \item[] output an interval $[j_*,i]$ for $\Pers(\Hm_1(\Fcal))$
          \item[] $\upos_{i+1}:=\upos_i\setminus\Set{j_*}$
      \end{itemize}
      \item[] \textbf{else}{\rm:}
      \begin{itemize}
          \item[] $\upos_{i+1}:=\upos_i$
      \end{itemize}
  \end{itemize}
  \item[] \textbf{for each} $j\in\upos_\filtcnt${\rm:}
  \begin{itemize}
      \item[] output an interval $[j,\filtcnt]$ for $\Pers(\Hm_1(\Fcal))$
  \end{itemize}
\end{itemize}}
\end{algr}

Note that in Algorithm~\ref{alg:1-zigzag-graph-abstr},
whenever a positive or negative index is produced, 
$\fsimp{\Fcal}{i}$ must be an edge.
One key piece missing from the algorithm
is how we choose a positive index to pair with a negative index:
\theoremstyle{theorem}
\newtheorem*{pprincile}{Pairing Principle for Algorithm~\ref{alg:1-zigzag-graph-abstr}}
\begin{pprincile}
In each iteration $i$ 
where $G_{i}\leftarrowsp{\fsimp{\Fcal}{i}} G_{i+1}$
provides $i\in\ninds(\Hm_1(\Fcal))$,
let $J_i$ consist of every $j\in\upos_i$ such that
there exists a 1-cycle $z$ containing both $\fsimp{\Fcal}{j-1}$ and $\fsimp{\Fcal}{i}$
with $z\subseteq G_k$ for every $k\in[j,i]$.
Then, $J_i\neq\emptyset$ and
Algorithm~\ref{alg:1-zigzag-graph-abstr} pairs $i$ 
with the {\rm smallest} index {$j_*$} in $J_i$.
\end{pprincile}
\begin{remark}
See Proposition~\ref{prop:iter-cyc-exist}
for a proof of $J_i\neq\emptyset$ claimed above.
\end{remark}
\begin{remark}
Algorithms for non-zigzag 
persistence~\cite{edelsbrunner2000topological,zomorodian2005computing} always 
pair a negative index with the \textit{largest} (i.e., youngest) positive index
satisfying a certain condition, 
while Algorithm~\ref{alg:1-zigzag-graph-abstr} pairs with the smallest one.
This is due to the difference of zigzag and non-zigzag persistence
and our particular condition that 1-cycles can never become boundaries
in graphs.
See~\cite{carlsson2009zigzag-realvalue,maria2014zigzag}
for the pairing when assuming general zigzag filtrations.
\end{remark}

\begin{figure}
  \centering
  \includegraphics[width=\linewidth]{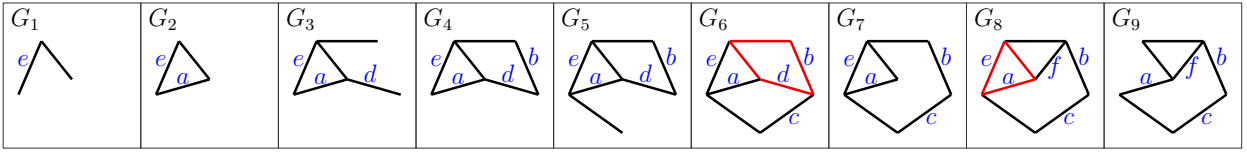}
  \caption{A zigzag filtration with 1-st barcode $\Set{[4,6],[2,8],[6,9],[8,9]}$. 
  For brevity, the addition of vertices and some edges are skipped.}
  \label{fig:1cyc}
\end{figure}

Figure~\ref{fig:1cyc} gives an example of the pairing of the indices
and their corresponding edges.
In the figure, when edge $d$ is deleted from $G_6$, 
there are three unpaired positive edges $a$, $b$, and $c$,
in which $b$ and $c$ admit 1-cycles as required by the Pairing Principle.
As the earlier edge, $b$ is paired with $d$
and an interval $[4,6]$ is produced.
The red cycle in $G_6$ indicates the 1-cycle containing $b$ and $d$
which exists in all the intermediate graphs.
Similar situations happen when $e$ is paired with $a$ in $G_8$,
producing the interval $[2,8]$.

For the correctness of Algorithm~\ref{alg:1-zigzag-graph-abstr},
we first provide 
Proposition~\ref{prop:iter-cyc-exist} 
which justifies
the Pairing Principle
and is a major step leading toward our conclusion (Theorem~\ref{thm:1-zigzag-graph-corr}):

\begin{proposition}\label{prop:iter-cyc-exist}
At the beginning of each iteration $i$ in Algorithm~\ref{alg:1-zigzag-graph-abstr}, 
for every $j\in\upos_i$,
there exists a 1-cycle $z^i_j$ containing $\fsimp{\Fcal}{j-1}$ 
with $z^i_j\subseteq G_k$ for every $k\in[j,i]$.
Furthermore, the set $\Set{z^i_j\given j\in\upos_i}$ forms a basis of $\Zyc_1(G_i)$.
If the iteration $i$ produces a negative index $i$, 
then the above statements imply that 
there is at least one $z^i_j$ containing $\fsimp{\Fcal}{i}$.
This $z^i_j$ satisfies the condition that $z^i_j\subseteq G_k$ for every $k\in[j,i]$,
$\fsimp{\Fcal}{j-1}\in z^i_j$, and $\fsimp{\Fcal}{i}\in z^i_j$,
which implies that $J_i\neq\emptyset$ where $J_i$ is as defined in the Pairing Principle.
\end{proposition}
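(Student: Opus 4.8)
The plan is to prove all the claims by induction on $i$, maintaining as the inductive hypothesis the existence of a basis $\Set{z^i_j \given j\in\upos_i}$ of $\Zyc_1(G_i)$ where each $z^i_j$ contains $\fsimp{\Fcal}{j-1}$ and satisfies $z^i_j\subseteq G_k$ for all $k\in[j,i]$. The base case $i=0$ is trivial since $G_0=\emptyset$, so $\Zyc_1(G_0)=0$ and $\upos_0=\emptyset$. For the inductive step, I would split into the three cases of Algorithm~\ref{alg:1-zigzag-graph-abstr} according to what $G_i\leftrightarrowsp{\fsimp{\Fcal}{i}} G_{i+1}$ does.

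First, consider the isomorphism case where neither a positive nor a negative index is produced, so $\upos_{i+1}=\upos_i$. Here $\morph{\Fcal}{i}{1}$ is an isomorphism $\Zyc_1(G_i)\to\Zyc_1(G_{i+1})$ (forward) or its inverse (backward). When $\fsimp{\Fcal}{i}$ is a vertex, or an edge whose addition/deletion does not create/destroy a cycle, every existing cycle $z^i_j$ still lies in $G_{i+1}$ (an added edge does not affect whether a cycle is contained; a deleted edge in this case is not on any cycle, hence not on any $z^i_j$ — this needs a small argument, that a bridge edge cannot appear in a $1$-cycle, which is immediate). So I set $z^{i+1}_j:=z^i_j$ and the basis property transports along the isomorphism. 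Second, the positive case: $\fsimp{\Fcal}{i}=(u,v)$ is an edge added with $u,v$ already connected in $G_i$, so $i+1\in\pinds$ and $\upos_{i+1}=\upos_i\cup\Set{i+1}$. Pick any path $P$ from $u$ to $v$ in $G_i$ and let $z^{i+1}_{i+1}:=P\cup\Set{(u,v)}$, a $1$-cycle in $G_{i+1}$ containing $\fsimp{\Fcal}{i}=\fsimp{\Fcal}{(i+1)-1}$; the remaining $z^{i+1}_j:=z^i_j$ for $j\in\upos_i$ (still contained in $G_{i+1}\supseteq G_i$). Since $\rank\Zyc_1(G_{i+1})=\rank\Zyc_1(G_i)+1$ and $z^{i+1}_{i+1}$ uses the new edge while the others do not, the enlarged set is a basis.

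The main work is the negative case: $\fsimp{\Fcal}{i}=(u,v)$ is an edge deleted with $u,v$ still connected in $G_{i+1}$, so $i\in\ninds$, and $\morph{\Fcal}{i}{1}$ is the injection $\Zyc_1(G_{i+1})\hookrightarrow\Zyc_1(G_i)$ with cokernel of rank $1$. Among the basis $\Set{z^i_j\given j\in\upos_i}$ of $\Zyc_1(G_i)$, expand: since $(u,v)\in\Zyc_1(G_i)\setminus\Zyc_1(G_{i+1})$ — more precisely, since removing $(u,v)$ drops the rank by one — at least one basis element $z^i_j$ must contain the edge $(u,v)=\fsimp{\Fcal}{i}$ (if none did, all of them would lie in $G_{i+1}$, forcing $\rank\Zyc_1(G_{i+1})\ge|\upos_i|=\rank\Zyc_1(G_i)$, a contradiction). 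This gives the nonempty index set $J'=\Set{j\in\upos_i : \fsimp{\Fcal}{i}\in z^i_j}$; each such $j$ also satisfies $\fsimp{\Fcal}{j-1}\in z^i_j$ and $z^i_j\subseteq G_k$ for $k\in[j,i]$, so $J'\subseteq J_i$ and hence $J_i\neq\emptyset$. Let $j_*=\min J_i$; the Pairing Principle pairs $i$ with $j_*$. To restore the inductive hypothesis at $i+1$: for $j\in\upos_{i+1}=\upos_i\setminus\Set{j_*}$ with $\fsimp{\Fcal}{i}\notin z^i_j$, keep $z^{i+1}_j:=z^i_j\subseteq G_{i+1}$; for $j\in\upos_{i+1}$ with $\fsimp{\Fcal}{i}\in z^i_j$ (so $j\in J_i$, $j>j_*$), replace it by $z^{i+1}_j:=z^i_j+z^i_{j_*}$, which no longer contains the edge $(u,v)$ (the $\Zbb_2$-sum cancels it) hence lies in $G_{i+1}$; I must check this sum still contains $\fsimp{\Fcal}{j-1}$ — since $\fsimp{\Fcal}{j-1}\in z^i_j$ and, if $\fsimp{\Fcal}{j-1}\in z^i_{j_*}$ as well, then $j-1<j$ would force $\fsimp{\Fcal}{j-1}$ to be a positive edge strictly before $j_*$'s birth lying on $z^i_{j_*}$, whose containment window is $[j_*,i]\not\ni j-1$ when $j-1<j_*$; the only delicate subcase is $j_*\le j-1$, which I would handle by noting $j-1\ge j_*$ cannot happen for $j>j_*$ unless $j-1=j_*$... — and it still lies in $G_k$ for $k\in[j,i]$ since both summands do (as $j>j_*$ means $[j,i]\subseteq[j_*,i]$). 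The replacement is an invertible (unitriangular) change of basis over $\Zbb_2$, so $\Set{z^{i+1}_j\given j\in\upos_{i+1}}$ is a basis of a rank-$|\upos_i|-1$ subspace of $\Zyc_1(G_{i+1})$; since $\rank\Zyc_1(G_{i+1})=\rank\Zyc_1(G_i)-1$, it is all of $\Zyc_1(G_{i+1})$. Finally, the closing sentences of the proposition about $z^i_j$ with $j\in J_i$ simultaneously containing $\fsimp{\Fcal}{j-1}$, containing $\fsimp{\Fcal}{i}$, and being contained in $G_k$ for $k\in[j,i]$ are exactly the conditions defining membership in $J_i$ of the Pairing Principle, so $J_i\neq\emptyset$ follows immediately. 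I expect the subtle point about the birth edge $\fsimp{\Fcal}{j-1}$ surviving the $\Zbb_2$-sum $z^i_j+z^i_{j_*}$ to be the main obstacle; the resolution is that $\fsimp{\Fcal}{j-1}$ is the \emph{birth} edge, i.e. $j-1+1=j$, and $z^i_{j_*}\subseteq G_{j_*-1}$-wise it is born at $j_*$, so $z^i_{j_*}$ cannot contain the later-added edge $\fsimp{\Fcal}{j-1}$ when $j>j_*$ because $z^i_{j_*}\subseteq G_{j_*}$ would need $\fsimp{\Fcal}{j-1}$ present at time $j_*\le j-1$... — making the cancellation harmless, which is the crux and must be argued carefully rather than waved through.
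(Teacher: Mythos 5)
Your overall strategy is the same as the paper's (induction maintaining, for each unpaired positive index $j$, a cycle $z^i_j$ containing $\fsimp{\Fcal}{j-1}$ that lives in every $G_k$, $k\in[j,i]$, these cycles forming a basis of $\Zyc_1(G_i)$), and your base case, isomorphism case and positive case are fine, as is your derivation of $J_i\neq\emptyset$. The genuine gap is in restoring the invariant after a negative index. Your update $z^{i+1}_j:=z^i_j+z^i_{j_*}$ cancels the deleted edge $\fsimp{\Fcal}{i}$ only if $z^i_{j_*}$ itself contains $\fsimp{\Fcal}{i}$, i.e.\ only if $j_*$ lies in your set $J'=\Set{j\in\upos_i\given \fsimp{\Fcal}{i}\in z^i_j}$. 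But you have only established $J'\subseteq J_i$, which gives $j_*=\min J_i\leq\min J'$; nothing in your argument excludes $j_*<\min J'$, in which case the witness for $j_*\in J_i$ is some cycle other than $z^i_{j_*}$, the sum $z^i_j+z^i_{j_*}$ still contains $\fsimp{\Fcal}{i}$, hence is not contained in $G_{i+1}$, and the basis invariant (and with it the whole induction) breaks. Closing this hole is exactly the central work in the paper's proof: it shows $j_*=\bar{j}:=\min J'$ by contradiction — take a cycle $\zG$ witnessing $j_*\in J_i$, expand it in the basis as $\zG=\sum_{\ell}z^i_{\lG_\ell}$ with $\lG_1<\cdots<\lG_s$, note $\lG_s\geq\bar{j}$ because $\zG$ contains $\fsimp{\Fcal}{i}$, conclude $\fsimp{\Fcal}{\lG_s-1}\in\zG$ since only the top summand can contain that edge, and contradict $\zG\subseteq G_{\lG_s-1}$, which holds because $j_*\leq\lG_s-1\leq i$. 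You need this (or an equivalent) argument before the cancellation step is legitimate.

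A secondary, fixable issue: your justification that $\fsimp{\Fcal}{j-1}$ survives the sum is garbled. The observation that $\fsimp{\Fcal}{j-1}$ would have to be "present at time $j_*$" is not by itself a contradiction in a zigzag filtration, where an edge can be added, deleted and re-added; and the sentence claiming $j-1\geq j_*$ "cannot happen for $j>j_*$ unless $j-1=j_*$" is false ($j>j_*$ always gives $j-1\geq j_*$, and that is precisely the case to handle). The clean argument is the paper's: since $j>j_*$, the index $j-1$ lies in $[j_*,i]$, so $z^i_{j_*}\subseteq G_{j-1}$, while $\fsimp{\Fcal}{j-1}\notin G_{j-1}$ by definition of the addition at step $j-1$; hence $\fsimp{\Fcal}{j-1}\notin z^i_{j_*}$ and the sum retains it. This is the same containment-window fact you already invoke, applied at $k=j-1$ rather than at $k=j_*$.
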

\begin{proof}
We prove this by induction. At the beginning of iteration $0$,
since $G_0=\emptyset$ and $\upos_0=\emptyset$,
the proposition is trivially true.
Suppose that the proposition is true at the beginning of an iteration $i$.
For each $j\in\upos_i$, 
let $z^i_j$ be the 1-cycle as specified in the proposition.
If $G_{i}\leftrightarrowsp{\fsimp{\Fcal}{i}} G_{i+1}$
produces neither a positive index nor a negative index,
then $\Zyc_1(G_i)=\Zyc_1(G_{i+1})$ and 
$\upos_i=\upos_{i+1}$. 
Let $z^{i+1}_j=z^i_j$ for each $j$;
then, $\Set{z^{i+1}_j\given j\in\upos_{i+1}}$ 
serves as the 1-cycles as specified in the proposition
for iteration $i+1$.
If $G_{i}\rightarrowsp{\fsimp{\Fcal}{i}} G_{i+1}$
produces a new unpaired positive index $i+1$,
let $z^{i+1}_{i+1}$ be any 1-cycle in $G_{i+1}$ containing $\fsimp{\Fcal}{i}$.
Also, for each $j\in\upos_i$,
let $z^{i+1}_j=z^i_j$.
It can be verified that 
$\Set{z^{i+1}_j\given j\in\upos_{i+1}}$ serves as the 1-cycles as specified in the proposition
for iteration $i+1$.

If $G_{i}\leftarrowsp{\fsimp{\Fcal}{i}} G_{i+1}$
produces a negative index $i$, 
then there must be a 1-cycle in $G_i$ containing $\fsimp{\Fcal}{i}$.
The fact that $\Set{z^{i}_j\given j\in\upos_{i}}$ forms a basis of $\Zyc_1(G_i)$
implies that there must be at least one $z^i_j$ containing $\fsimp{\Fcal}{i}$
because otherwise no combination of the $z^i_j$'s can 
equal a cycle containing $\fsimp{\Fcal}{i}$.
Let $\bar{j}$ be the smallest $j\in\upos_i$
such that $z^i_j$ contains $\fsimp{\Fcal}{i}$.
We claim that $j_*=\bar{j}$, where $j_*$ is as defined in the Pairing Principle.
For contradiction, suppose instead that $j_*\neq\bar{j}$.
Note that $\bar{j}\in J_i$, where $J_i$ is as defined in the Pairing Principle.
Since $j_*$ is the smallest index in $J_i$,
we have that $j_*<\bar{j}$.
By the Pairing Principle,
there exists a 1-cycle $\zG$ containing both 
$\fsimp{\Fcal}{j_*-1}$ and $\fsimp{\Fcal}{i}$
with $\zG\subseteq G_k$ for every $k\in[j_*,i]$.
Since $\Set{z^{i}_j\given j\in\upos_i}$ forms a basis of $\Zyc_1(G_i)$
and $\zG\subseteq G_i$,
$\zG$ must equal a sum $\sum_{\ell=1}^s z^i_{\lG_\ell}$,
where each $\lG_\ell\in\upos_i$.
We rearrange the indices such that
$\lG_1<\lG_2<\cdots<\lG_s$.
We have that $\lG_s\geq\bar{j}$ because otherwise
each $\lG_\ell<\bar{j}$ and 
so its corresponding $z^i_{\lG_\ell}$ does not contain $\fsimp{\Fcal}{i}$.
This implies that $\zG=\sum_{\ell=1}^s z^i_{\lG_\ell}$ 
does not contain $\fsimp{\Fcal}{i}$,
which is a contradiction.
For each $\ell$ such that $1\leq \ell<s$, since $\lG_\ell\leq\lG_s-1\leq i$,
we have that $z^i_{\lG_\ell}\subseteq G_{\lG_s-1}$,
which means that $\fsimp{\Fcal}{\lG_s-1}\not\in z^i_{\lG_\ell}$
because $\fsimp{\Fcal}{\lG_s-1}\not\in G_{\lG_s-1}$.
Since $\fsimp{\Fcal}{\lG_s-1}\in z^i_{\lG_s}$, 
it follows that $\fsimp{\Fcal}{\lG_s-1}\in\sum_{\ell=1}^s z^i_{\lG_\ell}=\zG$.
This implies that $\zG\nsubseteq G_{\lG_s-1}$
because $\fsimp{\Fcal}{\lG_s-1}\not\in G_{\lG_s-1}$.
However, we have that $j_*\leq\lG_s-1\leq i$ because $j_*<\bar{j}\leq\lG_s\leq i$,
which means that $\zG\subseteq G_{\lG_s-1}$.
So we have reached a contradiction, meaning that $j_*=\bar{j}$.
For each $j\in\upos_{i+1}$,
if $z^i_j$ does not contain $\fsimp{\Fcal}{i}$,
let $z^{i+1}_j=z^i_j\subseteq G_{i+1}$.
If $z^i_j$ contains $\fsimp{\Fcal}{i}$, 
let $z^{i+1}_j=z^i_j+z^i_{j_*}\subseteq G_{i+1}$.
Note that since $j_*=\bar{j}$,
we must have that $j_*<j$,
which means that $z^i_{j_*}\subseteq G_{k}$ for every $k\in[j,i]\subseteq[j_*,i]$.
Therefore, $z^{i+1}_j=z^i_j+z^i_{j_*}\subseteq G_{k}$ for every $k\in[j,i]$.
Also since $z^i_{j_*}\subseteq G_{j-1}$, 
$z^i_{j_*}$ does not contain $\fsimp{\Fcal}{j-1}$,
which means that $z^{i+1}_j$ contains $\fsimp{\Fcal}{j-1}$.
Note that $\Set{z^{i+1}_j\given j\in\upos_{i+1}}$ must still be linearly independent,
so they form a basis of $\Zyc_1(G_{i+1})$.
Now we have that
$\Set{z^{i+1}_j\given j\in\upos_{i+1}}$ serves as the 1-cycles as specified in the proposition
for iteration $i+1$.
\end{proof}


\begin{theorem}\label{thm:1-zigzag-graph-corr}
Algorithm~\ref{alg:1-zigzag-graph-abstr} computes the 1-st zigzag barcode
for a given zigzag filtration on graphs.
\end{theorem}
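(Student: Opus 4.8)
The plan is to invoke Proposition~\ref{prop:pn-paring-w-rep} with the bijection $\pi:\pinds(\Hm_1(\Fcal))\to\ninds(\Hm_1(\Fcal))$ induced by Algorithm~\ref{alg:1-zigzag-graph-abstr}, so the bulk of the proof is checking the hypotheses of that proposition. First I would verify that the algorithm does produce a well-defined bijection: every iteration that creates a positive index $i+1$ adds it to $\upos$, every iteration that creates a negative index $i$ removes exactly one element $j_*$ of $\upos_i$ (well-defined by Proposition~\ref{prop:iter-cyc-exist}, which guarantees $J_i\neq\emptyset$ so a pairing exists), and the leftover elements of $\upos_\filtcnt$ are paired with the $\rank\Hm_1(G_\filtcnt)$ copies of $\filtcnt$ appended to $\ninds(\Hm_1(\Fcal))$ in Definition~\ref{dfn:pos-neg-inds}. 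Here I would note that by Proposition~\ref{prop:iter-cyc-exist} the cycles $\Set{z^\filtcnt_j\given j\in\upos_\filtcnt}$ form a basis of $\Zyc_1(G_\filtcnt)=\Hm_1(G_\filtcnt)$, so $|\upos_\filtcnt|=\rank\Hm_1(G_\filtcnt)$, matching the count of terminal negative indices; combined with $|\pinds|=|\ninds|$ this shows $\pi$ is a bijection, and $j_*\le i$ is immediate since $j_*\in\upos_i$ and indices are only added to $\upos$ before iteration $i$.

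Next I would check the representative condition: each interval $[j_*,i]$ (finite case) or $[j,\filtcnt]$ (infinite case) produced by the algorithm admits a set of $1$-representatives in the sense of Definition~\ref{dfn:rep-cls}. For a finite interval $[j_*,i]$, Proposition~\ref{prop:iter-cyc-exist} hands us a $1$-cycle $z:=z^i_{j_*}$ with $z\subseteq G_k$ for all $k\in[j_*,i]$, $\fsimp{\Fcal}{j_*-1}\in z$, and $\fsimp{\Fcal}{i}\in z$. Setting $\aG_k:=[z]\in\Hm_1(G_k)$ for $k\in[j_*,i]$ gives partial representatives, since all the intermediate maps $\morph{\Fcal}{k}{1}$ for $k\in[j_*,i-1]$ are isomorphisms (by the case analysis preceding the algorithm, the only non-isomorphisms in $\Hm_1(\Fcal)$ occur exactly at positive/negative indices, and there are none strictly between $j_*$ and $i$ that would disturb this class — more carefully, at an intermediate negative index $k$ the cycle $z^k_{j_*}$ might change but remains a valid representative, and at a positive index nothing is removed). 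The birth condition of Definition~\ref{dfn:rep-cls} holds because $\morph{\Fcal}{j_*-1}{1}$ is an injection with non-trivial cokernel sending the $\Hm_1(G_{j_*-1})$ classes into those cycles not containing $\fsimp{\Fcal}{j_*-1}$, whereas $z$ contains $\fsimp{\Fcal}{j_*-1}$, so $\aG_{j_*}\notin\img(\morph{\Fcal}{j_*-1}{1})$; the death condition holds because $\morph{\Fcal}{i}{1}$ is the backward injection with non-trivial cokernel associated to deleting $\fsimp{\Fcal}{i}$, and since $\fsimp{\Fcal}{i}\in z$ the class $\aG_i=[z]$ is not in its image. For an infinite interval $[j,\filtcnt]$ with $j\in\upos_\filtcnt$, the same construction using $z^\filtcnt_j$ works, and the death condition is vacuous since $\filtcnt$ is the end of the module.

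Having established that $\pi$ is a bijection with $j_*\le\pi^{-1}(j_*)$... rather, $b\le\pi(b)$ for all $b\in\pinds(\Hm_1(\Fcal))$, and that every $[b,\pi(b)]$ has a set of representatives, Proposition~\ref{prop:pn-paring-w-rep} yields $\Pers(\Hm_1(\Fcal))=\Set{[b,\pi(b)]\given b\in\pinds(\Hm_1(\Fcal))}$, which is precisely the multiset of intervals output by Algorithm~\ref{alg:1-zigzag-graph-abstr}. I expect the main obstacle to be the "intermediate isomorphism" bookkeeping in the representative check: I must argue cleanly that for $k$ strictly between $j_*$ and $i$ — in particular at intermediate negative indices, where $\upos$ loses some element $j'\neq j_*$ and the basis cycles are modified — the class being tracked can still be represented by a cycle living in all graphs $G_k$, $k\in[j_*,i]$, containing $\fsimp{\Fcal}{j_*-1}$. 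The cleanest route is probably to not track a fixed cycle but to read the needed partial representative directly off the family $\{z^k_{j_*}\}_k$ produced inductively in the proof of Proposition~\ref{prop:iter-cyc-exist}, noting that each update $z^{k+1}_{j_*}=z^k_{j_*}$ or $z^{k+1}_{j_*}=z^k_{j_*}+z^k_{j'_*}$ only changes the cycle at a negative index $k$ where $\morph{\Fcal}{k}{1}$ is the identity on $\Hm_1$, so all the $[z^k_{j_*}]$ for $k\in[j_*,i]$ coincide as the single homology class $\aG$, giving the required representatives without further work.
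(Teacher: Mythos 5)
Your overall route is the paper's own: read off the bijection $\pi$ from the algorithm's pairing, build representatives from the cycle guaranteed by the Pairing Principle for each finite interval and from Proposition~\ref{prop:iter-cyc-exist} for the intervals ending at $\filtcnt$, and conclude via Proposition~\ref{prop:pn-paring-w-rep}. Your explicit verification that $|\upos_\filtcnt|=\rank\Hm_1(G_\filtcnt)$ (via the basis claim of Proposition~\ref{prop:iter-cyc-exist}), so that the pairing really is a bijection $\pinds\to\ninds$, is a detail the paper leaves implicit, and it is correct, as are your endpoint (birth/death) arguments, which rely on $\fsimp{\Fcal}{j_*-1}\in z$, $\fsimp{\Fcal}{i}\in z$ and the fact that in a graph $\Hm_1=\Zyc_1$, so a cycle containing an edge absent from a graph cannot lie in the image of the corresponding induced map.

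Where you go astray is the ``intermediate isomorphism'' bookkeeping, which is a non-issue, and the claims you make to dispatch it are false. The maps $\morph{\Fcal}{k}{1}$ for $k$ strictly inside $[j_*,i]$ need not be isomorphisms: other positive and negative indices can occur inside the interval. More importantly, your proposed ``cleanest route'' via the family $\Set{z^k_{j_*}\given k\in[j_*,i]}$ would fail: at an intermediate negative index $k$ where the update $z^{k+1}_{j_*}=z^k_{j_*}+z^k_{j'}$ happens, the backward map $\morph{\Fcal}{k}{1}$ is an injection with nontrivial cokernel (not the identity on $\Hm_1$), and it carries $[z^{k+1}_{j_*}]$ to $[z^k_{j_*}+z^k_{j'}]\neq[z^k_{j_*}]$ since no nonzero $1$-cycle of a graph is a boundary; hence the classes $[z^k_{j_*}]$ neither coincide nor even form a set of partial representatives. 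None of this machinery is needed: the single fixed cycle $z$ you already have (from $j_*\in J_i$, equivalently $z=z^i_{j_*}$) satisfies $z\subseteq G_k$ for every $k\in[j_*,i]$, so each inclusion-induced map, whatever its rank, sends $[z]$ to $[z]$, which is exactly the partial-representative condition of Definition~\ref{dfn:rep-cls}. With that (trivial) justification substituted for your isomorphism claims, your argument coincides with the paper's proof.
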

\begin{proof}
The claim follows directly from Proposition~\ref{prop:pn-paring-w-rep}.
For each interval $[j_*,i]$ produced from the pairing 
in Algorithm~\ref{alg:1-zigzag-graph-abstr},
by the Pairing Principle,
there exists a 1-cycle $z$ containing both 
$\fsimp{\Fcal}{j_*-1}$ and $\fsimp{\Fcal}{i}$
with $z\subseteq G_k$ for every $k\in[j_*,i]$.
The cycle $z$ induces a set of 1-representatives for $[j_*,i]$.
For each interval produced at the end, 
Proposition~\ref{prop:iter-cyc-exist} implies that such an interval
admits 1-representatives.
\end{proof}

\subsection{Efficient implementation}\label{sec:1-zigzag-impl}

For every $i$ and every $j\leq i$,
define
$\Gres_j^i$ as the graph derived from $G_{j}$ by deleting 
every edge $\fsimp{\Fcal}{k}$ 
s.t.\ $j\leq k<i$
and $G_{k}\leftarrowsp{\fsimp{\Fcal}{k}} G_{k+1}$
is backward.
For convenience, we also assume that $\Gres_j^i$
contains all the vertices of $G$.
We can simplify the Pairing Principle
as suggested by the following proposition:
\begin{proposition}
\label{prop:pairing-rewrite}
In each iteration $i$ of Algorithm~\ref{alg:1-zigzag-graph-abstr} 
where $G_{i}\leftarrowsp{\fsimp{\Fcal}{i}} G_{i+1}$
provides $i\in\ninds(\Hm_1(\Fcal))$,
the set $J_i$ in the Pairing Principle can be alternatively
defined as consisting of every $j\in\upos_i$
s.t.\ $\fsimp{\Fcal}{i}\in\Gres_j^{i}$ and
the vertices of $\fsimp{\Fcal}{i}$ are connected in $\Gres_j^{i+1}$ 
{\rm(}$\fsimp{\Fcal}{i}\not\in\Gres_j^{i+1}$ by definition{\rm)}.
\end{proposition}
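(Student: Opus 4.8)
The plan is to show that the alternative description of $J_i$ coincides with the original one, i.e.\ for a fixed $j\in\upos_i$, there exists a 1-cycle $z$ with $\fsimp{\Fcal}{j-1},\fsimp{\Fcal}{i}\in z$ and $z\subseteq G_k$ for all $k\in[j,i]$ if and only if $\fsimp{\Fcal}{i}\in\Gres_j^{i}$ and the endpoints of $\fsimp{\Fcal}{i}$ are connected in $\Gres_j^{i+1}$. The key observation linking the two is that the intermediate-containment condition ``$z\subseteq G_k$ for every $k\in[j,i]$'' is exactly equivalent to ``$z\subseteq\Gres_j^{i}$'': indeed $\Gres_j^{i}$ is by construction $G_j$ with precisely those edges removed that get deleted somewhere in the window $[j,i)$, so an edge survives in all of $G_j,\dots,G_i$ iff it lies in $G_j$ and is never deleted in that window, iff it lies in $\Gres_j^{i}$. (Here I would use that between consecutive indices at most one simplex changes, and that edge additions in the window only enlarge the $G_k$'s, so containment in all $G_k$ for $k\in[j,i]$ is governed entirely by $G_j$ together with the deletions; a cycle contained in $G_j$ minus all such deleted edges is automatically contained in every $G_k$ in the range.) I would also note $\fsimp{\Fcal}{i}\in G_i$ always holds when $i\in\ninds$, and $\fsimp{\Fcal}{i}$ is not among the edges deleted strictly before iteration $i$, so $\fsimp{\Fcal}{i}\in\Gres_j^{i}$ is the honest extra requirement, while $\fsimp{\Fcal}{i}\notin\Gres_j^{i+1}$ holds trivially since $k=i$ now falls in the deletion range.

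Granting that reduction, the statement becomes: there is a 1-cycle $z\subseteq\Gres_j^{i}$ through both $\fsimp{\Fcal}{j-1}$ and $\fsimp{\Fcal}{i}$ iff $\fsimp{\Fcal}{i}\in\Gres_j^{i}$ and its endpoints are connected in $\Gres_j^{i+1}=\Gres_j^{i}\setminus\{\fsimp{\Fcal}{i}\}$. First I would handle the ``only if'' direction: if such a $z$ exists, then $\fsimp{\Fcal}{i}\in z\subseteq\Gres_j^{i}$, and $z\setminus\{\fsimp{\Fcal}{i}\}$ is a path between the endpoints of $\fsimp{\Fcal}{i}$ lying in $\Gres_j^{i+1}$, giving connectivity. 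For the ``if'' direction, suppose $\fsimp{\Fcal}{i}\in\Gres_j^{i}$ and its endpoints $u,v$ are joined by a path $P$ in $\Gres_j^{i+1}$; then $P+\fsimp{\Fcal}{i}$ is a 1-cycle in $\Gres_j^{i}$ through $\fsimp{\Fcal}{i}$. The one genuine subtlety is forcing $\fsimp{\Fcal}{j-1}$ into the cycle as well. For this I would invoke Proposition~\ref{prop:iter-cyc-exist}: at the beginning of iteration $i$, $\Set{z^i_{j'}\given j'\in\upos_i}$ is a basis of $\Zyc_1(G_i)$ with $\fsimp{\Fcal}{j'-1}\in z^i_{j'}$, and moreover (as the proof of that proposition maintains) each $z^i_{j'}\subseteq\Gres_{j'}^{i}$. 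Any cycle $w\subseteq\Gres_j^{i}$ through $\fsimp{\Fcal}{i}$ expands in this basis as $w=\sum_{\ell} z^i_{\lambda_\ell}$; running the same ``largest index'' argument as in the proof of Proposition~\ref{prop:iter-cyc-exist}, the top index $\lambda_s$ must be $\geq$ the smallest $\bar\jmath$ with $\fsimp{\Fcal}{i}\in z^i_{\bar\jmath}$, and one shows $j$ itself must appear among the $\lambda_\ell$ (otherwise $\fsimp{\Fcal}{j-1}$ could not be forced out by edges absent before index $j$), so $\fsimp{\Fcal}{j-1}\in w$; alternatively, since $\fsimp{\Fcal}{j-1}\notin\Gres_{j'}^{i}$ for any $j'>j$ (that edge is absent from $G_{j'}$ hence from $\Gres_{j'}^{i}$) while it lies in $z^i_j$, one directly reads off that $z$ passes through $\fsimp{\Fcal}{j-1}$ exactly when the basis expansion includes the term $z^i_j$, and the connectivity hypothesis guarantees such an expansion exists.

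The main obstacle I anticipate is precisely this last point — cleanly arguing that connectivity of $\fsimp{\Fcal}{i}$'s endpoints in $\Gres_j^{i+1}$ yields a cycle that simultaneously contains $\fsimp{\Fcal}{j-1}$, rather than merely \emph{some} cycle through $\fsimp{\Fcal}{i}$. The honest route is to lean on the basis structure and window-containment bookkeeping already established in Proposition~\ref{prop:iter-cyc-exist}: the cycle $z^{i}_{j}$ of that proposition already lies in $\Gres_j^{i}$, contains $\fsimp{\Fcal}{j-1}$, and — once we know the endpoints of $\fsimp{\Fcal}{i}$ are connected in $\Gres_j^{i+1}$, equivalently once we know \emph{some} basis element $z^i_{j'}$ with $j'\le j$ contains $\fsimp{\Fcal}{i}$ — can be adjusted within $\Gres_j^{i}$ to also pass through $\fsimp{\Fcal}{i}$ while remaining in the window. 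Everything else is the routine unwinding of the definition of $\Gres_j^{i}$ in terms of the filtration's deletions, which I would state as a short lemma (``$z\subseteq G_k$ for all $k\in[j,i]$ $\iff$ $z\subseteq\Gres_j^{i}$, for cycles $z$'') and then apply.
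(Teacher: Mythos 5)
Your overall route coincides with the paper's: you reduce the window condition ``$z\subseteq G_k$ for all $k\in[j,i]$'' to ``$z\subseteq\Gres_j^i$'', you get the ``only if'' direction by removing $\fsimp{\Fcal}{i}$ from a cycle to obtain a connecting path in $\Gres_j^{i+1}$, and you get a cycle through $\fsimp{\Fcal}{i}$ in $\Gres_j^i$ as $P+\fsimp{\Fcal}{i}$. The gap is exactly where you flag it, and the one concrete mechanism you propose for it fails. Expanding an arbitrary cycle $w\subseteq\Gres_j^i$ containing $\fsimp{\Fcal}{i}$ in the basis $\Set{z^i_{j'}\given j'\in\upos_i}$, the largest-index argument of Proposition~\ref{prop:iter-cyc-exist} only yields $\lambda_s\le j$ (all edges of $w$ have weight less than $j$), not that $j$ itself occurs; the claim ``$j$ must appear among the $\lambda_\ell$, so $\fsimp{\Fcal}{j-1}\in w$'' is false. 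For a counterexample, let the filtration build two vertex-disjoint cycles, the first closed by a positive edge at index $j_1<j$ and the second closed by $\fsimp{\Fcal}{j-1}$, and then delete an edge $\fsimp{\Fcal}{i}$ of the first cycle: the first cycle lies in $\Gres_j^i$, contains $\fsimp{\Fcal}{i}$, has expansion $z^i_{j_1}$ alone, and avoids $\fsimp{\Fcal}{j-1}$ entirely, even though $j\in J_i$ (witnessed by the disconnected cycle equal to the sum of the two). So no cycle produced directly from the connectivity hypothesis is forced to pass through $\fsimp{\Fcal}{j-1}$, and your ``can be adjusted within $\Gres_j^i$'' is left unsubstantiated.

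The missing idea is the simple $\Zbb_2$ sum, using that elements of $\Zyc_1$ need not be connected, and it is what the paper does: given any cycle $z\subseteq\Gres_j^i$ containing $\fsimp{\Fcal}{i}$, if $z$ also contains $\fsimp{\Fcal}{j-1}$ you are done; otherwise take $z'=z^i_j$ from Proposition~\ref{prop:iter-cyc-exist}, which lies in every $G_k$ for $k\in[j,i]$ and hence in $\Gres_j^i$ and contains $\fsimp{\Fcal}{j-1}$; if $z'$ contains $\fsimp{\Fcal}{i}$ use $z'$, and otherwise $z+z'$ is a cycle in $\Gres_j^i$ containing both edges. Replacing your basis-expansion argument by this short case analysis closes the gap; the rest of your plan (the window-containment lemma and the equivalence between ``some cycle through $\fsimp{\Fcal}{i}$ in $\Gres_j^i$'' and ``$\fsimp{\Fcal}{i}\in\Gres_j^i$ with endpoints connected in $\Gres_j^{i+1}$'') is exactly the paper's proof.
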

\begin{proof}
We prove an equivalent statement, which is that $J_i$
consists of every $j\in\upos_i$
s.t.\ there is a 1-cycle in $\Gres_j^i$ containing $\fsimp{\Fcal}{i}$.
Let $j$ be any index in $\upos_i$.
It is not hard to see that a 1-cycle is in $G_k$ for every $k\in[j,i]$
iff
the 1-cycle is in $\Gres_j^i$.
So we only need to prove that there is a 1-cycle in $\Gres_j^i$
containing both $\fsimp{\Fcal}{j-1}$ and $\fsimp{\Fcal}{i}$
iff
there is a 1-cycle in $\Gres_j^i$ containing $\fsimp{\Fcal}{i}$.
The forward direction is easy. So let $z$ be
a 1-cycle  in $\Gres_j^i$ containing $\fsimp{\Fcal}{i}$.
If $z$ contains $\fsimp{\Fcal}{j-1}$, then the proof is done.
If not, by Proposition~\ref{prop:iter-cyc-exist} 
there is a 1-cycle $z'$ containing $\fsimp{\Fcal}{j-1}$ 
with $z'\subseteq G_k$ for every $k\in[j,i]$.
So $z'$ is a 1-cycle in $\Gres_j^i$ containing $\fsimp{\Fcal}{j-1}$.
If $z'$ contains $\fsimp{\Fcal}{i}$, we again finish our proof.
If not, then $z+z'$ is a 1-cycle 
containing both edges.
\end{proof}

We then turn graphs in $\Fcal$ into weighted ones 
in the following way:
initially, $G_0=\emptyset$;
then, whenever an edge $\fsimp{\Fcal}{i}$ is added from $G_{i}$ to $G_{i+1}$,
the weight $w(\fsimp{\Fcal}{i})$ is set to $i$.
We have the following fact:

\begin{proposition}\label{prop:G-j-i-charac}
For every $i$ and every $j\leq i$, 
the edge set of $\Gres_j^{i}$, denoted $E(\Gres_j^{i})$,
consists of all edges of $G_{i}$ whose weights are less than $j$.
\end{proposition}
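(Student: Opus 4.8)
The plan is to prove the statement by induction on $i$, or really by tracking how $\Gres_j^i$ changes as $i$ increases past $j$. The base case and the key point is the identity $\Gres_j^j = G_j$: by definition $\Gres_j^i$ is obtained from $G_j$ by deleting the edges $\fsimp{\Fcal}{k}$ for $j\le k<i$ corresponding to backward inclusions, so when $i=j$ there is nothing to delete. Since every edge present in $G_j$ was added at some step $k<j$ (it survives in $G_j$, so it has not been deleted yet), its weight $w(\fsimp{\Fcal}{k})=k$ is strictly less than $j$; conversely every edge of $G$ with weight $<j$ that is ``currently alive'' in the filtration at time $j$ is by definition an edge of $G_j$. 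This establishes the claim at $i=j$: $E(\Gres_j^j)=E(G_j)=\{$edges of $G_j$ with weight $<j\} = \{$edges of $G$ with weight $<j$ that lie in $G_j\}$.

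For the inductive step, suppose $E(\Gres_j^i)$ consists exactly of the edges of $G_i$ with weight $<j$, and consider the passage from $i$ to $i+1$. There are three cases for the inclusion $G_i \leftrightarrowsp{\fsimp{\Fcal}{i}} G_{i+1}$. If it is a forward inclusion adding a vertex, then $G_{i+1}$ and $G_i$ have the same edge set and $\Gres_j^{i+1}=\Gres_j^i$ (no edge is deleted in forming $\Gres_j^{i+1}$ that was not already deleted), so the claim is immediate. If it is a forward inclusion adding an edge $\fsimp{\Fcal}{i}$, that edge gets weight $w(\fsimp{\Fcal}{i})=i\ge j$, so it should \emph{not} appear in $E(\Gres_j^{i+1})$; and indeed, since $\Gres_j^{i+1}$ is built from $G_j$, which does not contain $\fsimp{\Fcal}{i}$, this edge is absent from $\Gres_j^{i+1}$ — while all other edges of $G_{i+1}$ coincide with those of $G_i$, and their membership in $\Gres_j^{i+1}$ versus $\Gres_j^i$ is unchanged because the set of backward-deletion steps in $[j,i+1)$ equals that in $[j,i)$. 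Hence $E(\Gres_j^{i+1}) = E(\Gres_j^i) = \{$edges of $G_i$ with weight $<j\} = \{$edges of $G_{i+1}$ with weight $<j\}$, using that the only new edge $\fsimp{\Fcal}{i}$ has weight $\ge j$.

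The remaining case is a backward inclusion deleting an edge $\fsimp{\Fcal}{i}$, i.e.\ $G_i \leftarrowsp{\fsimp{\Fcal}{i}} G_{i+1}$. Now $\Gres_j^{i+1}$ is obtained from $G_j$ by additionally deleting $\fsimp{\Fcal}{i}$ compared to $\Gres_j^i$ (the index $k=i$ now falls in the range $[j,i+1)$ and corresponds to a backward step). So $E(\Gres_j^{i+1}) = E(\Gres_j^i)\setminus\{\fsimp{\Fcal}{i}\}$, while also $E(G_{i+1}) = E(G_i)\setminus\{\fsimp{\Fcal}{i}\}$. By the inductive hypothesis $E(\Gres_j^i)$ is the set of weight-$<j$ edges of $G_i$, so removing $\fsimp{\Fcal}{i}$ from both sides gives that $E(\Gres_j^{i+1})$ is the set of weight-$<j$ edges of $G_{i+1}$, as desired. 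I do not anticipate a genuine obstacle here; the only point requiring a little care is bookkeeping the range of deletion indices — confirming that the set of $k\in[j,i+1)$ with $\fsimp{\Fcal}{k}$ a backward step is exactly the set for $[j,i)$ together with $\{i\}$ precisely when step $i$ is a backward edge deletion — and observing that the weight assigned to an edge (the time of its last insertion before the current stage) is well defined because in a zigzag filtration of $G$ each edge present in $G_i$ was inserted exactly once since it was last deleted.
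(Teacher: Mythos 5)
Your proof is correct and takes essentially the same route as the paper: induction along the filtration with a case analysis on whether the step adds/deletes a vertex or an edge (the paper anchors the induction at $i=0$ and proves the claim for all $j\le i$ simultaneously, whereas you fix $j$ and anchor at $i=j$ via $\Gres_j^j=G_j$; your deletion case also sidesteps the paper's sub-split on whether the deleted edge has weight below $j$ by simply removing it from both sides, a mild streamlining).

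Two small remarks. First, in the edge-addition case your justification \emph{``since $\Gres_j^{i+1}$ is built from $G_j$, which does not contain $\fsimp{\Fcal}{i}$''} is false in general: the edge added at step $i$ may lie in $G_j$, having been deleted at some backward step in $[j,i)$ and now being re-added --- this is precisely why the paper's deletion case refers to the \emph{latest} addition of the edge. The claim it was meant to support is still true (that edge is deleted in forming $\Gres_j^{i+1}$, or more simply it is absent from $\Gres_j^i=\Gres_j^{i+1}$ since it is not even in $G_i$), and your equality chain $E(\Gres_j^{i+1})=E(\Gres_j^i)=\{\text{weight}<j\text{ edges of }G_i\}=\{\text{weight}<j\text{ edges of }G_{i+1}\}$ does not rely on the remark, so the proof stands. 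Second, you list three cases; a backward step deleting a vertex is a fourth, but it is handled exactly like the vertex-addition case since neither the edge set nor $\Gres_j^{i+1}$ changes.
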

\begin{proof}
We can prove this by induction on $i$. For $i=0$, $G_{i}=\emptyset$
and the proposition is trivially true.
Suppose that the proposition is true for $i$.
If $G_i$ and $G_{i+1}$ differ by a vertex, then the proposition
is also true for $i+1$ because the edges stay the same.
If $G_{i+1}$ is derived from $G_i$ by adding an edge $\fsimp{\Fcal}{i}$,
by the assumption, $E(\Gres_j^{i})$
consists of all edges of $G_{i}$ whose weights are less than $j$
for each $j\leq i$.
Note that $E(\Gres_j^{i})=E(\Gres_j^{i+1})$ because $G_i\rightarrowsp{\fsimp{\Fcal}{i}}G_{i+1}$
is an addition.
So we have that $E(\Gres_j^{i+1})$
consists of all edges of $G_{i+1}$ whose weights are less than $j$
because $w(\fsimp{\Fcal}{i})=i\geq j$.
Since $E(\Gres_{i+1}^{i+1})=E(G_{i+1})$, the claim is also true for $E(\Gres_{i+1}^{i+1})$.
Now consider the situation that 
$G_{i+1}$ is derived from $G_i$ by deleting an edge $\fsimp{\Fcal}{i}$.
Then,
$\fsimp{\Fcal}{i}$ must be added to the filtration previously, 
and let $G_k\rightarrowsp{\fsimp{\Fcal}{k}}G_{k+1}$ 
with $k<i$ and $\fsimp{\Fcal}{k}=\fsimp{\Fcal}{i}$ be the {\it latest}
such addition. Note that $w(\fsimp{\Fcal}{i})=k$ in $G_i$.
For $j\leq k$, 
$\fsimp{\Fcal}{i}\not\in E(\Gres_j^{i})$ because $w(\fsimp{\Fcal}{i})=k\geq j$.
Since $E(\Gres_j^{i+1})=E(\Gres_j^{i})\setminus\Set{\fsimp{\Fcal}{i}}$,
we have that $E(\Gres_j^{i+1})=E(\Gres_j^{i})$.
Therefore,
$E(\Gres_j^{i+1})$
consists of all edges of $G_{i+1}$ whose weights are less than $j$
because $w(\fsimp{\Fcal}{i})\geq j$ in $G_i$.
For each $j$ s.t.\ $k<j\leq i$,
we have $\fsimp{\Fcal}{i}\in E(\Gres_j^{i})$ and
$E(\Gres_j^{i+1})=E(\Gres_j^{i})\setminus\Set{\fsimp{\Fcal}{i}}$.
Since $w(\fsimp{\Fcal}{i})<j$ in $G_i$,
it is true that $E(\Gres_j^{i+1})$
consists of all edges of $G_{i+1}$ whose weights are less than $j$,
and the proof is done.
\end{proof}

Suppose that in an iteration $i$ of Algorithm~\ref{alg:1-zigzag-graph-abstr},
$G_{i}\leftarrowsp{\fsimp{\Fcal}{i}} G_{i+1}$
provides a negative index~$i$.
Let $\upos_i=\Set{j_1<j_2<\cdots<j_\ell}$
and the vertices of $\fsimp{\Fcal}{i}$ be $u,v$.
Proposition~\ref{prop:G-j-i-charac} implies that
\begin{equation}\label{eqn:G-j-i-seq}
\Gres_{j_1}^{i+1}\subseteq \Gres_{j_2}^{i+1}\subseteq \cdots 
\subseteq \Gres_{j_\ell}^{i+1}\subseteq \Gres_{i+1}^{i+1}
\end{equation}

By Proposition~\ref{prop:pairing-rewrite},
in order to find
the positive index to pair with $i$,
one only needs to find the smallest $j'_*\in\upos_i$
s.t.\ $u,v$ are connected in $\Gres_{j'_*}^{i+1}$.
(Note that for a $j\in\upos_i$ to satisfy 
$\fsimp{\Fcal}{i}\in\Gres_j^{i}$, 
$j$ only needs to be greater than $w(\fsimp{\Fcal}{i})$;
such a smallest $j$ is easy to derive.)
We further expand Sequence~(\ref{eqn:G-j-i-seq}) into the following finer version:
\begin{equation}\label{eqn:G-j-i-seq-finer}
\Gres_{0}^{i+1}\subseteq \Gres_{1}^{i+1}\subseteq \cdots 
\subseteq \Gres_{i}^{i+1}\subseteq \Gres_{i+1}^{i+1}
\end{equation}
where each consecutive $\Gres_{k}^{i+1},\Gres_{k+1}^{i+1}$
are either the same or differ by only one edge.
To get $j'_*$, we instead scan Sequence~(\ref{eqn:G-j-i-seq-finer})
and find
the smallest $k_*\in\Set{0,1,\ldots,i+1}$
s.t.\ $u,v$ are connected in $\Gres_{k_*}^{i+1}$.
Proposition~\ref{prop:k-start-in-MST}
characterize such a $k_*$:



\begin{proposition}\label{prop:k-start-in-MST}
For a path in a weighted graph, let the \textbf{max edge-weight} of the path
be the maximum weight of its edges.
Then,
the integer
$k_*-1$ equals the max edge-weight of the unique path connecting $u,v$
in the unique minimum spanning forest of $G_{i+1}$.
\end{proposition}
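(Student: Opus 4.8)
The plan is to recognize $k_*-1$ as a minimum-bottleneck value between $u$ and $v$ in $G_{i+1}$ and then invoke the classical relationship between minimum spanning trees and minimum-bottleneck paths. First I would observe that, since iteration $i$ produces a negative index, $\fsimp{\Fcal}{i}$ is an edge being deleted whose endpoints $u,v$ are connected in $G_{i+1}$; hence $u$ and $v$ lie in a common component of $G_{i+1}$, the minimum spanning forest of $G_{i+1}$ restricted to that component is a spanning tree, and the path $P$ joining $u$ and $v$ in it is well defined. Uniqueness of the minimum spanning forest (hence of $P$) follows because within the single graph $G_{i+1}$ the weights $w(\cdot)$ are pairwise distinct: each surviving edge carries the index of its most recent insertion, and the filtration performs at most one insertion per index. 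Writing $W$ for the max edge-weight of $P$, I will show $k_*=W+1$.

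The key auxiliary fact is the standard minimum-bottleneck property: for any two vertices in one component of a weighted graph, the path joining them in the minimum spanning forest has the smallest max edge-weight among all paths joining them. I would either cite this or include the one-paragraph exchange argument: if some path $Q$ from $u$ to $v$ had max edge-weight smaller than that of $P$, then deleting the heaviest edge $e$ of $P$ from the forest splits it into two pieces, $Q$ must contain an edge $e'$ crossing the resulting cut with $w(e')<w(e)$, and replacing $e$ by $e'$ yields a spanning forest of strictly smaller total weight, a contradiction.

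Granting this, both inequalities are immediate from Proposition~\ref{prop:G-j-i-charac}, which identifies $E(\Gres_{k}^{i+1})$ with the set of edges of $G_{i+1}$ of weight less than $k$. For $k_*\le W+1$: every edge of $P$ has weight at most $W<W+1$, so $P\subseteq\Gres_{W+1}^{i+1}$, whence $u,v$ are connected in $\Gres_{W+1}^{i+1}$. For $k_*\ge W+1$: by definition of $k_*$ the vertices $u,v$ are connected in $\Gres_{k_*}^{i+1}$, so there is a path from $u$ to $v$ all of whose edges have weight less than $k_*$; the minimum-bottleneck property then gives $W\le k_*-1$. Combining, $k_*=W+1$, i.e.\ $k_*-1$ is the max edge-weight of $P$, as claimed.

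I do not expect a serious obstacle; the only care-points are the uniqueness of the minimum spanning forest (so that ``the unique path'' is meaningful) and the exchange argument for the bottleneck property, both routine. As an alternative that sidesteps even citing the bottleneck fact, one can argue directly from Sequence~(\ref{eqn:G-j-i-seq-finer}): by Proposition~\ref{prop:G-j-i-charac}, passing from $\Gres_{k}^{i+1}$ to $\Gres_{k+1}^{i+1}$ adds the unique weight-$k$ edge of $G_{i+1}$ if it is present and nothing otherwise, so scanning this sequence is exactly Kruskal's algorithm run on $G_{i+1}$; the first time $u$ and $v$ become connected is when Kruskal adds the edge merging their components, which is precisely the heaviest edge of $P$, and this occurs at step $k=W$, giving $k_*=W+1$.
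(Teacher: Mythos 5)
Your proof is correct, and its primary route differs from the paper's. You establish the equality $k_*=W+1$ by combining Proposition~\ref{prop:G-j-i-charac} with the classical minimum-bottleneck property of minimum spanning forests (proved by the standard cut/exchange argument), getting $k_*\le W+1$ because the MSF path itself lies in $\Gres_{W+1}^{i+1}$, and $k_*\ge W+1$ because any $u$--$v$ path in $\Gres_{k_*}^{i+1}$ has max edge-weight at most $k_*-1$, which the bottleneck property bounds from below by $W$. The paper instead argues via the matroid/Kruskal nesting property: it takes the MSF $T'$ of $\Gres_{k_*-1}^{i+1}$, notes that the edge $e$ with $w(e)=k_*-1$ by which $\Gres_{k_*-1}^{i+1}$ and $\Gres_{k_*}^{i+1}$ differ joins the components of $u$ and $v$ in $T'$, shows $T'\cup\{e\}$ is a subforest of the MSF $T$ of $G_{i+1}$, and concludes that the $u$--$v$ path in $T'\cup\{e\}$ is the $u$--$v$ path in $T$ and has max edge-weight exactly $k_*-1$. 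Both rest on routine MST facts; your version isolates a reusable lemma (minimax distance is realized on the MSF) and cleanly splits the claim into two inequalities, while the paper's version identifies the MSF path directly and pins down its heaviest edge in one step. The alternative you sketch at the end---reading the scan of Sequence~(\ref{eqn:G-j-i-seq-finer}) as a run of Kruskal's algorithm on $G_{i+1}$---is essentially the paper's argument. Your care-points (distinct weights in $G_{i+1}$ giving uniqueness of the MSF and of the path, and connectivity of $u,v$ in $G_{i+1}$ because the deletion produces a negative index) are exactly the right ones and are handled correctly; one could add the trivial remark that $W\le i$, so $W+1$ indeed lies in the scanned range $\{0,\ldots,i+1\}$.
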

\begin{proof}
We first notice that, since weighted graphs considered in this paper
have distinct weights, they all have unique minimum spanning forests.
Let $T$ be the minimum spanning forest of $\Gres_{i+1}^{i+1}$;
we prove an equivalent statement of the proposition, 
which is that 
$k_*-1$ equals the max edge-weight of the unique path connecting $u,v$
in $T$.
Since $k_*$ is the smallest index
s.t.\ $u,v$ are connected in $\Gres_{k_*}^{i+1}$,
we have that $u,v$ are {\it not} connected in $\Gres_{k_*-1}^{i+1}$.
This indicates that $\Gres_{k_*}^{i+1}\neq\Gres_{k_*-1}^{i+1}$.
Assume that $\Gres_{k_*}^{i+1}$ and $\Gres_{k_*-1}^{i+1}$ differ
by an edge $e$; then, $w(e)=k_*-1$.
Let $T'$ be the minimum spanning forest of $\Gres_{k_*-1}^{i+1}$.
Then, $u,v$ are not connected in $T'$ and $e$ connects the connected
components of $u,v$ in $T'$. Since spanning forests have a matroid structure,
$T'\union\Set{e}$ must be a subforest of $T$ (indeed, $e$ would be the edge added to $T'$ 
by Kruskal's algorithm; see, e.g.,~\cite{CLRS3rd}).
Also, since there is a unique path between two vertices in a forest,
the path from $u$ to $v$ in $T'\union\Set{e}$
must be the path from $u$ to $v$ in $T$.
This path has a max edge-weight of $k_*-1$ and the proof is done.
\end{proof}

Based on Proposition~\ref{prop:k-start-in-MST},
finding $k_*$ reduces to computing
the max edge-weight of the path
connecting $u,v$ in the minimum spanning forest (MSF) of $G_{i+1}$.
For this, we utilize the {\it dynamic-MSF} data structure 
proposed by Holm et al.~\cite{holm2001poly}.
Assuming that 
$n$ is the number of vertices and edges of $G$,
the dynamic-MSF data structure supports the following operations:
\begin{itemize}
    \item Return the identifier of a vertex's connected component in $O(\log n)$ time,
    which can be used to determine whether two vertices are connected.
    \item Return the max edge-weight of the path
    connecting any two vertices in the MSF in $O(\log n)$ time.
    \item Insert or delete an edge from the current graph (maintained by the data structure) 
    and possibly update the MSF in $O(\log^4 n)$ amortized time.
\end{itemize}

We now present the full details of Algorithm~\ref{alg:1-zigzag-graph-abstr}:
\pagebreak

\begin{algr*}[Algorithm~\ref{alg:1-zigzag-graph-abstr}: details]
\label{alg:1-zigzag-graph-concr}
\begin{itemize}\item[]\end{itemize}\noindent
Maintain a dynamic-MSF data structure $\Fbb$,
which consists of all vertices of $G$ and no edges initially.
Also, set $\upos_0=\emptyset$.
Then, for each $i=0,\ldots,\filtcnt-1$,
if $\fsimp{\Fcal}{i}$ is a vertex, do nothing;
otherwise, do the following:
\vspace{0.3em}
\begin{description}
    \item[Case $G_i\rightarrowsp{\fsimp{\Fcal}{i}}G_{i+1}$:]
    Check 
    whether vertices of $\fsimp{\Fcal}{i}$ are connected in $G_i$
    by querying $\Fbb$, and then add $\fsimp{\Fcal}{i}$ to $\Fbb$.
    If vertices of $\fsimp{\Fcal}{i}$ are connected in $G_i$,
    then 
    set $\upos_{i+1}=\upos_i\union\Set{i+1}$;
    otherwise, set $\upos_{i+1}=\upos_i$.
    
    \item[Case $G_i\leftarrowsp{\fsimp{\Fcal}{i}}G_{i+1}$:]
    Delete $\fsimp{\Fcal}{i}$ from $\Fbb$.
    If the vertices $u,v$ of $\fsimp{\Fcal}{i}$ are found to be {\rm not} connected in $G_{i+1}$
    by querying $\Fbb$, 
    then set $\upos_{i+1}=\upos_i$;
    otherwise,
    do the following:
    \begin{itemize}
        \item Find the max edge-weight $w_*$ of the path
        connecting $u,v$ in
        the MSF of $G_{i+1}$ by querying $\Fbb$.
        \item Find the smallest index $j_*$ of $\upos_i$ 
        greater than $\max\Set{w_*,w(\fsimp{\Fcal}{i})}$.
        {\rm(}Note that we can store $\upos_i$ as 
        a red-black tree~\cite{CLRS3rd}, so that finding $j_*$
        takes $O(\log n)$ time.{\rm)}
        \item Output an interval $[j_*,i]$ and set $\upos_{i+1}=\upos_i\setminus\Set{j_*}$.
    \end{itemize}
\end{description}

At the end, for each $j\in\upos_\filtcnt$,
output an interval $[j,\filtcnt]$.
\end{algr*}

Now we can see that Algorithm~\ref{alg:1-zigzag-graph-abstr} has time complexity $O(\filtcnt\log^4 n)$,
where each iteration is dominated by the update of $\Fbb$.

\section{Codimension-one zigzag persistence of embedded complexes}
\label{sec:alex-dual}

In this section, 
we present an efficient algorithm for computing the $(\Dim-1)$-th
barcode given a zigzag filtration of an $\Real^\Dim$-embedded complex,
by extending our algorithm for 0-dimension with the help of Alexander duality~\cite{munkres2018elements}.

Throughout this section, 
$\Dim\geq 2$, $K$ is a simplicial complex embedded in $\Real^\Dim$,
and $\Fcal: \emptyset=K_0 \leftrightarrow \cdots \leftrightarrow K_\filtcnt$
is a zigzag filtration of $K$.
We call connected components of $\Real^\Dim\setminus|K|$ as {\it voids} of $K$
or {\it $K$-voids},
to emphasize that 
only voids of $K$ are considered in this section.
The {\it dual graph} $G$ of $K$ has the vertices corresponding to
the voids as well as the $\Dim$-simplices of $K$,
and has the edges corresponding to the $(\Dim-1)$-simplices of $K$.
The {\it dual filtration} 
$\dfilt: G=G_0 \leftrightarrow G_1 \leftrightarrow \cdots \leftrightarrow G_\filtcnt$ 
of $\Fcal$ consists of subgraphs $G_i$ of $G$ such that: 
({\sf i})~all vertices of $G$ dual to a $K$-void are in $G_i$;
({\sf ii})~a vertex of $G$ dual to a $\Dim$-simplex is in $G_i$ iff
the dual $\Dim$-simplex is {\it not} in $K_i$;
({\sf iii})~an edge of $G$ is in $G_i$ iff
its dual $(\Dim-1)$-simplex is {\it not} in $K_i$.

One could verify that each $G_i$ is a well-defined subgraph of $G$.
We note the following: ({\sf i})~inclusion directions in $\dfilt$ are reversed;
({\sf ii})~$\dfilt$ is not exactly a zigzag filtration (because
an arrow may introduce no changes)
but can be easily made into one.
Figure~\ref{fig:df} gives an example
in $\Real^2$,
in which we observe the following: whenever a $(\Dim-1)$-cycle 
(i.e., 1-cycle) is formed in the primal filtration,
a connected component in the dual filtration splits;
whenever a $(\Dim-1)$-cycle 
is killed in the primal filtration,
a connected component in the dual filtration vanishes.
Intuitively, $G_i$ encodes the connectivity 
of $\Real^\Dim\setminus|K_i|$,
and so by Alexander duality,
we have the following proposition: 

\begin{figure}
  \centering
  \includegraphics[width=0.75\linewidth]{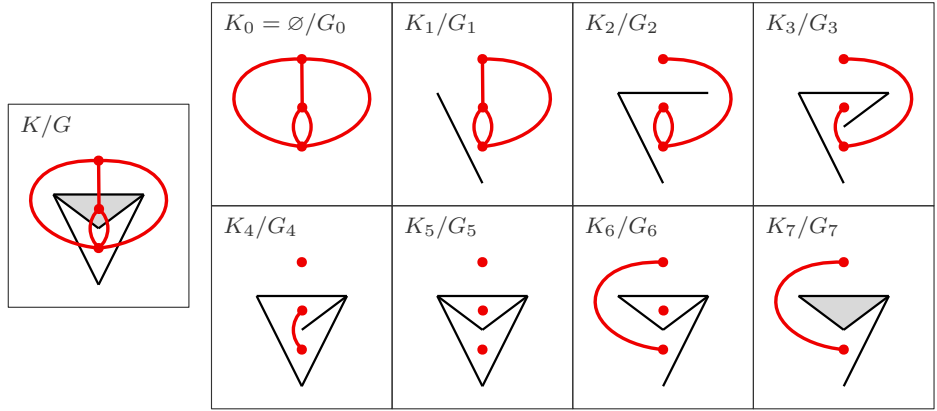}
  \caption{A zigzag filtration of an $\Real^2$-embedded complex $K$
  and its dual filtration, where the dual graphs are colored red.
  For brevity, changes of vertices in the primal filtration are ignored.}
  \label{fig:df}
\end{figure}

\begin{proposition}\label{prop:equiv-barc}
$\Pers(\Hm_{\Dim-1}(\Fcal))=\Pers(\Hmr_0(\dfilt))$.
\end{proposition}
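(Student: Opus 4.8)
The plan is to establish a level-wise correspondence between the $(\Dim-1)$-th homology of the primal filtration and the reduced $0$-th homology of the dual filtration, and then lift it to an isomorphism of persistence modules. First I would invoke Alexander duality in the following form: for a compact, locally contractible proper subspace $|K_i|$ of $\Real^\Dim$ (or its one-point compactification $\sph^\Dim$), we have $\Hmr_{\Dim-1}(|K_i|)\cong\Hmr_0(\sph^\Dim\setminus|K_i|)$ with $\Zbb_2$ coefficients. Since $K$ is a finite simplicial complex, $\Hm_{\Dim-1}(K_i)=\Hmr_{\Dim-1}(K_i)$ when $\Dim\geq 2$, so the left-hand side is exactly the $\Dim{-}1$-st vector space appearing in $\Hm_{\Dim-1}(\Fcal)$. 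For the right-hand side, I would argue that $G_i$ is a deformation retract (up to homotopy at the level of $\Hmr_0$, i.e., a bijection on connected components) of $\sph^\Dim\setminus|K_i|$: the voids of $K$ are the connected components of $\Real^\Dim\setminus|K|$ and they refine into the components of $\Real^\Dim\setminus|K_i|$ as simplices are removed; the vertices of $G$ dual to $\Dim$-simplices not in $K_i$ sit inside $\Real^\Dim\setminus|K_i|$, and an edge of $G$ (dual to a $(\Dim-1)$-simplex not in $K_i$) connects the two regions adjacent to that missing $(\Dim-1)$-simplex. A standard nerve/Mayer--Vietoris-style argument then shows $\pi_0(G_i)\cong\pi_0(\Real^\Dim\setminus|K_i|)=\pi_0(\sph^\Dim\setminus|K_i|)$, hence $\Hmr_0(G_i)\cong\Hmr_0(\sph^\Dim\setminus|K_i|)$.

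Next I would check naturality with respect to the arrows. A forward inclusion $K_i\incto K_{i+1}$ in $\Fcal$ (one simplex $\sigma$ added) corresponds, by the construction of $\dfilt$, to a backward inclusion $G_{i+1}\incto G_i$ (the dual vertex or edge of $\sigma$ is removed when $\sigma$ enters $K$, i.e., is present in $G_i$ but absent in $G_{i+1}$); a backward inclusion in $\Fcal$ similarly gives a forward inclusion in $\dfilt$. So the arrow directions in $\Hm_{\Dim-1}(\Fcal)$ and $\Hmr_0(\dfilt)$ agree after the reversal built into $\dfilt$. The content is that the Alexander duality isomorphisms commute with the maps induced by these inclusions. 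This is the classical functoriality of Alexander duality: for $|K_i|\subseteq|K_{i+1}|$ the inclusion-induced map $\Hmr_{\Dim-1}(K_i)\to\Hmr_{\Dim-1}(K_{i+1})$ corresponds under duality to the restriction/inclusion-induced map $\Hmr_0(\sph^\Dim\setminus|K_{i+1}|)\to\Hmr_0(\sph^\Dim\setminus|K_i|)$, and this in turn matches the $G_{i+1}\incto G_i$ induced map on $\Hmr_0$ under the $\pi_0$-identification above. Assembling these squares for all $i$ yields an isomorphism of zigzag modules $\Hm_{\Dim-1}(\Fcal)\cong\Hmr_0(\dfilt)$, and isomorphic modules have identical interval decompositions, so $\Pers(\Hm_{\Dim-1}(\Fcal))=\Pers(\Hmr_0(\dfilt))$.

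I expect the main obstacle to be the geometric step identifying $\Hmr_0(G_i)$ with $\Hmr_0$ of the complement, done carefully enough that it is natural in $i$ — in particular making precise the claim that $G_i$ captures the connectivity of $\Real^\Dim\setminus|K_i|$ including the single unbounded region and its compactification in $\sph^\Dim$, and that adding/removing one simplex on the primal side produces exactly the expected merge or split on the dual side. One clean way to handle this uniformly is to fix a CW/handle decomposition of $\sph^\Dim$ dual to $K$ and observe $G$ is its $1$-skeleton restricted to the $0$- and $1$-handles lying in the complement; then $\pi_0$ is read off the $1$-skeleton, and the behavior under a single simplex insertion/deletion is a local computation. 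A subtlety to flag explicitly is the role of reduced homology: $\Hmr_0$ is needed on the dual side so that the always-present unbounded void does not contribute a spurious infinite bar, matching the fact that $\Hmr_{\Dim-1}$ already quotients out the trivial class; this is why the statement uses $\Hmr_0(\dfilt)$ rather than $\Hm_0(\dfilt)$, and the bookkeeping should be consistent with how $\ninds$ was set up (the $\rank\vsp_\filtcnt$ copies of $m$) in Definition~\ref{dfn:pos-neg-inds}.
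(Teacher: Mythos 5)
Your level-wise use of Alexander duality and your identification of $\Hmr_0(G_i)$ with $\Hmr_0(\Real^\Dim\setminus|K_i|)$ (natural in $i$) match the paper's strategy, and that second identification is exactly what the paper verifies separately. The gap is in the final assembly step: you claim the duality squares commute and therefore yield an isomorphism of zigzag modules $\Hm_{\Dim-1}(\Fcal)\cong\Hmr_0(\dfilt)$. No such isomorphism can exist, because the two modules do not even have the same shape: a forward inclusion $K_i\incto K_{i+1}$ induces a forward map $\Hm_{\Dim-1}(K_i)\to\Hm_{\Dim-1}(K_{i+1})$, but on the dual side it gives $G_{i+1}\subseteq G_i$ and hence a \emph{backward} map $\Hmr_0(G_{i+1})\to\Hmr_0(G_i)$. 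What naturality of Alexander duality actually provides is a commuting square with \emph{cohomology} of the complements (or equivalently with $\Hom(\Hmr_0(\cdot),\Zbb_2)$ via universal coefficients): the primal map corresponds to the \emph{transpose} of the dual-side map, not to the map itself. The homology-to-homology form of duality you quote holds only levelwise over the field $\Zbb_2$; it is not natural, so "the Alexander duality isomorphisms commute with the maps induced by these inclusions" is precisely the step that fails as stated.

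What is missing is the bridge the paper supplies as its middle proposition: after the (genuinely natural) chain $\Hm_{\Dim-1}(\Fcal)\cong\Hmr^0$ of complements $\cong\Hom(\Hmr_0(\Real^\Dim\setminus|K_i|),\Zbb_2)$ (all with arrows in the same direction as $\Fcal$), one must show that this $\Hom$-dual module and the arrow-reversed module $\Hmr_0(\Real^\Dim\setminus|K_\bullet|)$ have the \emph{same barcode}, even though they are not isomorphic as diagrams. The paper proves this by taking an interval decomposition of the homology-side module and exhibiting, for each interval, a set of representatives in the $\Hom$-side module (dual-basis functionals), then invoking the representative-based criterion of Proposition~\ref{prop:pn-paring-w-rep}; one also needs to check that positive and negative indices of the two modules coincide so the intervals pair the same endpoints. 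Without this lemma (or some equivalent argument that vector-space dualization of a finite-dimensional zigzag module preserves its barcode interval-by-interval), your conclusion $\Pers(\Hm_{\Dim-1}(\Fcal))=\Pers(\Hmr_0(\dfilt))$ does not follow from the level-wise isomorphisms. The remaining ingredients of your sketch (compactification versus $\Real^\Dim$, reduced versus unreduced $0$-dimensional homology, and the combinatorial identification of components of $\Real^\Dim\setminus|K_i|$ with components of $G_i$) are sound in outline and correspond to the paper's remaining steps.
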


We conduct the proof of Proposition~\ref{prop:equiv-barc}
by introducing two modules $\Mcal_3$, $\Mcal_4$ and proving the following:
\begin{itemize}
\item
$\Pers(\Hm_{\Dim-1}(\Fcal))=\Pers(\Mcal_3)$, in Proposition~\ref{prop:equiv-barc-M3}.
\item
$\Pers(\Mcal_3)=\Pers(\Mcal_4)$, in Proposition~\ref{prop:equiv-barc-M3-M4}.
\item
$\Pers(\Mcal_4)=\Pers(\Hmr_0(\dfilt))$, in Proposition~\ref{prop:equiv-barc-M4}.
\end{itemize}

\begin{proposition}\label{prop:equiv-barc-M3}
The following zigzag modules are isomorphic:
\[
{\xymatrix{
\Hm_{\Dim-1}(\Fcal):\hspace{-5em}
  & \Hm_{\Dim-1}(K_0) \ar[d]^(.43){\rotatebox{90}{$\approx$}} \ar@{<->}[r]^(.52){\iG_*} 
  & \Hm_{\Dim-1}(K_1) \ar[d]^(.43){\rotatebox{90}{$\approx$}} \ar@{<->}[r]^(.72){\iG_*} 
  & \cdots \ar@{<->}[r]^(.29){\iG_*} 
  & \Hm_{\Dim-1}(K_\filtcnt) \ar[d]^(.43){\rotatebox{90}{$\approx$}}\\ 
\Mcal_2:\hspace{-2.8em}
  & \Hmr^{0}(\Real^\Dim\setminus|K_0|) \ar[d]^(.43){\rotatebox{90}{$\approx$}} \ar@{<->}[r]^(.52){\iG^*} 
  & \Hmr^{0}(\Real^\Dim\setminus|K_1|) \ar[d]^(.43){\rotatebox{90}{$\approx$}} \ar@{<->}[r]^(.72){\iG^*} 
  & \cdots  \ar@{<->}[r]^(.29){\iG^*} 
  & \Hmr^{0}(\Real^\Dim\setminus|K_\filtcnt|) \ar[d]^(.43){\rotatebox{90}{$\approx$}} \\ 
\Mcal_3:\hspace{-2.8em}
  & \Hom(\Hmr_{0}(\Real^\Dim\setminus|K_0|)) \ar@{<->}[r]^(.52){(\iG_*)^*} 
  & \Hom(\Hmr_{0}(\Real^\Dim\setminus|K_1|)) \ar@{<->}[r]^(.75){(\iG_*)^*} 
  & \cdots  \ar@{<->}[r]^(.29){(\iG_*)^*} 
  & \Hom(\Hmr_{0}(\Real^\Dim\setminus|K_\filtcnt|)) \\ 
}}
\]
which implies that $\Pers(\Hm_{\Dim-1}(\Fcal))=\Pers(\Mcal_3)$.
\end{proposition}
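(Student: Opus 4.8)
The plan is to verify that the two families of vertical arrows drawn in the diagram are isomorphisms and that each of the two squares commutes; once this is done, the three rows are isomorphic as zigzag modules, and $\Pers(\Hm_{\Dim-1}(\Fcal))=\Pers(\Mcal_2)=\Pers(\Mcal_3)$ follows because isomorphic zigzag modules have identical interval decompositions, hence identical barcodes (cf.\ \cite{carlsson2010zigzag}). That the verticals are isomorphisms is classical: the top arrows are the Alexander duality isomorphisms $\Hm_{\Dim-1}(K_i)\cong\Hmr^0(\Real^\Dim\setminus|K_i|)$ (each $|K_i|$ is a compact polyhedron, so Alexander duality applies, and since $\Dim\ge 2$ one may freely replace $\Hm_{\Dim-1}$ by $\Hmr_{\Dim-1}$; passing through $\sph^\Dim$ costs nothing in degree $0$ because adjoining the point at infinity to the unbounded component of the complement changes neither $\Hmr_0$ nor $\Hmr^0$), and the bottom arrows come from the universal coefficient theorem. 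So the real content is the commutativity of the two squares, i.e.\ naturality of these isomorphisms with respect to the maps induced by the inclusions of $\Fcal$.

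For the lower square I would argue as follows. We work over the field $\Zbb_2$, so $\mathrm{Ext}^1_{\Zbb_2}(-,\Zbb_2)=0$ and the universal coefficient theorem specializes to a natural isomorphism $\Hmr^0(X;\Zbb_2)\cong\Hom(\Hmr_0(X;\Zbb_2),\Zbb_2)$ for every space $X$; moreover, under this identification the map induced on reduced cohomology by a continuous map $f$ is exactly $\Hom(f_*,\Zbb_2)=(f_*)^*$. Instantiating $X$ by $\Real^\Dim\setminus|K_i|$ and $f$ by the complement inclusion attached to each arrow of $\Fcal$, the lower square commutes essentially by definition.

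For the upper square the first thing to record is that the variances match: a forward inclusion $K_i\incto K_{i+1}$ yields the reverse inclusion $\Real^\Dim\setminus|K_{i+1}|\subseteq\Real^\Dim\setminus|K_i|$, so both the induced map $\Hm_{\Dim-1}(K_i)\to\Hm_{\Dim-1}(K_{i+1})$ and the induced map $\Hmr^0(\Real^\Dim\setminus|K_i|)\to\Hmr^0(\Real^\Dim\setminus|K_{i+1}|)$ point from level $i$ to level $i+1$ (and symmetrically for a backward inclusion $K_i\bakincto K_{i+1}$, read in the opposite direction). To see that the Alexander duality isomorphisms intertwine these two maps, I would realize Alexander duality as a composite of a connecting homomorphism in the long exact sequence of the pair $(\sph^\Dim,\sph^\Dim\setminus|K_i|)$, an excision isomorphism, and the Poincar\'e--Lefschetz duality isomorphism of $\sph^\Dim$ (cap product with the fundamental class); each of these is natural with respect to maps of pairs, in particular with respect to the pair inclusion $(\sph^\Dim,\sph^\Dim\setminus|K_{i+1}|)\to(\sph^\Dim,\sph^\Dim\setminus|K_i|)$, so the composite is natural as well. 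Alternatively one may simply quote the naturality of Alexander duality from a standard reference such as \cite{munkres2018elements}.

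I expect the main obstacle to be precisely this last point: assembling the standard ingredients of Alexander duality into a single isomorphism and checking it is compatible with restriction along $\sph^\Dim\setminus|K_{i+1}|\subseteq\sph^\Dim\setminus|K_i|$, while keeping the contravariant ``complement'' functor straight throughout. The $\Real^\Dim$-versus-$\sph^\Dim$ bookkeeping, the reduced-versus-unreduced bookkeeping, and the (purely formal) remark that $\dfilt$ can be made into a genuine zigzag filtration are routine and I would relegate them to a remark. Once both squares are shown to commute, every vertical arrow is an isomorphism, the three rows are isomorphic zigzag modules, and the claimed equality $\Pers(\Hm_{\Dim-1}(\Fcal))=\Pers(\Mcal_3)$ follows.
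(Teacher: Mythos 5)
Your proposal is correct and follows essentially the same route as the paper, whose proof simply cites Alexander duality (Munkres, Corollary~72.4) for the isomorphism $\Hm_{\Dim-1}(\Fcal)\cong\Mcal_2$ and the universal coefficient theorem over $\Zbb_2$ for $\Mcal_2\cong\Mcal_3$, leaving the naturality of both isomorphisms implicit. The additional care you take in checking variances and naturality (and the $\Real^\Dim$-versus-$\sph^\Dim$ and reduced-versus-unreduced bookkeeping) is exactly the detail the paper suppresses, so there is no substantive difference in approach.
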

\begin{proof}
In the diagram, $\iG_*$ and $\iG^*$ are induced by inclusion,
$\Hom(\Hmr_{0}(\Real^\Dim\setminus|K_i|))$ is the space of all linear maps
from $\Hmr_{0}(\Real^\Dim\setminus|K_i|)$ to $\Zbb_2$,
and $(\iG_*)^*$ is the dual of $\iG_*$ (i.e., $(\iG_*)^*(g)=g\circ\iG_*$).
The isomorphism between $\Hm_{\Dim-1}(\Fcal)$ and $\Mcal_2$
is given by Alexander duality~\cite[Corollary~72.4]{munkres2018elements},
and the isomorphism between $\Mcal_2$ and $\Mcal_3$ is
given by the universal coefficient theorem~\cite[pg.~196,198]{hatcher2002algebraic}.
\end{proof}

\begin{proposition}\label{prop:equiv-barc-M3-M4}
Define an elementary zigzag module $\Mcal_4$ as:
\[
\Mcal_4:
\Hmr_{0}(\Real^\Dim\setminus|K_0|) \leftrightarrowsp{\iG_*}
\Hmr_{0}(\Real^\Dim\setminus|K_1|) \leftrightarrowsp{\iG_*}
\cdots  \leftrightarrowsp{\iG_*}
\Hmr_{0}(\Real^\Dim\setminus|K_\filtcnt|)
\]
Then, one has that $\Pers(\Mcal_3)=\Pers(\Mcal_4)$.
\end{proposition}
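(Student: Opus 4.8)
The plan is to recognize $\Mcal_3$ as the \emph{dual} zigzag module of $\Mcal_4$ and then to invoke the fact that dualizing a module with finite-dimensional spaces leaves its barcode unchanged.

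First I would make the identification explicit. Abbreviate $U_i := \Hmr_0(\Real^\Dim\setminus|K_i|)$, so $\Mcal_4$ has $i$-th space $U_i$ and $i$-th arrow the map $\iG_*$ induced by the relevant inclusion of complements, while $\Mcal_3$ has $i$-th space $\Hom(U_i,\Zbb_2)$ and $i$-th arrow the transpose $(\iG_*)^*$ of that same map. Since $K$ is a finite complex, $\Real^\Dim\setminus|K_i|$ has finitely many connected components, so every $U_i$ is finite-dimensional. Furthermore the two families of arrows point in opposite directions: when $\Fcal$ is backward at $i$ (a simplex deleted) the complement inclusion is $\Real^\Dim\setminus|K_i|\hookrightarrow\Real^\Dim\setminus|K_{i+1}|$, giving a forward arrow $\iG_*\colon U_i\to U_{i+1}$ of $\Mcal_4$ whose transpose $(\iG_*)^*\colon\Hom(U_{i+1},\Zbb_2)\to\Hom(U_i,\Zbb_2)$ is a backward arrow of $\Mcal_3$, and symmetrically a backward arrow of $\Mcal_4$ (a forward step of $\Fcal$) dualizes to a forward arrow of $\Mcal_3$. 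Hence $\Mcal_3$ is exactly the module obtained from $\Mcal_4$ by applying the contravariant functor $\Hom(-,\Zbb_2)$ at every vertex and reversing every arrow; write $\Mcal_3=\Mcal_4^{*}$.

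Next I would bring in the interval decomposition. The module $\Mcal_4$ has finite length and pointwise finite-dimensional spaces, so it admits an interval decomposition $\Mcal_4=\bigoplus_{\aG\in\Acal}\Ical^{[\birth_\aG,\death_\aG]}$ with $\Acal$ finite and $\Pers(\Mcal_4)=\Set{[\birth_\aG,\death_\aG]\given\aG\in\Acal}$. Since $\Hom(-,\Zbb_2)$ is exact on $\Zbb_2$-vector spaces and the sum is finite, applying it vertexwise carries this into a decomposition $\Mcal_4^{*}=\bigoplus_{\aG\in\Acal}(\Ical^{[\birth_\aG,\death_\aG]})^{*}$ of $\Mcal_3$ into submodules, the dual of $U_i$ splitting as the dual of the given splitting and the transposed maps respecting it. Finally, dualizing does not change the interval of an interval module: $\Ical^{[\birth,\death]}$ is one-dimensional at the indices in $[\birth,\death]$, trivial outside, with identity maps in between, so $(\Ical^{[\birth,\death]})^{*}$ is again one-dimensional on $[\birth,\death]$, trivial outside, with the transposed identity maps in between, i.e.\ an interval submodule of $\Mcal_3$ over the same $[\birth,\death]$. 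Thus $\Pers(\Mcal_3)=\Set{[\birth_\aG,\death_\aG]\given\aG\in\Acal}=\Pers(\Mcal_4)$. The only delicate point --- and the nearest thing to an obstacle --- is the bookkeeping in the first step: verifying that the transposes of the arrows of $\Mcal_4$ really are the maps $(\iG_*)^*$ appearing in Proposition~\ref{prop:equiv-barc-M3}, which is a matter of tracing the Alexander-duality and universal-coefficient identifications through, together with noting that pointwise finite-dimensionality is exactly what allows $\Hom(-,\Zbb_2)$ to commute with the (finite) direct sum.
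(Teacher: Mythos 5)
Your proof is correct, but it reaches the conclusion by a different mechanism than the paper. The paper also starts from an interval decomposition of $\Mcal_4$, but it then routes the argument through its representative-based criterion (Proposition~\ref{prop:pn-paring-w-rep}): it first checks that transposition swaps injections-with-cokernel and surjections-with-kernel, so that $\pinds(\Mcal_3)=\pinds(\Mcal_4)$ and $\ninds(\Mcal_3)=\ninds(\Mcal_4)$ and the intervals of $\Pers(\Mcal_4)$ give a bijection between them, and then exhibits, for each interval $[b_k,d_k]$, a set of representatives in $\Mcal_3$ --- namely the dual-basis functionals sending the chosen generator $\aG^k_i$ to $1$ and the other basis elements to $0$. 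You instead dualize the whole decomposition at once: since each $\Hmr_{0}(\Real^\Dim\setminus|K_i|)$ is finite-dimensional, $\Hom(-,\Zbb_2)$ commutes with the finite direct sum, the transposed maps preserve the summands, and the dual of an interval module is an interval module over the same interval, so $\Pers(\Mcal_3)=\Pers(\Mcal_4)$ follows from uniqueness of the interval decomposition, with no discussion of positive or negative indices needed. The underlying data is the same (the generators of your dual interval summands are exactly the paper's representatives $\bG_i$), but you lean on the standard uniqueness of zigzag barcodes while the paper stays inside its own toolkit of Proposition~\ref{prop:pn-paring-w-rep}; both are valid, and yours is arguably the shorter and more standard argument. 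One small correction: identifying the arrows of $\Mcal_3$ with the transposes of those of $\Mcal_4$ requires no tracing of Alexander duality or universal coefficients --- by the definitions in Proposition~\ref{prop:equiv-barc-M3} and in the present statement, both modules are already written on $\Hmr_{0}(\Real^\Dim\setminus|K_i|)$ and its dual, with maps $\iG_*$ and $(\iG_*)^*(g)=g\circ\iG_*$, so the identification is immediate from the definitions.
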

\begin{proof}
First note that directions of corresponding arrows in $\Mcal_3$ and $\Mcal_4$
are reversed.
To prove the proposition, 
first observe that for each linear map
$\iG_*$ in $\Mcal_4$ and 
its corresponding map $(\iG_*)^*$ in $\Mcal_3$,
if $\iG_*$ is an isomorphism, then $(\iG_*)^*$ is an isomorphism;
if $\iG_*$ is injective with non-trivial cokernel, 
then $(\iG_*)^*$ is surjective with non-trivial kernel;
if $\iG_*$ is surjective with non-trivial kernel, 
then $(\iG_*)^*$ is injective with non-trivial cokernel.
Therefore, $\pinds(\Mcal_3)=\pinds(\Mcal_4)$ and $\ninds(\Mcal_3)=\ninds(\Mcal_4)$.
So intervals in $\Pers(\Mcal_4)$ also induce a bijection 
from $\pinds(\Mcal_3)$ to $\ninds(\Mcal_3)$,
by mapping the start of each interval to the end.
Based on Proposition~\ref{prop:pn-paring-w-rep}, 
in order to prove that $\Pers(\Mcal_3)=\Pers(\Mcal_4)$,
we only need to show that every interval in $\Pers(\Mcal_4)$
admits a set of representatives in the module $\Mcal_3$.
Let $\Mcal_4=\bigoplus_{k\in\LG}\Ical^{[b_k,d_k]}$ 
be an interval decomposition of $\Mcal_4$.
Moreover,
for each $k\in\LG$ and each $i\in[b_k,d_k]$, 
let $\Ical^{[b_k,d_k]}(i)$ be the $i$-th vector space in $\Ical^{[b_k,d_k]}$
and let $\aG^k_i$ be the non-zero element 
in $\Ical^{[b_k,d_k]}(i)$.
Then, for each $k\in\LG$, define a set of representatives 
$\bigSet{\bG_i\in\Hom(\Hmr_{0}(\Real^\Dim\setminus|K_i|))
\given i\in[b_k,d_k]}$ for $[b_k,d_k]$ in $\Mcal_3$
as follows:
for each $i\in[b_k,d_k]$, let 
$\Bcal=\bigSet{\aG^\ell_i\given \ell\in\LG\text{ and }[b_\ell,d_\ell]\ni i}$ 
and note that $\Bcal$ forms a basis of $\Hmr_{0}(\Real^\Dim\setminus|K_i|)$;
then, $\bG_i$ maps 
$\aG^k_i$ to 1 and all the other elements in $\Bcal$ to 0.
It can then be verified that $\Set{\bG_i\given i\in[b_k,d_k]}$
forms a valid set of representatives for $[b_k,d_k]$ in $\Mcal_3$.
\end{proof}

\begin{proposition}\label{prop:equiv-barc-M4}
The following zigzag modules are isomorphic:
\[
{\xymatrix{
\Hmr_0(\dfilt):\hspace{-3.8em}
  & \Hmr_{0}(G_0) \ar[d]_(.44)\thG^(.43){\rotatebox{90}{$\approx$}} \ar@{<->}[r]^(.5){\iG_*} 
  & \Hmr_{0}(G_1) \ar[d]_(.44)\thG^(.43){\rotatebox{90}{$\approx$}} \ar@{<->}[r]^(.68){\iG_*} 
  & \cdots  \ar@{<->}[r]^(.33){\iG_*} 
  & \Hmr_{0}(G_\filtcnt) \ar[d]_(.43)\thG^(.44){\rotatebox{90}{$\approx$}} \\ 
\Mcal_4:\hspace{-2.8em}
  & \Hmr_{0}(\Real^\Dim\setminus|K_0|) \ar@{<->}[r]^(.5){\iG_*} 
  & \Hmr_{0}(\Real^\Dim\setminus|K_1|) \ar@{<->}[r]^(.68){\iG_*} 
  & \cdots  \ar@{<->}[r]^(.33){\iG_*} 
  & \Hmr_{0}(\Real^\Dim\setminus|K_\filtcnt|) \\ 
}}
\]
which implies that $\Pers(\Mcal_4)=\Pers(\Hmr_0(\dfilt))$.
\end{proposition}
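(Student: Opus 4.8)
The plan is to exhibit, for each index $i$, an explicit isomorphism $\thG$ from $\Hmr_0(G_i)$ to $\Hmr_0(\Real^\Dim\setminus|K_i|)$ and then check that these isomorphisms commute with the maps $\iG_*$ on both rows, so that the whole ladder is a natural isomorphism of zigzag modules; the equality of barcodes is then immediate from the fact that isomorphic modules have equal interval decompositions. The natural candidate for $\thG$ comes from the construction of the dual graph: every vertex of $G_i$ is dual either to a $K_i$-void (a connected component of $\Real^\Dim\setminus|K_i|$) or to a $\Dim$-simplex not in $K_i$, and every such $\Dim$-simplex lies in the interior of exactly one void. Thus each vertex of $G_i$ determines a unique connected component of $\Real^\Dim\setminus|K_i|$, and I claim this assignment extends to a bijection between the connected components of $G_i$ and the $K_i$-voids, inducing the desired isomorphism on reduced $0$-th homology.

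**First** I would prove that two vertices of $G_i$ lie in the same connected component of $G_i$ if and only if the corresponding points of $\Real^\Dim\setminus|K_i|$ lie in the same void. The ``only if'' direction follows by checking it for a single edge: an edge of $G_i$ is dual to a $(\Dim-1)$-simplex $\tG\notin K_i$, and crossing $\tG$ moves between the open stars of the two incident $\Dim$-simplices (or voids), which are joined through the relative interior of $\tG$ inside $\Real^\Dim\setminus|K_i|$; hence the two endpoints map into the same void. For the ``if'' direction, given two $\Dim$-simplices or voids in the same component of $\Real^\Dim\setminus|K_i|$, one connects their duals by a generic path in $\Real^\Dim$ that meets $|K_i|$ only transversally in relative interiors of $(\Dim-1)$-simplices, none of which is in $K_i$; the sequence of $(\Dim-1)$-simplices crossed gives a walk in $G_i$. (This is the standard Poincaré-Lefschetz-type argument realizing the $1$-skeleton-versus-codimension-one duality; it is exactly the intuition stated before the proposition.) Consequently the map sending a connected component of $G_i$ to the void containing it is a bijection onto the set of $K_i$-voids, so it induces a linear isomorphism $\thG\colon \Hmr_0(G_i)\xrightarrow{\ \approx\ }\Hmr_0(\Real^\Dim\setminus|K_i|)$ (both spaces being freely generated, modulo the reduced relation, by these components).

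**Next** I would verify naturality, i.e.\ that each square in the ladder commutes. Consider an arrow $G_i\leftrightarrow G_{i+1}$ in $\dfilt$; by construction of $\dfilt$ its direction is opposite to that of $\Fcal$ at step $i$, and it inserts or deletes the single vertex/edge dual to the simplex $\fsimp{\Fcal}{i}$. Suppose $K_i\incto K_{i+1}$ adds $\fsimp{\Fcal}{i}$, so $G_{i+1}\incto G_i$ (reading the dual arrow in the direction that makes it an inclusion of graphs) — the corresponding map on $\Hmr_0$ goes $\Hmr_0(G_{i+1})\to\Hmr_0(G_i)$, and in the bottom row we have $\Hmr_0(\Real^\Dim\setminus|K_{i+1}|)\to\Hmr_0(\Real^\Dim\setminus|K_i|)$ induced by the inclusion $\Real^\Dim\setminus|K_{i+1}|\hookrightarrow\Real^\Dim\setminus|K_i|$. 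Commutativity of the square amounts to the observation that a connected component of $G_{i+1}$ and its image component of $G_i$ map, under $\thG$, to a void of $K_{i+1}$ and the void of $K_i$ containing it, respectively — which is precisely the compatibility of the component-to-void assignments across the inclusion. The case of a deletion and the case where $\fsimp{\Fcal}{i}$ is a vertex (where the relevant arrow is an isomorphism on both rows, or an identity) are handled identically or are trivial. Having a commuting ladder of vertical isomorphisms, we conclude $\Mcal_4\cong\Hmr_0(\dfilt)$ as zigzag modules, hence $\Pers(\Mcal_4)=\Pers(\Hmr_0(\dfilt))$.

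**The main obstacle** I expect is the rigorous ``if'' direction of the component-to-void correspondence: one must argue carefully that any two $\Dim$-simplices (or a $\Dim$-simplex and a void, or two voids) lying in the same connected component of $\Real^\Dim\setminus|K_i|$ can be joined by a path that avoids all simplices of dimension $\le \Dim-2$ of $K_i$ and crosses only $(\Dim-1)$-simplices not belonging to $K_i$. This is where the embedding in $\Real^\Dim$ and a transversality/general-position argument is essential — a path can always be perturbed off the codimension-$\ge 2$ skeleton, and off the $(\Dim-1)$-simplices that \emph{are} present in $K_i$ since those bound the void — and it is the geometric heart of Alexander duality being used in this concrete simplicial form. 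I would keep this argument short by invoking general position, but it is the step that carries the real content. The remaining verifications (that $\thG$ is well-defined, linear, and that reduced $0$-homology is freely generated by components modulo one relation) are routine.
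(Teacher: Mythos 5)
Your proposal is correct and takes essentially the same route as the paper: the paper likewise defines $\thG$ via a map $\tilde{\thG}$ sending each vertex of $G_i$ to a point of its corresponding $K$-void or $\Dim$-simplex, reduces the isomorphism to exactly the component-level correspondence you describe, and notes the commutativity of the ladder, the only difference being that the paper verifies the connectivity facts by induction along the filtration while you argue them directly by a general-position argument. One small addition: besides the equivalence ``$v_1$ and $v_2$ are connected in $G_i$ iff their image points are connected in $\Real^\Dim\setminus|K_i|$,'' you should also record (as the paper does) that every point of $\Real^\Dim\setminus|K_i|$ is path-connected to a point of a $K$-void or of an open $\Dim$-simplex not in $K_i$ -- this is what makes your component-to-void assignment surjective, and it follows from the same perturbation-off-the-$(\Dim-1)$-skeleton argument you already invoke.
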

\begin{proof}
In the diagram, 
$\thG$ is induced by a map $\tilde{\thG}$ which 
maps a vertex of $G_i$ to a point in the corresponding $K$-void or $\Dim$-simplex
contained in $\Real^\Dim\setminus|K_i|$,
so the commutativity of the diagram can be easily verified.
To see that $\thG$ is an isomorphism, we observe the following facts, 
which can be verified by induction:
\begin{itemize}
    \item Any point in $\Real^\Dim\setminus|K_i|$ is path connected to a point
    in a $K$-void or a $\Dim$-simplex contained in $\Real^\Dim\setminus|K_i|$.
    \item For any two vertices $v_1,v_2$ of $G_i$, $v_1$ is path connected to $v_2$
    in $G_i$ if and only if $\tilde{\thG}(v_1)$ is path connected to 
    $\tilde{\thG}(v_2)$ in $\Real^\Dim\setminus|K_i|$.
    \qedhere
\end{itemize}
\end{proof}

Proposition~\ref{prop:reduced-barcode} 
indicates a way to compute $\Pers(\Hmr_0(\dfilt))$:

\begin{proposition}\label{prop:reduced-barcode}
$\Pers(\Hmr_0(\dfilt))=\Pers(\Hm_0(\dfilt))\setminus\Set{[0,m]}$.
\end{proposition}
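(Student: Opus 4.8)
The plan is to realize $\Hm_0(\dfilt)$ as an internal direct sum of the submodule $\Hmr_0(\dfilt)$ and a rank-one \emph{constant} submodule whose barcode is exactly $\Set{[0,\filtcnt]}$; the stated identity of multisets then follows from the uniqueness of interval decompositions. To set this up, recall that for a nonempty space $X$ the augmentation $\epsilon_X:\Hm_0(X)\to\Zbb_2$ sending every vertex class to $1$ is surjective with kernel $\Hmr_0(X)$, and is natural with respect to inclusions of nonempty spaces. Every $G_i$ in $\dfilt$ is nonempty, since $\Real^\Dim\setminus|K_i|$ always has an unbounded component and hence $G_i$ contains at least one void vertex; so the maps $\epsilon_{G_i}$ assemble into a morphism of zigzag modules $\epsilon:\Hm_0(\dfilt)\to\underline{\Zbb_2}$, where $\underline{\Zbb_2}$ denotes the constant module $\Zbb_2\leftrightarrowsp{\id}\Zbb_2\leftrightarrowsp{\id}\cdots\leftrightarrowsp{\id}\Zbb_2$ on the index set $\Set{0,\dots,\filtcnt}$, whose levelwise kernels $\ker\epsilon_{G_i}$ with the restricted maps are precisely the submodule $\Hmr_0(\dfilt)$.

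Next I would produce a complementary rank-one submodule. Let $v^*$ be the vertex of the dual graph $G$ dual to the unbounded $K$-void; by the definition of the dual filtration, $v^*$ lies in $G_i$ for every $i$. Let $\sigma_i\in\Hm_0(G_i)$ be the class of the connected component of $G_i$ containing $v^*$. Since $v^*$ belongs to both ends of every arrow $G_i\leftrightarrow G_{i+1}$, the component of $v^*$ in one graph maps to the component of $v^*$ in the other under the inclusion, so $\sigma_i$ and $\sigma_{i+1}$ correspond under the structure maps of $\Hm_0(\dfilt)$; hence $\Set{\langle\sigma_i\rangle}_{i}$ is a submodule, and it is isomorphic to $\underline{\Zbb_2}$ because each structure map carries $\sigma_i$ to $\sigma_{i+1}$ (or vice versa) and is thus an isomorphism of these one-dimensional spaces. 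As $\epsilon_{G_i}(\sigma_i)=1\ne 0$, we get $\Hm_0(G_i)=\ker\epsilon_{G_i}\oplus\langle\sigma_i\rangle$ at every level, and therefore $\Hm_0(\dfilt)=\Hmr_0(\dfilt)\oplus\langle\sigma_\bullet\rangle\cong\Hmr_0(\dfilt)\oplus\underline{\Zbb_2}$ as zigzag modules.

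To conclude, $\Pers(\underline{\Zbb_2})=\Set{[0,\filtcnt]}$, so appending the summand $\underline{\Zbb_2}$ to any interval decomposition of $\Hmr_0(\dfilt)$ gives an interval decomposition of $\Hm_0(\dfilt)$; by uniqueness of interval decompositions this yields $\Pers(\Hm_0(\dfilt))=\Pers(\Hmr_0(\dfilt))\sqcup\Set{[0,\filtcnt]}$ as multisets, which rearranges to the claimed equality. I expect the only delicate points to be the bookkeeping of the levelwise splitting --- verifying that $v^*$ survives through the entire dual filtration and that the classes $\sigma_i$ are compatible with both forward and backward arrows --- together with the minor technical caveat, already flagged in the text, that $\dfilt$ is not literally a zigzag filtration and should either be turned into one or have its barcode read directly off the module $\Hm_0(\dfilt)$.
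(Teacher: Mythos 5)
Your proof is correct, but it takes a genuinely different route from the paper's. You split $\Hm_0(\dfilt)$ internally as $\Hmr_0(\dfilt)\oplus\langle\sigma_\bullet\rangle$, where $\sigma_i$ is the class of the component of the vertex dual to the unbounded $K$-void (present in every $G_i$), identify the rank-one summand with the constant module whose barcode is $\Set{[0,\filtcnt]}$, and conclude by uniqueness of interval decompositions. The paper instead stays inside the toolkit of Section~3: it first argues that $[0,\filtcnt]\in\Pers(\Hm_0(\dfilt))$ by simulating Algorithm~\ref{alg:0-zigzag} on $\dfilt$ (the barcode forest always retains a tree with a level-$0$ root whose leaves are the components containing void vertices), then notes $\pinds(\Hmr_0(\dfilt))=\pinds(\Hm_0(\dfilt))\setminus\Set{0}$ and $\ninds(\Hmr_0(\dfilt))=\ninds(\Hm_0(\dfilt))\setminus\Set{\filtcnt}$, and finally verifies via Proposition~\ref{prop:pn-paring-w-rep} that every remaining interval admits representatives lying in reduced homology, using a parity argument on the number of components and correcting odd representatives by adding the representative of $[0,\filtcnt]$ --- that correction is exactly the representative-level shadow of your direct-sum splitting. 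Your argument buys brevity and conceptual clarity (no algorithm simulation, no parity bookkeeping), at the cost of invoking uniqueness of interval decompositions, a standard fact the paper never sets up explicitly and for which its Proposition~\ref{prop:pn-paring-w-rep} serves as the substitute; the paper's route additionally records the implementation-relevant fact that Algorithm~\ref{alg:0-zigzag} run on $\dfilt$ literally outputs $[0,\filtcnt]$, which can then simply be discarded. Two cosmetic remarks: nonemptiness of each $G_i$ and the membership of $v^*$ in every $G_i$ follow directly from condition ({\sf i}) in the definition of $\dfilt$ (all void vertices are present at every index), so the appeal to unboundedness is unnecessary; and since $G_0=G$ is connected (as $\Hmr_0(G_0)\cong\Hmr_0(\Real^\Dim)=0$), no interval of the reduced module starts at $0$, so the multiset subtraction in the statement is unambiguous.
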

\begin{proof}
We first note that $\Hm_0(\dfilt)$ is not an elementary module
because it does not start with the trivial vector space,
whereas most of our definitions and conclusions assume elementary ones.
However, we can add a trivial vector space at the beginning of $\Hm_0(\dfilt)$
to make it elementary so that
those constructions remain valid.
We then show that $[0,\filtcnt]\in\Pers(\Hm_0(\dfilt))$ by simulating
Algorithm~\ref{alg:0-zigzag} on $\dfilt$.
However, $\dfilt$ does not start with an empty complex
as assumed by Algorithm~\ref{alg:0-zigzag}.
Hence, the lowest node of $\bcgraph(\dfilt)$ is at level $0$
instead of $1$; the latter is the case for a filtration
starting with an empty complex.
With this minor adjustment, Algorithm~\ref{alg:0-zigzag}
can still be applied on $\bcgraph(\dfilt)$.
Let $\bcforest_i$ be the barcode forest in iteration $i$ of
Algorithm~\ref{alg:0-zigzag} when taking $\dfilt$ as input.
Then, it can be proved by induction that for every $i$, 
$\bcforest_i$ contains
a tree $\UG_i$ with a level-0 root and all the connected components 
of $G_i$, which contain vertices dual to the $K$-voids,
correspond to leaves in $\UG_i$.
Therefore, Algorithm~\ref{alg:0-zigzag} must produce an interval $[0,\filtcnt]$
at the end.

Note that $\pinds(\Hmr_0({\Ecal}))=\pinds(\Hm_0({\Ecal}))\setminus\Set{0}$
and $\ninds(\Hmr_0(\dfilt))=\ninds(\Hm_0(\dfilt))\setminus\Set{\filtcnt}$,
so $\Pers(\Hm_0(\dfilt))\setminus\Set{[0,\filtcnt]}$ induces
a bijection from $\pinds(\Hmr_0(\dfilt))$
to $\ninds(\Hmr_0(\dfilt))$.
By Proposition~\ref{prop:pn-paring-w-rep}, we only need to show that
every interval in $\Pers(\Hm_0(\dfilt))\setminus\Set{[0,m]}$
admits representatives in the module $\Pers(\Hmr_0(\dfilt))$.
Let $\Set{\aG_i\given i\in[s,t]}$ be an arbitrary set of representatives for 
an interval $[s,t]\subseteq[0,\filtcnt]$
in the module $\Pers(\Hm_0(\dfilt))$; we note the following facts:
\begin{itemize}
    \item The parity of the number of connected components in
each $\aG_i$ for $i\in[s,t]$ is the same.
     \item A 0-th homology class is in the reduced homology group
if it consists of an even number of connected components.
\end{itemize}
For an interval $[b,d]$ in $\Pers(\Hm_0(\dfilt))\setminus\Set{[0,m]}$,
let $\Set{\bG_i\given i\in[b,d]}$ be a set of representatives for $[b,d]$
in the module $\Pers(\Hm_0(\dfilt))$.
If $\psi^0_{b-1}$ is backward, where $\psi^0_{b-1}$ is the linear map 
in $\Hm_0(\dfilt)$ connecting $\Hm_0(G_{b-1})$ and $\Hm_0(G_{b})$,
then $\psi^0_{b-1}(\bG_b)=0$ by definition.
It follows that $\bG_b$ consists of 
an even number of connected components.
Therefore, 
$\Set{\bG_i\given i\in[b,d]}$ is also a set of representatives for $[b,d]$
in the module $\Pers(\Hmr_0(\dfilt))$ by the noted facts.
Similar situations happen when $d<\filtcnt$ and $\psi^0_{d}$ is forward.
Now suppose that $\psi^0_{b-1}$ is forward 
and either $d=\filtcnt$ or $\psi^0_{d}$ is backward.
If $\bG_b$ consists of an even number of connected components,
then $\Set{\bG_i\given i\in[b,d]}$ is also a set of representatives for $[b,d]$
in the module $\Pers(\Hmr_0(\dfilt))$.
If $\bG_b$ consists of an odd number of connected components,
let $\Set{\aG_i\given i\in[0,\filtcnt]}$ be
a set of representatives for the interval $[0,\filtcnt]$
in the module $\Pers(\Hm_0(\dfilt))$.
Then,
$\Set{\bG_i+\aG_i\given i\in[b,d]}$ is a set of representatives for $[b,d]$
in the module $\Pers(\Hmr_0(\dfilt))$
because $\aG_0$ consists of a single connected component.
\end{proof}


The above two propositions
suggest a naive algorithm for computing 
$\Pers(\Hm_{\Dim-1}(\Fcal))$
using Algorithm~\ref{alg:0-zigzag}. 
However, 
building the dual graph $G$ 
requires reconstructing the void boundaries of $K$,
which is done by
a ``walking'' algorithm obtaining a set of $(\Dim-1)$-cycles
and then by a nesting test of these $(\Dim-1)$-cycles~\cite[Section 4.1]{dey2020computing}.
The running time of this process is $\OG(n^2)$ where $n$
is the size of $K$. 
To achieve the claimed complexity,
we first define the following~\cite{dey2020computing}:

\begin{definition}\label{dfn:q-conn}
In a simplicial complex $X$,
two $\diml$-simplices $\sG,\sG'$
are \textbf{$\diml$-connected}
if there is a sequence $\sG_0,\ldots,\sG_\ell$ of $\diml$-simplices of $X$
such that $\sG_0=\sG$, $\sG_\ell=\sG'$, 
and every $\sG_i,\sG_{i+1}$ share a $(\diml-1)$-face.
A maximal set of $\diml$-connected $\diml$-simplices of $X$ 
is called a \textbf{$\diml$-connected component} of $X$,
and $X$ is \textbf{$\diml$-connected} 
if it has only one $\diml$-connected component.
\end{definition}

Based on the fact that void boundaries
can be reconstructed without the nesting test 
for $(\Dim-1)$-connected complexes in $\Real^\Dim$~\cite{dey2020computing},
we restrict $\Fcal$ to several $(\Dim-1)$-connected subcomplexes of $K$
and then take the union of
the $(\Dim-1)$-th barcodes of these restricted filtrations:

\begin{algr}[Algorithm for $(\Dim-1)$-th zigzag persistence on $\Real^\Dim$-embedded complexes]
\label{alg:codim-1-zigzag}
\hspace{3em}
\begin{enumerate}
    \item Compute the $(\Dim-1)$-connected components $\comp{1},\ldots,\comp{\compcnt}$
    of $K$.
    \item For each $\ell=1,\ldots,\compcnt$, let 
    \[\compcmplx{\ell}=\cl{\comp{\ell}}\union\Set{\tG\in K\given 
    \tG\text{ is a }\Dim\text{-simplex whose }(\Dim-1)\text{-faces are in }\comp{\ell}}\]
    and let $\rfilt{\ell}$ be a filtration of $\compcmplx{\ell}$ defined as
    \[\rfilt{\ell}:K_0\intersect\compcmplx{\ell} \leftrightarrow 
    K_1\intersect\compcmplx{\ell} \leftrightarrow \cdots \leftrightarrow K_\filtcnt\intersect\compcmplx{\ell}\]
    Then, compute $\Pers(\Hm_{\Dim-1}(\rfilt{\ell}))$ for each $\ell$.
    \item Return $\bigunion_{\ell=1}^\compcnt\Pers(\Hm_{\Dim-1}(\rfilt{\ell}))$
    as the $(\Dim-1)$-th barcode of $\Fcal$.
\end{enumerate}
\end{algr}

In Algorithm~\ref{alg:codim-1-zigzag},
$\cl{\comp{\ell}}$ denotes the closure of $\comp{\ell}$, i.e., 
the complex consisting of all faces of simplices in $\comp{\ell}$.
For each $\ell$,
let $n_\ell$ be the number of simplices in $\compcmplx{\ell}$;
then, the dual graph of $\compcmplx{\ell}$ 
can be constructed in $O(n_\ell\log n_\ell)$
time because $\compcmplx{\ell}$ is $(\Dim-1)$-connected~\cite{dey2020computing}.
Using Algorithm~\ref{alg:0-zigzag}
to compute $\Pers(\Hm_{\Dim-1}(\rfilt{\ell}))$
as suggested by the duality,
the running time of Algorithm~\ref{alg:codim-1-zigzag}
is $O(m\log^2 n+m\log m+n\log n)$,
where $m$ is the length of $\Fcal$ and $n$ is the size of $K$.

The correctness of Algorithm~\ref{alg:codim-1-zigzag} can be seen from the following proposition:

\begin{proposition}\label{prop:codim-1-mod-sum}
The modules $\Hm_{\Dim-1}(\Fcal)$ and $\bigoplus_{\ell=1}^\compcnt\Hm_{\Dim-1}(\rfilt{\ell})$ 
are isomorphic
which implies that 
\[\Pers(\Hm_{\Dim-1}(\Fcal))=\bigunion_{\ell=1}^\compcnt\Pers(\Hm_{\Dim-1}(\rfilt{\ell}))\]
\end{proposition}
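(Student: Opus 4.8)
The plan is to upgrade the stated isomorphism of modules to something more hands‑on: I will exhibit the isomorphism already at the chain level, component by component, and then check that this chain‑level identification is natural with respect to every arrow in the filtration. Once $\Hm_{\Dim-1}(\Fcal)\cong\bigoplus_{\ell=1}^{\compcnt}\Hm_{\Dim-1}(\rfilt{\ell})$ is established as zigzag modules, the equality $\Pers(\Hm_{\Dim-1}(\Fcal))=\bigunion_{\ell=1}^{\compcnt}\Pers(\Hm_{\Dim-1}(\rfilt{\ell}))$ is immediate from the interval decomposition recalled in the Preliminaries: isomorphic modules have the same (multiset of) persistence intervals, and the barcode of a direct sum is the union of the barcodes of the summands.

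The combinatorial heart, which I would prove first, is the following. Since $\Dim\geq 2$, the $(\Dim-1)$-connectedness relation on $K$ is generated by sharing a $(\Dim-2)$-face; hence if a $(\Dim-2)$-simplex $\tG$ of $K$ is a face of two $(\Dim-1)$-simplices $\sG,\sG'$, then $\sG,\sG'$ lie in the same $(\Dim-1)$-connected component $\comp{\ell}$. From this I deduce: (i) every $(\Dim-1)$-simplex of $K$ lies in exactly one of $\compcmplx{1},\ldots,\compcmplx{\compcnt}$ (it lies in a unique $\comp{\ell}$, and a $(\Dim-1)$-simplex belongs to $\compcmplx{\ell'}$ iff it belongs to $\comp{\ell'}$); (ii) every $(\Dim-2)$-simplex of $K$ that is a face of some $(\Dim-1)$-simplex lies in $\cl{\comp{\ell}}$ for exactly one $\ell$; and (iii) every $\Dim$-simplex $\tG$ of $K$ lies in exactly one $\compcmplx{\ell}$ — any two facets of $\tG$ share a $(\Dim-2)$-face, so all $\Dim+1$ facets of $\tG$ lie in a common $\comp{\ell}$, which is then the unique $\ell$ with $\tG\in\compcmplx{\ell}$.

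Using (i)–(iii) I would split the chain complex in the only three degrees relevant to $\Hm_{\Dim-1}$. For each $i$, since $K_i$ is a subcomplex of $K$, grouping basis simplices by their component gives $\Chn_{\Dim-1}(K_i)=\bigoplus_{\ell}\Chn_{\Dim-1}(K_i\intersect\compcmplx{\ell})$ and $\Chn_{\Dim}(K_i)=\bigoplus_{\ell}\Chn_{\Dim}(K_i\intersect\compcmplx{\ell})$, with $\partial_{\Dim}$ carrying the $\ell$-th summand of $\Chn_{\Dim}(K_i)$ into the $\ell$-th summand of $\Chn_{\Dim-1}(K_i)$; and by (ii) the subspaces $\partial_{\Dim-1}\bigl(\Chn_{\Dim-1}(K_i\intersect\compcmplx{\ell})\bigr)\subseteq\Chn_{\Dim-2}(K_i)$ are supported on pairwise disjoint sets of $(\Dim-2)$-simplices, hence linearly independent. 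This forces $\Zyc_{\Dim-1}(K_i)=\bigoplus_{\ell}\Zyc_{\Dim-1}(K_i\intersect\compcmplx{\ell})$ and $\Bnd_{\Dim-1}(K_i)=\bigoplus_{\ell}\Bnd_{\Dim-1}(K_i\intersect\compcmplx{\ell})$, so taking quotients yields a canonical isomorphism $\theta_i\colon\Hm_{\Dim-1}(K_i)\xrightarrow{\ \cong\ }\bigoplus_{\ell}\Hm_{\Dim-1}(K_i\intersect\compcmplx{\ell})$. For naturality: each arrow $K_i\incto K_{i+1}$ (or $K_i\bakincto K_{i+1}$) of $\Fcal$, differing by a single simplex $\sG$, restricts for each $\ell$ to the corresponding arrow $K_i\intersect\compcmplx{\ell}\leftrightarrow K_{i+1}\intersect\compcmplx{\ell}$ of $\rfilt{\ell}$ (the identity if $\sG\notin\compcmplx{\ell}$, otherwise differing by $\sG$ alone), and since the chain‑level decompositions above are simply a partition of basis simplices by component, the inclusion‑induced maps on chains — hence on $\Hm_{\Dim-1}$ — commute with them. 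Thus $(\theta_i)_i$ is an isomorphism of zigzag modules $\Hm_{\Dim-1}(\Fcal)\cong\bigoplus_{\ell=1}^{\compcnt}\Hm_{\Dim-1}(\rfilt{\ell})$, and passing to $\Pers$ gives the claim. I expect the main obstacle to be precisely item (ii): distinct components may still share faces of dimension $\le\Dim-3$, so the chain complex splits over the components only in degrees $\Dim-2,\Dim-1,\Dim$, and one must argue that this is exactly the window needed to compute $\Hm_{\Dim-1}$.
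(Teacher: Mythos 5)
Your argument is correct, and it reaches the same module isomorphism as the paper, but by a somewhat different technical route. The paper proves a standalone lemma (Proposition~\ref{prop:dirsum-homo}): for two subcomplexes $X_1,X_2$ covering the $(\Dim-1)$- and $\Dim$-simplices of $X$ with no $(\Dim-1)$-connection between them, the map $\big([z_1],[z_2]\big)\mapsto[z_1+z_2]$ is an isomorphism on $\Hm_{\Dim-1}$; surjectivity is shown by intersecting a cycle with $X_1,X_2$ and a coface-parity count proving each piece is a cycle, injectivity by intersecting a bounding $\Dim$-chain and proving each piece bounds the corresponding piece; the lemma is then extended to $\compcnt$ subcomplexes (Remark~\ref{rmk:ext2gen}) and applied, for each arrow, to the square with $X'=K_i$ inside $X=K_{i+1}$. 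You instead work one level lower: from the partition facts --- every $(\Dim-1)$- and every $\Dim$-simplex of $K$ lies in exactly one $\compcmplx{\ell}$, and every $(\Dim-2)$-simplex having a $(\Dim-1)$-coface lies in exactly one $\cl{\comp{\ell}}$, all of which are correct consequences of Definition~\ref{dfn:q-conn} and the construction of $\compcmplx{\ell}$ --- you obtain a direct-sum splitting of $\Chn_{\Dim-1}$ and $\Chn_{\Dim}$, component-preservation of $\partial_\Dim$, and disjointness of the $(\Dim-2)$-supports of the images of $\partial_{\Dim-1}$, which forces $\Zyc_{\Dim-1}$ and $\Bnd_{\Dim-1}$ to split and hence $\Hm_{\Dim-1}(K_i)\cong\bigoplus_\ell\Hm_{\Dim-1}(K_i\intsec\compcmplx{\ell})$, naturally in $i$ because the splitting is just a partition of basis simplices respected by all inclusions. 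The two proofs rest on the same combinatorial core (the paper's coface-parity computations do exactly the work of your disjoint-support observation), but your chain-level organization treats all $\compcnt$ components at once, avoids the two-subcomplexes-then-generalize step, and makes naturality immediate, whereas the paper's formulation yields a reusable lemma under weaker hypotheses (covering plus no cross $(\Dim-1)$-connection, rather than an exact partition). Your closing worry about degree $\Dim-2$ is already resolved by your own argument: no splitting of $\Chn_{\Dim-2}$ is needed, only that boundaries of $(\Dim-1)$-chains from distinct components have disjoint $(\Dim-2)$-supports, which your item (ii) supplies.
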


To prove Proposition~\ref{prop:codim-1-mod-sum},
we define the following notations:
for a simplicial complex $X$, $\simpset_{\diml}(X)$ denotes the set of $\diml$-simplices
of $X$; also, for a $\diml$-chain $A$ of $X$ 
and a $k$-simplex $\sG$ of $X$ with $k<\diml$, 
$\cof_{\diml}(A,\sG)$ denotes the set of $\diml$-cofaces of $\sG$ belonging to $A$.

We first prove the following proposition:

\begin{proposition}\label{prop:dirsum-homo}
Let $\diml\geq2$ and 
$X$ be a simplicial complex with subcomplexes $X_1$, $X_2$, and $X'$.
If $\simpset_{\diml-1}(X_1)\union\simpset_{\diml-1}(X_2)=\simpset_{\diml-1}(X)$,
$\simpset_{\diml}(X_1)\union\simpset_{\diml}(X_2)=\simpset_{\diml}(X)$,
and no $(\diml-1)$-simplex of $X_1$ is $(\diml-1)$-connected to a 
$(\diml-1)$-simplex of $X_2$, 
then the following diagram commutes:
\[
{\xymatrix{
\Hm_{\diml-1}(X_1\intsec X')\oplus\Hm_{\diml-1}(X_2\intsec X') 
\ar[d]_(.44){\thG_1}^(.43){\rotatebox{90}{$\approx$}} \ar@{<->}[r]^(.585){\iG_*} 
  & \Hm_{\diml-1}(X_1)\oplus\Hm_{\diml-1}(X_2) 
  \ar[d]_(.44){\thG_2}^(.43){\rotatebox{90}{$\approx$}} \\
\Hm_{\diml-1}(X') \ar@{<->}[r]^(.585){\iG_*} 
  & \Hm_{\diml-1}(X) \\
}}
\]
where $\thG_1$ and $\thG_2$ are isomorphisms, the upper $\iG_*$ is the direct sum
of the linear maps induced by inclusions, and the lower $\iG_*$ is induced by 
the inclusion.
\end{proposition}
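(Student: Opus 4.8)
The plan is to show that, over $\Zbb_2$, the chain groups $\Chn_{\diml-1}$ and $\Chn_{\diml}$ of $X$ split as internal direct sums indexed by $X_1$ and $X_2$, that $\partial_\diml$ respects this splitting, and --- crucially, using the $(\diml-1)$-connectedness hypothesis --- that the cycle group $\Zyc_{\diml-1}(X)$ splits the same way. Then $\thG_2\colon([z_1],[z_2])\mapsto[z_1+z_2]$ is precisely the direct sum of the inclusion-induced maps and is an isomorphism; $\thG_1$ follows by applying the identical argument to $X'$ in place of $X$; and the square commutes by functoriality since every arrow in it is induced by an inclusion of subcomplexes.

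First I would record that the hypotheses force $\simpset_{\diml-1}(X_1)\intsec\simpset_{\diml-1}(X_2)=\emptyset$: a $(\diml-1)$-simplex lying in both would be $(\diml-1)$-connected to itself (via the trivial length-$0$ sequence), so it would be a $(\diml-1)$-simplex of $X_1$ that is $(\diml-1)$-connected to a $(\diml-1)$-simplex of $X_2$, contradicting the assumption. Likewise $\simpset_{\diml}(X_1)\intsec\simpset_{\diml}(X_2)=\emptyset$, since $\diml\ge2$ means every $\diml$-simplex has a $(\diml-1)$-face, and a $\diml$-simplex common to $X_1$ and $X_2$ would contribute such a face to $\simpset_{\diml-1}(X_1)\intsec\simpset_{\diml-1}(X_2)$. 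Combined with the two covering hypotheses, this yields $\Chn_{\diml-1}(X)=\Chn_{\diml-1}(X_1)\oplus\Chn_{\diml-1}(X_2)$ and $\Chn_{\diml}(X)=\Chn_{\diml}(X_1)\oplus\Chn_{\diml}(X_2)$ as internal direct sums. Because each $X_i$ is a subcomplex, $\partial_\diml$ carries $\Chn_\diml(X_i)$ into $\Chn_{\diml-1}(X_i)$, so $\partial_\diml$ is block diagonal for these decompositions and hence $\Bnd_{\diml-1}(X)=\Bnd_{\diml-1}(X_1)\oplus\Bnd_{\diml-1}(X_2)$.

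The main step is the claim $\Zyc_{\diml-1}(X)=\Zyc_{\diml-1}(X_1)\oplus\Zyc_{\diml-1}(X_2)$; the containment $\supseteq$ is clear, so let $z=z_1+z_2\in\Zyc_{\diml-1}(X)$ with $z_i\in\Chn_{\diml-1}(X_i)$. Then $0=\partial z=\partial z_1+\partial z_2$, so over $\Zbb_2$ we have $\partial z_1=\partial z_2=:w$. Suppose $w\ne0$ and pick a $(\diml-2)$-simplex $\tG\in\supp(w)$. Then $\cof_{\diml-1}(z_1,\tG)$ is nonempty and consists of $(\diml-1)$-simplices of $X_1$, while $\cof_{\diml-1}(z_2,\tG)$ is nonempty and consists of $(\diml-1)$-simplices of $X_2$; choosing one simplex from each produces a $(\diml-1)$-simplex of $X_1$ and a $(\diml-1)$-simplex of $X_2$ that share the $(\diml-2)$-face $\tG$, hence are $(\diml-1)$-connected --- contradicting the hypothesis. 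So $w=0$, i.e.\ $z_1\in\Zyc_{\diml-1}(X_1)$ and $z_2\in\Zyc_{\diml-1}(X_2)$. Passing to quotients together with the boundary decomposition above, $\thG_2$ is a well-defined isomorphism $\Hm_{\diml-1}(X_1)\oplus\Hm_{\diml-1}(X_2)\to\Hm_{\diml-1}(X)$, and it is visibly the direct sum of the maps induced by the inclusions $X_i\incto X$.

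Finally I would observe that the triple $X'$, $X_1\intsec X'$, $X_2\intsec X'$ satisfies the same three hypotheses: intersecting with $X'$ preserves the two covering conditions (as $X'\subseteq X$), and a $(\diml-1)$-connecting sequence of simplices of $X'$ is in particular one of $X$, so no $(\diml-1)$-simplex of $X_1\intsec X'$ is $(\diml-1)$-connected in $X'$ to a $(\diml-1)$-simplex of $X_2\intsec X'$. Hence the same argument gives the isomorphism $\thG_1$. Commutativity of the square is then immediate at the chain level: every horizontal arrow sends a chain (or pair of chains) to its image under the relevant inclusion and each $\thG_i$ is the summation map $([z_1],[z_2])\mapsto z_1+z_2$, so both composites carry $([z_1],[z_2])$ to $[z_1+z_2]\in\Hm_{\diml-1}(X)$. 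I expect the cycle-splitting claim to be the only real obstacle --- it is the one place the $(\diml-1)$-connectedness hypothesis genuinely enters --- while the remaining steps are routine homological bookkeeping.
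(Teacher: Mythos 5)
Your proof is correct, and it reaches the conclusion by a somewhat different route than the paper. The paper works directly at the homology level: it checks surjectivity of $\thG_2$ by splitting a cycle $z$ of $X$ as $(z\intsec X_1)+(z\intsec X_2)$ and showing each piece is a cycle via a coface-parity argument at a $(\diml-2)$-face, and it checks injectivity by a second, separate coface argument showing that if $z_1+z_2=\partial(A)$ then $z_1=\partial(A\intsec X_1)$ and $z_2=\partial(A\intsec X_2)$. You instead make explicit that the non-connectedness hypothesis forces $\simpset_{\diml-1}(X_1)\intsec\simpset_{\diml-1}(X_2)=\emptyset$ and $\simpset_{\diml}(X_1)\intsec\simpset_{\diml}(X_2)=\emptyset$ (the paper uses the first of these only implicitly, e.g.\ when concluding $\sG\not\in X_2$ from $\sG\in X_1$), so that $\Chn_{\diml-1}(X)$ and $\Chn_{\diml}(X)$ split as internal direct sums and $\partial_\diml$ is block diagonal; your single combinatorial step is the cycle-splitting lemma, which is essentially the same parity-at-a-shared-$(\diml-2)$-face argument as the paper's surjectivity step, and injectivity of $\thG_2$ then follows for free from the quotient algebra $\Zyc/\Bnd$ applied summand-wise, eliminating the paper's second coface argument. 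What your organization buys is brevity and a clear isolation of where the hypothesis enters; the paper's element-wise verification is more hands-on but does the coface bookkeeping twice. Your handling of $\thG_1$ (same hypotheses hold for $X'$, $X_1\intsec X'$, $X_2\intsec X'$) and of commutativity (all maps are inclusion-induced at chain level, both composites send $([z_1],[z_2])$ to $[z_1+z_2]$) matches the paper's, which treats these points as routine.
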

\begin{remark}\label{rmk:ext2gen}
The above proposition can be easily extended to the general case with 
a finite number of subcomplexes $X_1,X_2,\ldots,X_k$ of $X$ such that 
$\bigunion_{\ell=1}^k\simpset_{\diml-1}(X_\ell)=\simpset_{\diml-1}(X)$,
$\bigunion_{\ell=1}^k\simpset_{\diml}(X_\ell)=\simpset_{\diml}(X)$,
and $(\diml-1)$-simplices of each pair of $X_1,\ldots,X_k$ are not $(\diml-1)$-connected.
\end{remark}
\begin{proof}
In the diagram, $\thG_2$ is defined as follows ($\thG_1$ is similarly defined): 
for each $\big([z_1],[z_2]\big)\in\Hm_{\diml-1}(X_1)\oplus\Hm_{\diml-1}(X_2)$,
$\thG_2\big([z_1],[z_2]\big)=[z_1+z_2]$. 
It is not hard to see that
the diagram is commutative and $\thG_1$, $\thG_2$ are linear maps.
Therefore, we only need to show that $\thG_1$, $\thG_2$ are isomorphisms.
We only do this for $\thG_2$ because the verification for $\thG_1$ is similar.

For the surjectivity of $\thG_2$, let $[z]\in\Hm_{\diml-1}(X)$ be arbitrary and
let $z_1=z\intsec X_1$, $z_2=z\intsec X_2$.
We claim that $\partial(z_1)=\partial(z_2)=0$, which implies that
$\thG_2\big([z_1],[z_2]\big)=[z]$ and hence the surjectivity.
We only show that $\partial(z_1)=0$ because for $\partial(z_2)$
it is similar. For contradiction, suppose instead that there is a $(\diml-2)$-simplex
$\sG\in\partial(z_1)$. Then, $|\cof_{\diml-1}(z_1,\sG)|$ is an odd number.
Note that $|\cof_{\diml-1}(z,\sG)|$ is an even number 
and $\cof_{\diml-1}(z_1,\sG)\subseteq\cof_{\diml-1}(z,\sG)$,
so $\cof_{\diml-1}(z,\sG)\setminus\cof_{\diml-1}(z_1,\sG)$ is not empty.
Let $\tG\in\cof_{\diml-1}(z,\sG)\setminus\cof_{\diml-1}(z_1,\sG)$;
then $\tG\not\in X_1$ because if $\tG\in X_1$,
$\tG$ must belong to $z_1$ and hence belongs to $\cof_{\diml-1}(z_1,\sG)$. 
So $\tG\in X_2$.
Note that $\tG$ must be $(\diml-1)$-connected to a $\tG'\in\cof_{\diml-1}(z_1,\sG)$
which is in $X_1$ because they share a common $(\diml-2)$-face $\sG$.
But this contradicts the fact that 
no $(\diml-1)$-simplex of $X_1$ is $(\diml-1)$-connected to a 
$(\diml-1)$-simplex of $X_2$,
and so $\partial(z_1)=0$.

For the injectivity of $\thG_2$, let $\big([z_1],[z_2]\big)$ be
any element of $\Hm_{\diml-1}(X_1)\oplus\Hm_{\diml-1}(X_2)$
such that $\thG_2\big([z_1],[z_2]\big)=[z_1+z_2]=0$. 
Because $z_1+z_2$ is a boundary in $X$, let $A$ be the $\diml$-chain
in $X$ such that $\partial(A)=z_1+z_2$. Moreover, 
let $A_1=A\intsec X_1$ and $A_2=A\intsec X_2$.
We claim that $z_1=\partial(A_1)$ and $z_2=\partial(A_2)$,
which implies that $\big([z_1],[z_2]\big)=(0,0)$ and hence the injectivity.
We only show that $z_1=\partial(A_1)$ because the other is similar:
\begin{description}
    \item[$z_1\subseteq\partial(A_1)$:]
For any $(\diml-1)$-simplex $\sG\in z_1$, because $\sG\in X_1$ 
and no $(\diml-1)$-simplex of $X_1$ is $(\diml-1)$-connected to a 
$(\diml-1)$-simplex of $X_2$, $\sG\not\in X_2$.
So $\sG\not\in z_2$, which means that $\sG\in X_1+X_2=\partial(A)$.
We claim that $\cof_\diml(A,\sG)=\cof_\diml(A_1,\sG)$.
Then,
the fact that $|\cof_\diml(A,\sG)|$ is an odd number implies that
$|\cof_\diml(A_1,\sG)|$ is an odd number, and hence $\sG\in\partial(A_1)$.
To prove that $\cof_\diml(A,\sG)=\cof_\diml(A_1,\sG)$,
first note that $\cof_\diml(A_1,\sG)\subseteq\cof_\diml(A,\sG)$.
We then show that $\cof_\diml(A,\sG)\subseteq\cof_\diml(A_1,\sG)$.
Let $\tG$ be any $\diml$-simplex in $\cof_\diml(A,\sG)$.
We must have that $\tG\in X_1$ because otherwise $\tG\in X_2$
which implies that $\sG\in X_2$, a contradiction.
It follows that $\tG\in A_1=A\intsec X_1$ 
and hence $\tG\in\cof_\diml(A_1,\sG)$.
    \item[$\partial(A_1)\subseteq z_1$:]
    The proof is similar to the previous one and is omitted.
\qedhere
\end{description}
\end{proof}

Now we prove Proposition~\ref{prop:codim-1-mod-sum}.
First note that for each $i$, 
$\bigunion_{\ell=1}^\compcnt\simpset_{\Dim-1}(K_{i+1}\intsec\compcmplx{\ell})
=\simpset_{\Dim-1}(K_{i+1})$
and $\bigunion_{\ell=1}^\compcnt\simpset_{\Dim}(K_{i+1}\intsec\compcmplx{\ell})
=\simpset_{\Dim}(K_{i+1})$.
We also have that $K_i\subseteq K_{i+1}$
and $(K_{i+1}\intsec\compcmplx{\ell})\intersect K_i=K_{i}\intsec\compcmplx{\ell}$
for each $\ell$.
By Proposition~\ref{prop:dirsum-homo} and Remark~\ref{rmk:ext2gen},
we have the following commutative diagram:
\[
{\xymatrix{
\bigoplus_{\ell=1}^\compcnt\Hm_{\Dim-1}(K_i\intsec\compcmplx{\ell}) 
  \ar[d]^(.43){\rotatebox{90}{$\approx$}} \ar@{<->}[r]
  & \bigoplus_{\ell=1}^\compcnt\Hm_{\Dim-1}(K_{i+1}\intsec\compcmplx{\ell}) 
  \ar[d]^(.43){\rotatebox{90}{$\approx$}} 
  \\ 
\Hm_{\Dim-1}(K_i)  \ar@{<->}[r]^(.5){\morph{\Fcal}{i}{\Dim-1}} 
  & \Hm_{\Dim-1}(K_{i+1}) 
  \\
}}
\]
and therefore Proposition~\ref{prop:codim-1-mod-sum} follows.

\section{Conclusions}

In this paper, we propose near-linear algorithms for computing zigzag persistence
on graphs with the help of some dynamic graph data structures.
The algorithm for $0$-dimensional homology relates the computation to 
an algorithm in \cite{agarwal2006extreme} for
pairing critical points of Morse functions on $2$-manifolds, 
thereby giving a correctness proof for the algorithm in~\cite{agarwal2006extreme}.
The algorithm for $1$-dimensional homology reduces the computation to 
finding the earliest positive edge so that 
a $1$-cycle containing both the positive and negative edges 
resides in all intermediate graphs.
With the help of Alexander duality,
we also extend the algorithm for $0$-dimension
to compute the $(p-1)$-dimensional zigzag persistence for $\mathbb{R}^p$-embedded
complexes in near-linear time.

An obvious open question is
whether the time complexities can be improved.
Since the running time of our algorithms is determined in part 
by the data structures we use, 
one may naturally ask whether these data structure 
can be improved.

Furthermore, we note that the following may be helpful to future research efforts:
\begin{enumerate}
    \item In our algorithm for $0$-dimension, we build the barcode graph
    and utilize the algorithm in~\cite{agarwal2006extreme} to compute
    the zigzag barcode, which is also adopted by~\cite{dey2019computing}.
    Is there any other scenario in persistence computing
    where such a technique can be applied so that a more efficient algorithm
    can be derived?
    
    \item In our algorithm for $1$-dimension,
we utilize the dynamic-MSF
data structure~\cite{holm2001poly} for computing the max edge-weight
of the path connecting two vertices in an MSF.
The update of the
data structure which takes $O(\log^4 n)$ amortized time becomes the bottleneck of the 
algorithm.
However,
for the computation,
it can be verified that we only need to know the {\it minimax}\footnote{The {\it minimax} distance
of two vertices in a graph is the minimum of the max edge-weights
of all paths connecting the two vertices.} 
distance of two vertices in a graph.
Is there any faster way to compute the minimax distance in a dynamic graph?

    \item Another interesting question is whether our algorithms are more efficient
    practically when implemented 
    compared to some existing implementation~\cite{maria2014zigzag,Dionysus} 
    of zigzag persistence algorithms for general dimension.
\end{enumerate}

\bibliography{refs}

\end{document}